\newif\ifARXIV \ARXIVtrue
\setlist[itemize]{leftmargin=5mm}
\newcommand{\ket}[1]{\lvert #1\rangle}
\newcommand{\bra}[1]{\langle #1\rvert}
\newcommand{\ketbra}[3][]{\ket{#2}_{#1}\bra{#3}}
\newcommand{\braket}[2]{\left\langle #1\middle\vert #2\right\rangle}
\newcommand{\abs}[1]{\left\lvert #1 \right\rvert}
\newcommand{\norm}[1]{\left\lVert #1 \right\rVert}
\DeclareMathOperator{\tr}{tr}
\DeclareMathOperator{\Span}{span}
\newcommand{\CC}{\mathbb{C}}
\newcommand{\ZZ}{\mathbb{Z}}
\newcommand{\FF}{\mathbb{F}}
\newcommand{\Id}{\mathds{1}}
\newcommand{\cD}{\mathcal{D}}
\newcommand{\cH}{\mathcal{H}}
\newcommand{\cS}{{S}}
\newcommand{\cnot}{\mathit{CNOT}}
\reservestyle{\command}{\texttt}
\newcommand{\assign}[2]{#2 \coloneqq{} #1}
\newcommand{\qut}[2]{#1\ #2}
\newcommand{\qmeasure}[2]{\assign{\<measure>\ #1}{#2}}
\newcommand{\qif}[2]{\<if>\ #1\ \{ #2 \}}
\newcommand{\qqif}[3]{\<if>\ #1\ \{ #2 \}\ \<else>\ \{ #3 \}}
\newcommand{\qprogs}{\mathbf{qProgs}}
\newcommand{\cfg}[3]{\langle #1, #2, #3 \rangle}
\newcommand{\scfg}[5]{\langle #1, #2, #3, #4, #5 \rangle}
\newcommand{\terminate}{\downarrow}
\newcommand{\overto}[2][]{\xlongrightarrow{#2}\!\!{}^{#1}\,}
\newcommand{\ssigma}{\widetilde{\sigma}}
\newcommand{\srho}{\widetilde{\rho}}
\definecolor{qubitcolor}{RGB}{239,173,62}
\definecolor{zcheckcolor}{RGB}{42, 88, 178}
\definecolor{xcheckcolor}{RGB}{214, 82, 60}
\newcommand{\mathtikz}[2][c]{%
\begin{array}[#1]{c}
\begin{tikzpicture}
    #2
\end{tikzpicture}
\end{array}}
\pgfplotsset{
    compat=1.5,
    grid=major,
}
\def\COMPILETIKZ{0}
\newcommand{\tikzinput}[2]{
\ifthenelse{\COMPILETIKZ=1}
{%
{
\tikzexternalenable\tikzsetnextfilename{#1}\input{#2}%
}
}%
{\includegraphics{Tikz/#1.pdf}}%
}
\newcommand{\QuantumSE}{\textsf{QuantumSE.jl}}
\newcommand{\periodafter}[1]{#1.}
\titleformat{\subsubsection}[runin]{\normalfont\normalsize\bfseries\boldmath}{\thesubsubsection}{.5em}{\periodafter}
\titlespacing{\subsubsection}{0pt}{*1}{*1}
\begin{document}

\title[Symbolic Execution for Quantum Error Correction Programs]{Symbolic Execution for Quantum Error Correction Programs \\ (Extended Version)}

\author{Wang Fang}
\orcid{0000-0001-7628-1185}
\affiliation{%
  \institution{Institute of Software at Chinese Academy of Sciences}
  \city{Beijing}
  \country{China}
}
\affiliation{%
  \institution{University of Chinese Academy of Sciences}
  \city{Beijing}
  \country{China}
}
\email{fangw@ios.ac.cn}

\author{Mingsheng Ying}
\orcid{0000-0003-4847-702X}
\affiliation{%
  \institution{Institute of Software at Chinese Academy of Sciences}
  \city{Beijing}
  \country{China}
}
\affiliation{%
  \institution{Tsinghua University}
  \city{Beijing}
  \country{China}
}
\email{yingms@ios.ac.cn}


\begin{abstract}
    We define QSE, a symbolic execution framework for quantum programs by integrating symbolic variables into quantum states and the outcomes of quantum measurements.
The soundness of QSE is established through a theorem that ensures the correctness of symbolic execution within operational semantics.
We further introduce symbolic stabilizer states, which symbolize the phases of stabilizer generators, for the efficient analysis of quantum error correction (QEC) programs.
Within the QSE framework, we can use symbolic expressions to characterize the possible discrete Pauli errors in QEC, providing a significant improvement over existing methods that rely on sampling with simulators.
We implement QSE  with the support of symbolic stabilizer states in a prototype tool named \QuantumSE{}.
Our experiments on representative QEC codes, including quantum repetition codes, Kitaev's toric codes, and quantum Tanner codes, demonstrate the efficiency of \QuantumSE{} for debugging QEC programs with over 1000 qubits.
In addition, by substituting concrete values in symbolic expressions of measurement results, \QuantumSE{} is also equipped with a sampling feature for stabilizer circuits.
Despite a longer initialization time than the state-of-the-art stabilizer simulator, Google's Stim, QuantumSE.jl offers a quicker sampling rate in the experiments.

\end{abstract}


\begin{CCSXML}
<ccs2012>
   <concept>
       <concept_id>10011007.10010940.10010992.10010998.10011000</concept_id>
       <concept_desc>Software and its engineering~Automated static analysis</concept_desc>
       <concept_significance>500</concept_significance>
       </concept>
   <concept>
       <concept_id>10003752.10003790.10003794</concept_id>
       <concept_desc>Theory of computation~Automated reasoning</concept_desc>
       <concept_significance>500</concept_significance>
       </concept>
   <concept>
       <concept_id>10010583.10010786.10010813.10011726.10011728</concept_id>
       <concept_desc>Hardware~Quantum error correction and fault tolerance</concept_desc>
       <concept_significance>300</concept_significance>
       </concept>
   <concept>
       <concept_id>10010520.10010521.10010542.10010550</concept_id>
       <concept_desc>Computer systems organization~Quantum computing</concept_desc>
       <concept_significance>500</concept_significance>
       </concept>
 </ccs2012>
\end{CCSXML}

\ccsdesc[500]{Software and its engineering~Automated static analysis}
\ccsdesc[500]{Theory of computation~Automated reasoning}
\ccsdesc[300]{Hardware~Quantum error correction and fault tolerance}
\ccsdesc[500]{Computer systems organization~Quantum computing}

\keywords{symbolic execution, stabilizer formalism}

\maketitle

\section{Introduction}
\label{sec:intro}


Nowadays, quantum computing hardware is developing rapidly~\cite{arute2019quantum, wu2021strong,madsen2022quantum,bluvstein2024logical}, and the number of its physical qubits is also gradually increasing. For instance,  IBM has introduced the IBM Condor, an advanced quantum processor featuring 1,121 superconducting qubits~\cite{ibm2023roadmap}.
However, these quantum hardware suffer from errors caused by quantum noise and inaccurate quantum gate implementations,
so deploying quantum error correction (QEC) programs on quantum hardware is the key to large-scale quantum computing in the future~\cite{shor1996fault,aharonov1997fault,gottesman2013fault}.
There are already several experimental studies exploring QEC in quantum hardware~\cite{Acharya2022SuppressingQE,ai2021exponential,zhao2022realization,abobeih2022fault,bluvstein2024logical},
among which the recent result of 
\citet{bluvstein2024logical} has successfully encoded 48 logical qubits on neutral atom arrays with up to 280 physical qubits.
On the other hand, as diverse and more complicated QEC protocols are introduced, we will need to conduct prior analysis and verification of them to guarantee their correctness before deployment.
This leads us to the following basic question: 
{
\addtolength\leftmargini{-0.2in}
\begin{quote}
    \emph{How can we check whether a QEC program (e.g., the one depicted in \cref{fig:running_example_decoder}), as a hybrid quantum-classical program, has any bugs, especially the part involving quantum variables; and more importantly, if the QEC program has bugs, how can we automatically find them?}
\end{quote}
}

\begin{figure}[ht]
    \centering
    \scalebox{.9}{\tikzinput{diagram}{figures/diagram.tex}}

    \vspace*{-2mm}
    \caption{A QEC program (Lines \texttt{4}-\texttt{12}) for the three-qubit bit-flip code with code distance $3$. Ensuring the program is bug-free requires that for any initial state $\alpha\ket{000}+\beta\ket{111}$ within the code space, the final state remains $\alpha\ket{000}+\beta\ket{111}$ after error injection (Lines \texttt{1}-\texttt{3}) and execution of the QEC program (Lines \texttt{4}-\texttt{12}). Error injection injects errors that the QEC code can tolerate into the quantum state, and then the QEC program corrects these errors. Here, the possible errors correspond to variables $e_1,e_2,e_3 \in\{0,1\}$ that satisfy $e_1+e_2+e_3 \leq 1$ (up to one $X$ error occurs). We also refer to the errors $X^{e_1}, X^{e_2}, X^{e_3}$ introduced during error injection as "adversarial" because of their potential to disrupt the integrity of information in quantum states.}\label{fig:running_example_decoder}
\end{figure}


Fortunately, current QEC programs only involve stabilizer circuits and thus can be efficiently simulated on classical computers~\cite{gottesman1997stabilizer}. Indeed, several specific stabilizer circuit simulators have been  developed~\cite{aaronson2004improved,anders2006fast,gidney2021stim,krastanov} for analyzing QEC circuits.
For instance, Google's recent QEC experiment~\cite{Acharya2022SuppressingQE} utilized a fast simulator, Stim~\cite{gidney2021stim}, to sample a prior distribution for detectors of the experimental QEC circuit.
For the example program in \cref{fig:running_example_decoder}, the current practice involves enumerating/sampling the potential values of $e_i\in \{0,1\}$ such that $e_1+e_2+e_3\leq 1$, then executing the program with them by using a fast simulator. 
However, for a general QEC code with $n$ qubits and a code distance of $d$, the potential errors during error injection correspond to variables $e_1,e_2,\ldots,e_n\in\{0,1\}$ with the condition $e_1+e_2+\cdots e_n\leq \lfloor\frac{d}{2}\rfloor$.
The count of these errors exceeds $\binom{n}{\lfloor\frac{d}{2}\rfloor}$, which goes beyond polynomial-time complexity.
Despite the simulator's rapid performance\footnote{e.g., the simulator Stim~\cite{gidney2021stim} can simulate a distance 100 surface code circuit (20 thousand qubits, 8 million gates, 1 million measurements) in 15 seconds.},
such number of errors would make this approach unscalable at a large scale.
Additionally, as noted by \citet{gidney2021stim}, \emph{generating samples of QEC circuits is the bottleneck in analyzing QEC programs}.

In classical computing, symbolic execution (SE) was proposed~\cite{king1976symbolic} to handle such a large number of samples or test cases.
It is an automated method for finding bugs in programs and has been successfully applied in many different  fields, including software testing~\cite{cadar2013symbolic,godefroid2005dart,cadar2008klee,poeplau2020symbolic} and security and privacy~\cite{farina2019relational,aizatulin2011extracting}, and extended to non-deterministic programs~\cite{luckow2014exact,siegel2008combining,yu2020symbolic} and 
probabilistic programs~\cite{susag2022symbolic,gehr2016psi}.
Recently, SE has also been introduced into quantum computing in several pioneering papers \cite{carette2023symbolic,bauer2023symqv,10.1145/3519939.3523431}.
But all of them only deal with quantum circuits \textit{without control flows}.
As is well-known, one of the most essential advantages of SE is that it can systematically explore many possible execution paths (determined by control flows) at the same time.
Therefore, the power of SE has not been fully exploited in quantum programming.

In this paper, we significantly extend SE to a quantum programming language with \textit{(classical) control flows}, which is particularly suited to writing QEC programs.
\emph{Our SE framework introduces symbols to represent the possible (probabilistic) outcomes of quantum measurements in a uniform manner}, rather than splitting the measurement outcomes into cases as in previous work~\cite{bauer2023symqv}.
This approach circumvents the potential exponential growth of execution paths that comes with quantum measurements, which 
are a fundamental part of QEC programs.
To make our SE framework efficient in handling QEC programs, we further introduce \emph{symbolic stabilizer states} by a partial symbolization of stabilizer states, the underlying quantum states in stabilizer circuits.
Thanks to the fast simulation algorithm of stabilizer circuits~\cite{aaronson2004improved}, our symbolic stabilizer states can be efficiently manipulated during SE.

More importantly, symbolic stabilizer states allow us to handle conditionally applied Pauli gates (see Lines \texttt{1}-\texttt{3} and \texttt{10}-\texttt{12} of \cref{fig:running_example_decoder}) without forking into multiple execution paths: the conditional guards, e.g., $e_1==1$ and $e_2==1$, are absorbed into the symbolic stabilizer state, yielding only a single execution path and transforming subsequent measurement outcomes into symbolic expressions, e.g., $e_1\oplus e_2$ for variable $m_1$ in Line \texttt{5} of \cref{fig:running_example_decoder}.
Consequently, we can use symbolic expressions, e.g., $e_1+e_2+e_3\leq 1$, 
to characterize the possible errors in error injection and use SMT solver to deduce the correctness, obviating the need for enumerating concrete values that satisfy the condition.

\subsubsection*{Contributions and outline}
After reviewing some background knowledge (\S\ref{sec:background}) and demonstrating our approach on a running example (\S\ref{sec:example}), our major contributions are presented as follows:
\begin{itemize}
    \item We develop \textit{a framework for symbolic execution of quantum programs} (QSE) with classical control flows (\S\ref{sec:qse}) and prove a \textit{soundness theorem} for our QSE framework (\S\ref{sec:soundness}).
    \item We introduce symbolic stabilizer states to enable our QSE framework to efficiently analyze QEC programs and prove an \textit{adequacy theorem} for symbolic stabilizer states in the analysis and verification of QEC programs (\S\ref{sec:symbolic_stabilizer}).
    \item We introduce \emph{symbolic Pauli gates} with a newly designed SE rule that can handle the conditional application of Pauli gates without forking like classical SE and demonstrate how it can be used to characterize the possible adversarial errors for QEC programs (\S\ref{sec:conditional}).
    \item We implement our QSE framework together with the special support of symbolic stabilizer states in a \textit{prototype tool} named \QuantumSE{} and demonstrate its efficiency and the ability to outperform existing tools in experimental evaluation (\S\ref{sec:evaluation}).
\end{itemize}
The paper is concluded by discussions about related work (\S\ref{sec:related_work}) and issues for further research  (\S\ref{sec:conclusion}).

\section{Background}
\label{sec:background}
In this section, we provide a minimal background of quantum computation and QEC.
The reader can consult the book~\cite[Chapter 2,4,10]{nielsen2010quantum} for more details.

\subsection{Quantum Preliminary}
\label{sec:preliminary}
We assume basic knowledge of linear algebra, including the concepts of vector space and tensor product.
For a $d$-dimensional complex vector space $\CC^d$, we use the Dirac notation $\ket{\psi}$ to denote a column vector in it.
The conjugate transpose of $\ket{\psi}$ is then a row vector denoted by $\bra{\psi}$.
The \emph{inner product} of $\ket{\psi}$ and $\ket{\phi}$ is a complex number denoted by $\braket{\phi}{\psi}$.
The \emph{outer product} of $\ket{\psi}$ and $\ket{\phi}$ is a $d\times d$ matrix denoted by $\ketbra{\psi}{\phi} \in \CC^{d\times d}$.
The norm of a vector $\ket{\psi}$ is defined as $\norm{\ket{\psi}} = \sqrt{\braket{\psi}{\psi}}$.
In addition, the tensor product of $\ket{\psi_1}$ and $\ket{\psi_2}$ is denoted by $\ket{\psi_1}\otimes \ket{\psi_2}$, which is sometimes written as $\ket{\psi_1}\ket{\psi_2}$ or even $\ket{\psi_1\psi_2}$ for short.

\subsubsection*{Quantum states}
The state space of a quantum bit (qubit) is the  $2$-dimensional vector space $\CC^2$ with $\ket{0} = \begin{psmallmatrix}1\\0\end{psmallmatrix}, \ket{1} = \begin{psmallmatrix}0\\1\end{psmallmatrix}$ being the \emph{computational basis}.
In general, the state space of $n$-qubit system is the tensor product of $n$ copies of the state space $\CC^2$ of a single qubit, which is $(\CC^2)^{\otimes n} \cong \CC^{2^n}$, with $\{\ket{x}|x\in\{0,1\}^n\}$ being the computational basis. A \emph{pure} quantum state is represented by a unit vector $\ket{\psi}$. Thus,  
an $n$-qubit pure state $\ket{\psi} \in (\CC^{2})^{\otimes n}$ can be expressed as $\sum_{x\in\{0,1\}^n} \alpha_x\ket{x}$, where $\alpha_x \in \CC$ and $\sum_{x\in\{0,1\}^n} \abs{\alpha_x}^2 = 1$.
When the state of an $n$-qubit system is not completely known, one may think of it as a \emph{mixed} state (an ensemble of pure states) $\{(p_j,\ket{\psi_j})\}$ meaning that it is in state $\ket{\psi_j}$ with probability $p_j$. Such a mixed state can also be represented by 
a \emph{density operator}  $\rho = \sum_{j}p_j\ketbra{\psi_j}{\psi_j}$, which is a $2^n\times 2^n$ positive semidefinite complex matrix.
In particular, a pure state $\ket{\psi}$ can be represented by the density operator $\ketbra{\psi}{\psi}$; for simplicity, we write $\psi = \ketbra{\psi}{\psi}$.

\subsubsection*{Unitary transformations}
A $2^n\times 2^n$ complex matrix $U$ is called unitary if $U^{\dagger}U = \Id_{2^n}$, where $U^\dag$ stands for the conjugate transpose of  $U$, and $\Id_{2^n}$ denotes the $2^n\times 2^n$ identity matrix. It models a transformation or an  evolution of an $n$-qubit system from a pure state $\ket{\psi} \in \CC^{2^n}$ to $U\ket{\psi}$.
For mixed states, it transforms a density matrix $\rho \in \CC^{2^n\times 2^n}$ to $U\rho U^{\dagger}$.
Such a unitary transformation $U$ is often called an $n$-qubit gate.  
Common quantum gates include the single-qubit gates $H$ (Hadamard gate), $S$ (Phase  gate), and $I =\Id_2, X,Y,Z$ (Pauli gates) as well as the $2$-qubit gate $\cnot$ (controlled-NOT gate): 
\begin{gather*}
    H=\frac{1}{\sqrt{2}}\begin{pmatrix*}[r]
        1 & 1 \\
        1 & -1
    \end{pmatrix*},\ 
    S = \begin{pmatrix*}[r]
        1 & 0 \\
        0 & i
    \end{pmatrix*},\ 
    X = \begin{pmatrix*}[r]
        0 & 1 \\
        1 & 0
    \end{pmatrix*},\ 
    Y = \begin{pmatrix*}[r]
        0 & -i \\
        i & 0
    \end{pmatrix*},\ 
    Z = \begin{pmatrix*}[r]
        1 & 0 \\
        0 & -1
    \end{pmatrix*},\ 
    \cnot = \begin{psmallmatrix*}[r]
        1 & 0 & 0 & 0 \\
        0 & 1 & 0 & 0 \\
        0 & 0 & 0 & 1 \\
        0 & 0 & 1 & 0 
    \end{psmallmatrix*}
\end{gather*}
The $H$ gate can produce another important basis, called the $\pm$ basis, $\{\ket{+} = \frac{1}{\sqrt{2}}(\ket{0}+\ket{1}), \ket{-} = \frac{1}{\sqrt{2}}(\ket{0}-\ket{1})\}$ of $\CC^2$ from the computational basis as $H\ket{0} = \ket{+}$ and $H\ket{1} = \ket{-}$.
The Phase gate $S$ leaves $\ket{0}$ unchanged and adds a phase of $i$ to $\ket{1}$, i.e., $S\ket{0}=\ket{0}$, $S\ket{1} = i\ket{1}$; similarly, the Pauli $Z$ gate only adds phase $-1$ to $\ket{1}$, i.e., $Z\ket{0} = \ket{0}, Z\ket{1}=-\ket{1}$.
The Pauli $X$ gate acts like a ``NOT'' gate, exchanging $\ket{0},\ket{1}$, i.e., $X\ket{0} = \ket{1}$ and $X\ket{1} = \ket{0}$;
similarly the Pauli $Y$ gate switches $\ket{0}, \ket{1}$ and introduces additional phases of $i$ and $-i$, i.e., $Y\ket{0} = i\ket{1}, Y\ket{1}=-i\ket{0}$.
In terms of the computational basis, the $\cnot$ gate, which can be rewritten as $\cnot = \ketbra{0}{0}\otimes I + \ketbra{1}{1}\otimes X$, performs a ``NOT'' gate $X$ if the first qubit is set to $\ket{1}$, otherwise does nothing.

For an $n$-qubit system with label $q_j, 1\leq j\leq n$ for each qubit, a $k$-qubit unitary $U \in \CC^{2^k\times 2^k}$ applied to qubits $\bar{q} = q_{j_1},q_{j_2},\ldots,q_{j_k}$ is expanded to the unitary $U_{\bar{q}}\otimes \Id_{\{q_j, 1\leq j\leq n\}\setminus \bar{q}} \in \CC^{2^n\times 2^n}$ that performs a local unitary $U$ on qubits $\bar{q}$ and does nothing with the remaining qubits.
We often write $U_{\bar{q}}$ that omits the identity operator for $U_{\bar{q}}\otimes \Id_{\{q_j, 1\leq j\leq n\}\setminus \bar{q}}$, if there is no ambiguity.
For example, consider a $3$-qubit system with label $j$ for the $j$-th qubit, we write $Z_1$ for $Z\otimes I\otimes I$ and write $X_1Y_3$ for $X\otimes I\otimes Y$.

\subsubsection*{Quantum measurements and observables}
The information about a quantum system has to be acquired by quantum \emph{measurements}.
A measurement on $n$-qubit system is described by a collection $M = \{M_m\}_m$ of $2^n\times 2^n$ complex matrices with the normalization condition  $\sum_m M^{\dagger}_mM_m = \Id_{2^n}$.
When performing it  on a pure state $\ket{\psi}$ and a mixed state $\rho$, the measurement outcome of index $m$ occurs with probabilities $p_m = \bra{\psi}M^{\dagger}_mM_m\ket{\psi}$ and $p_m = \tr(M_m\rho M_m^{\dagger})$, respectively,
and the state after the measurement with outcome $m$ collapses into $\ket{\psi_m} = M_m\ket{\psi}/\sqrt{p_m}$ and $\rho_m = M_m\rho M_m^{\dagger}/p_m$, respectively.
For example, the \emph{computational basis measurement} $\{\ketbra{0}{0}, \ketbra{1}{1}\}$ performed on the state $\ket{+}$ will result in a state $\ket{0}\braket{0}{+}/\sqrt{1/2} = \ket{0}$ with probability $\braket{+}{0}\braket{0}{+} = \abs{\braket{0}{+}}^2 = \frac{1}{2}$ and state $\ket{1}\braket{1}{+}/\sqrt{1/2} = \ket{1}$ with probability $\braket{+}{1}\braket{1}{+} = \abs{\braket{1}{+}}^2 = \frac{1}{2}$.

For an $n$-qubit system with label $q_j,1\leq j\leq n$ for each qubit, the computational basis measurement on a qubit $q \in\{q_1,\ldots,q_n\}$ is given as 
\[\{M_0 = \ketbra[q]{0}{0}\otimes \Id_{\{q_j,1\leq j\leq n\}\setminus q}, M_1 = \ketbra[q]{1}{1}\otimes \Id_{\{q_j,1\leq j\leq n\}\setminus q}\},\]
where we use the same notation as in unitary transformations and also write $\ketbra[q]{j}{j}$ for $\ketbra[q]{j}{j}\otimes \Id_{\{q_j,1\leq j\leq n\}\setminus q}$.

We say a linear operator $O$ is an \emph{observable} if $O^{\dagger} = O$.
The spectral decomposition of an observable $O = \sum_m m P_m$ (a sum over its eigenvalues and corresponding projectors) corresponds to a quantum measurement $\{P_m\}_m$ with outcome $m$ for each projector $P_m$.
For example, Pauli gates $X,Y,Z$ are all observables.
In this paper, for those observables with eigenvalues $\pm 1$, e.g., $Z_1Z_2$, we use Boolean values $0$ and $1$ to indicate the measurement outcomes $(-1)^0 = 1$ and $(-1)^1 = -1$, respectively, to make it convenient in presentations.

\subsection{Stabilizer States and Quantum Error Correction}
\label{sec:qec&stabilizer}

\subsubsection*{Pauli strings and Clifford gates}
An $n$-qubit \emph{Pauli string} $P$ is defined as the tensor products of $n$ Pauli gates with a phase in $\{\pm 1,\pm i\}$, i.e.,
\[P \triangleq i^{k}{P_1}\otimes{P_2}\otimes\cdots\otimes{P_n}, \]
where $k\in\{0,1,2,3\}$, {$P_{\ell}\in\{I,X,Y,Z\}$} for $1\leq \ell\leq n$.
The set of all $n$-qubit Pauli strings forms a group with matrix multiplication.
A \emph{Clifford gate} $V$ is a unitary such that for any Pauli string $P$, $VPV^{\dagger}$ (conjugation by $V$) is still a Pauli string.
In particular, each Pauli gate is a Clifford gate.
Clifford gates have a nice structure; that is, any Clifford gate can be constructed from the three gates: $H$, $S$, and $\cnot$~\cite{gottesman1997stabilizer}, e.g., $I=HH$, $X = HSSH$, $Y = SHSSHSSS$ and $Z = SS$.

\subsubsection*{Stabilizer and stabilizer states}
A state $\ket{\psi}$ is \emph{stabilized} by a gate $U$ if $U\ket{\psi} = \ket{\psi}$, i.e., $\ket{\psi}$ is an eigenvector of $U$ with eigenvalue $1$.
For example, $\ket{+}$ is stabilized by Pauli $X$ gate and $\ket{-}$ is stabilized by $-X$ as $(-X)\ket{-} = \ket{-}$.
For a list of $n$-qubit Pauli strings $P_1, P_2,\ldots, P_k \neq \pm I^{\otimes n}, 1\leq k\leq n$, that are commuting independent\footnote{Each $P_j$ commutes with each other and can't be written as a product of others, i.e., $P_j \not\in \langle \{P_1,\ldots,P_k\}\setminus\{P_j\}\rangle$.} and $P_j^2 \neq -I^{\otimes n}$ for $1\leq j\leq k$, let $\cS = \langle P_1,P_2,\ldots,P_k\rangle$ be the group generated by $\{P_1,P_2,\ldots,P_k\}$.
The subspace of states stabilized by all Pauli strings in $\cS$ is denoted by $V_{\cS}$ and $\cS$ is said to be the \emph{stabilizer} of $V_{\cS}$.
We can check that $\ket{\psi} \in V_{\cS}$ if and only if $P_j\ket{\psi} = \ket{\psi}$ for any $1\leq j\leq k$, thus $V_\cS = \cap_{j=1}^k V_{\langle P_j \rangle}$.
Moreover, $V_\cS$ is a non-trivial subspace with $\dim V_{\cS} = 2^{n-k}$ since each $P_j$ halves the dimension of the space being stabilized.
 
In the case of $k=n$, which implies $\dim V_{\cS} = 1$, we can use the stabilizer $\cS = \langle P_1,P_2,\ldots, P_n\rangle$ to represent the only state $\ket{\psi} \in V_{\cS}$ with global phase ignored;
we say $\cS$ is the stabilizer of $\ket{\psi}$ and $\ket{\psi}$ is a \emph{stabilizer state}.
For example, we have $Z_1\ket{000} = Z_1Z_2\ket{000} = Z_2Z_3\ket{000} = \ket{000}$, so $\langle Z_1, Z_1Z_2, Z_2Z_3\rangle$ is the stabilizer of $\ket{000}$. Also, we have $(-Z_1)\ket{111} = Z_1Z_2\ket{111} = Z_2Z_3\ket{111} = \ket{111}$, and then $\langle -Z_1, Z_1Z_2, Z_2Z_3\rangle$ is the stabilizer of $\ket{111}$.

\subsubsection*{Evolution of stabilizer states by Clifford gates}
Consider an $n$-qubit stabilizer state $\ket{\psi}$ with its stabilizer $\cS = \langle P_1,P_2,\ldots,P_n\rangle$ and a Clifford gate $V$, 
the group $V\cS V^{\dagger} = \langle VP_1V^{\dagger},VP_2V^{\dagger},\ldots,VP_nV^{\dagger}\rangle$ is the stabilizer of the state $V\ket{\psi}$ as $VUV^{\dagger}(V\ket{\psi}) = VU\ket{\psi} = V\ket{\psi}$ for any element $VUV^{\dagger}$ of $VSV^{\dagger}$ with $U\in \cS$.
Thus, the unitary transformation by a Clifford gate $V$ for stabilizer state $\ket{\psi}$ is linked with the conjugation by $V$ for the stabilizer $\cS$ of $\ket{\psi}$.

It is well-known that \emph{stabilizer circuits} consisting of Clifford gates and the computational basis measurements can be formalized into the dynamics of stabilizers, thus can be efficiently simulated on a classical computer by tracking the generators of stabilizers~\cite{aaronson2004improved,gottesman1998heisenberg}. For example, 
consider the stabilizer state $\ket{00}$ with stabilizer $\cS = \langle Z_1, Z_1Z_2\rangle$, an $H_1$ gate followed by a $CNOT_{1,2}$ gate will change the generators $Z_1, Z_1Z_2$ to 
\[\cnot_{1,2}H_1Z_1H_1\cnot_{1,2}= X_1X_2,\qquad \cnot_{1,2}H_1(Z_1Z_2)H_1\cnot_{1,2} = -Y_1Y_2.\]
Then the resulted stabilizer $\cS' = \langle X_1X_2, -Y_1Y_2\rangle$. The stabilizer state of $\cS'$ is $(\ket{00}+\ket{11})/\sqrt{2}$, which is exactly the state $CNOT_{1,2}H_1\ket{00}$.

\subsubsection*{Quantum error correction and stabilizer codes}
The idea of quantum error correction is using a large number $n$ of \emph{physical} qubits to encode  $k$ \emph{logical} qubits to protect logical states against the effects of quantum noise.
A QEC code is identified by the subspace $C$, where $C\subseteq \CC^{2^n}$ and is isomorphic to $\CC^{2^k}$, of all logical states.
A QEC program (decoder) for $C$ consists of two-stage procedure of \emph{error detection} and \emph{recovery}:
\begin{itemize}
    \item In the error detection stage, a list of \emph{syndrome measurements} is performed on physical qubits to detect potential errors, i.e., to check whether the measured state is in the code space $C$.
    The measurement results are called the \emph{error syndromes}.
    \item In the recovery stage, error syndromes are used to decide how to recover the initial state before noise occurs, which can also be said to recover the measured state back to the code space $C$.
\end{itemize}
Quantum noise can be continuous, e.g., arbitrary single qubit unitary occurs at some physical qubits, but fortunately, it has been proved that QEC codes that tolerate a set of discrete errors, e.g., Pauli errors (Pauli operators $\{I, X, Y, Z\}$), can also be used to correct continuous errors~\cite[{Theorem 10.2}]{nielsen2010quantum}.
For QEC codes, the design of stabilizer codes allows us only to consider Pauli $X$ errors (bit-flip errors) and Pauli $Z$ errors (phase-flip errors).

\begin{example}
    [Stabilizer Code]\label{eg:stabilizer_code}
    For a list of $n$-qubit Pauli strings $P_1, P_2, \ldots,P_{n-k}, \bar{Z}_1, \bar{Z}_2, \ldots, \bar{Z}_k \neq \pm I^{\otimes n}$ that are commuting independent with $1\leq k\leq n$ and $P_j^2\neq -I^{\otimes n}, \bar{Z}_j^2\neq -I^{\otimes n}$, an $n$-qubit stabilizer code encodes the logical computational basis state $\ket{x_1,x_2,\ldots,x_k}_L$ of $k$ logical qubits as an $n$-qubit stabilizer state of the stabilizer
    \[\langle (-1)^{x_1}\bar{Z}_1,(-1)^{x_2}\bar{Z}_2,\ldots,(-1)^{x_k}\bar{Z}_k,P_1,P_2,\ldots,P_{n-k},\rangle.\]
    In particular, the space $V_{\langle P_1,P_2,\ldots,P_{n-k}\rangle}$ equals to
    \[\Span\{\ket{x_1,x_2,\ldots,x_k}_L \vert  x_i=0,1\ (i=1,2,\ldots,k)\} \cong \CC^{2^k},\]
    which is the code space of this stabilizer code.
    $P_1,P_2,\ldots,P_{n-k}$ are \emph{stabilizer checks} that take on the role of syndrome measurements, and $\bar{Z}_1,\bar{Z}_2, \ldots, \bar{Z}_k$ are called \emph{logical operators}.
\end{example}

\section{Running Example: Three-Qubit Bit-Flip Code}
\label{sec:example}

For better understanding, in this section, we illustrate the basic idea of our QSE technique through the example of \cref{fig:running_example_decoder}.  
Let us recall the classical bit-flip error,
which flips a bit $0 \rightarrow 1$ and $1 \rightarrow 0$.
A simple way to protect bits of information against this error is the three-repetition code that encodes each bit with three copies of itself: $0 \rightarrow 000$ and $1 \rightarrow 111$.
Suppose a bit-flip error changes an encoded bit string $000$ into $001$,
then through \emph{majority voting}, it can be recovered back to $000$.

\subsubsection*{Quantum error correction}
The program with a two-stage procedure of error detection (Lines \texttt{4}-\texttt{9}) and recovery (Lines \texttt{10}-\texttt{12}) in \cref{fig:running_example_decoder} implements the decoding of the \emph{three-qubit bit-flip code}~\cite[\S{10.1.1}]{nielsen2010quantum}, which is the quantum counterpart of the above three-repetition code.
The code encodes a qubit state $\alpha\ket{0}+\beta\ket{1}$ in three qubits as $\alpha\ket{000}+\beta\ket{111}$ to protect qubits against the quantum bit-flip error: a Pauli $X$ operator at one qubit that takes a qubit state $\alpha\ket{0}+\beta\ket{1}$ to $X(\alpha\ket{0}+\beta\ket{1})=\alpha X\ket{0}+\beta X\ket{1} = \alpha\ket{1}+\beta\ket{0}$.
Suppose an encoded state $\alpha\ket{000}+\beta\ket{111}$ is changed by a quantum bit-flip error into $\alpha\ket{001}+\beta\ket{110}$, which is a superposition of $\ket{001}$ and $\ket{110}$.
To recover this state back to $\alpha\ket{000}+\beta\ket{111}$, we need to design some clever quantum measurements (syndrome measurements) so that we can extract useful information for both $\ket{001}$ and $\ket{110}$ without destroying the superposition state.
Fortunately, there exist two such measurements, the observables $Z_1Z_2$ and $Z_2Z_3$, which are executed in Lines \texttt{4}-\texttt{9} of \cref{fig:running_example_decoder}.
Here, the $\cnot$ gates serve to disentangle the measured qubits, thereby safeguarding the system's information during measurement.
$Z_1Z_2$ checks whether the first and second qubits are the same, and $Z_2Z_3$ checks whether the second and third qubits are the same.
The measurement outcomes of $Z_1Z_2$ and $Z_2Z_3$ on state $\alpha\ket{001}+\beta\ket{110}$ will tell us that the first and second qubit are the same, while the second and third qubit are not the same. Thus we can conclude that the bit-flip error occurs at the third qubit, and then we can perform an $X$ gate at the third qubit to recover $\alpha\ket{001}+\beta\ket{110}$ back to $\alpha\ket{000}+\beta\ket{111}$.


\subsubsection*{Verification of QEC programs} Our goal is to verify the above statement automatically. 
More generally, given a QEC code and its QEC program with error detection and recovery stages, how can we verify that this program can correct all errors allowed by the QEC code for any state in the code space, or provide some useful information for debugging/testing this program if there are some bugs inside it.

\subsubsection*{Our solution --- symbolic execution}

\begin{figure}[ht]
    \centering
    
    \scalebox{.9}{\tikzinput{running_example_SE}{figures/running_example_SE.tex}}

    \vspace*{-2mm}
    
    \caption{Illustration of symbolic execution for the quantum program in \cref{fig:running_example_decoder}.}\label{fig:running_example}
\end{figure}

To solve this problem, we develop a symbolic execution method for quantum programs.
By introducing a \emph{symbolic quantum state} that can (partially) express possible input quantum states, the quantum program is then ``run'' on this symbolic quantum state.
For the program in \cref{fig:running_example_decoder}, its symbolic execution is illustrated in~\cref{fig:running_example}. More explicitly:
\begin{itemize}\item In the input initialization stage, the symbolic quantum state is the stabilizer $\langle (-1)^sZ_1, Z_1Z_2,$ $Z_2Z_3\rangle$,
a list of Pauli operator strings $(Z_1, Z_1Z_2, Z_2Z_3)$ (the light blue part in~\cref{fig:running_example}) with their phases $(s, 0, 0)$ (the light red part in~\cref{fig:running_example}) containing symbolic variable $s$, that represents the basis $\{\ket{000}, \ket{111}\}$ of the code space. This is because $\ket{000}$ is the stabilizer state of $\langle (-1)^0Z_1, Z_1Z_2,$ $Z_2Z_3\rangle$ and $\ket{111}$ is the stabilizer state of $\langle (-1)^1Z_1, Z_1Z_2,$ $Z_2Z_3\rangle$.
\item In the error injection stage, instead of enumerating concrete values of $e_i$, we consider $e_i$ as symbolic variables over $\{0,1\}$ and define $X^{e_1}, X^{e_2}, X^{e_3}$ as symbolic $X$ errors. The constraint $e_1+e_2+e_3 \leq 1$ indicates that at most one $X$ error occurs, thus covering all allowed errors.
Define $\bar{X}$ as $X_1^{e_1}X_2^{e_2}X_3^{e_3}$. It can be verified that $\bar{X}(Z_1)\bar{X}^{\dagger} = (-1)^{s\oplus e_1}Z_1$, $\bar{X}(Z_1Z_2)\bar{X}^{\dagger} = (-1)^{e_1\oplus e_2}Z_1Z_2$ and $\bar{X}(Z_2Z_3)\bar{X}^{\dagger} = (-1)^{e_2\oplus e_3}Z_2Z_3$.
Thus, as shown in \cref{fig:running_example}, these symbolic $X$ errors make the symbolic quantum state into $\langle (-1)^{s\oplus e_1}Z_1, (-1)^{e_1\oplus e_2}Z_1Z_2, (-1)^{e_2\oplus e_3}Z_2Z_3\rangle$ with constraint $e_1+e_2+e_3 \leq 1$.
\item Then, in the error detection stage, we deduce that $(-1)^{e_1\oplus e_2}Z_1Z_2$ and $(-1)^{e_2\oplus e_3}$ belongs to the stabilizer (group)  $\langle (-1)^{s\oplus e_1}Z_1, (-1)^{e_1\oplus e_2}Z_1Z_2, (-1)^{e_2\oplus e_3}Z_2Z_3\rangle$. Consequently, according to the stabilizer formalism~\cite[{Chapter 10.5}]{nielsen2010quantum}, the measurement outcomes for $Z_1Z_2$ and $Z_2Z_3$ correspond to \emph{symbolic expressions} $m_1 = e_1\oplus e_2$ and $m_2 = e_2\oplus e_3$, respectively.
\item Finally, in the recovery stage, the phase part of the symbolic quantum state becomes a list of symbolic expressions of $e_i, 1\leq i\leq 3$ and  $r_j, 1\leq j\leq 3$ with $r_j$ associating to boolean expressions of conditionals.
This final state in~\cref{fig:running_example} under its constraints (including $e_1+e_2+e_3 \leq 1$), can be verified by an SMT solver to be equivalent to the initial symbolic quantum state. That is, the program with error detection and recovery in \cref{fig:running_example} can correct at least one ($\leq 1$) $X$ error for the symbolic quantum state $\langle (-1)^sZ_1, Z_1Z_2,$ $Z_2Z_3\rangle$.
\end{itemize}
In the same way, the program with error detection and recovery in \cref{fig:running_example} can also correct at least one $X$ error for the symbolic quantum state $\langle (-1)^sX_1X_2X_3, Z_1Z_2, Z_2Z_3\rangle$.
Hence, according to our Theorem~\ref{thm:adequacy}, we can deduce that any state in the code space is satisfied.

\section{Quantum Symbolic Execution}
\label{sec:qse}
The successful application of symbolic execution to our running example (\cref{fig:running_example}) motivates us to develop a framework for the symbolic execution of quantum programs to be applied in formal analysis and verification of other QEC programs. We first introduce a simple quantum programming language that is suitable for describing QEC programs (\S\ref{sec:qse:lang}).
Then, by introducing symbolic quantum states (\S\ref{sec:symbolic_state}), we extend the framework of standard symbolic execution for classical programs to quantum programs (\S\ref{sec:qse:qse}).

\subsection{OpenQASM-like Language}
\label{sec:qse:lang}
We choose to extend IBM's OpenQASM2~\cite{cross2021openqasm}, an imperative quantum programming language supporting classical flow control, with external calls (oracle calls) of classical functions.

\subsubsection*{Syntax}
The set $\qprogs$ of quantum programs is defined by the following syntax:
    \begin{equation*}
        \begin{aligned}
            S \Coloneqq{}& \assign{e}{x} \mid \assign{F(\bar{x})}{\bar{y}}
            \mid{}
            \qut{U}{\bar{q}} \mid \qmeasure{q}{c} 
            \mid{}
            S_1;S_2 \mid \qqif{b}{S_1}{S_2}
        \end{aligned}
    \end{equation*}
    
As in classic imperative languages, the assignment $x\coloneqq e$ evaluates the expression $e$ and assigns the result to the classical variable $x$.
The statement $\assign{F(\bar{x})}{\bar{y}}$ executes an external call of a classical function $F$, an $m$-ary function with $n$ outputs, that accepts the values of a list of input variables $\bar{x} = x_1,x_2,\ldots,x_m$ and assigns the outputs $F(\bar{x})$ to a list of classical variables $\bar{y} = y_1,y_2,\ldots,y_n$ in turn.
This statement is introduced to wrap some classical algorithms used in QEC programs, while independent of quantum variables.
For example, the \emph{Blossom} algorithm~\cite{edmonds1965paths,kolmogorov2009blossom} for minimum weight perfect matching (MWPM) is used extensively in decoding surface codes (2D topological codes)~\cite{dennis2002topological, fowler2012surface}.
Moreover, using such algorithms like the Blossom algorithm is usually by calling existing implementations; thus, we only need to care about the conditions that the input and output meet.

There are two quantum constructs involved with qubit variables.
The unitary transformation $\qut{U}{\bar{q}}$ performs a unitary $U$ on a list of qubit variables $\bar{q} = q_1,q_2,\ldots,q_n$.
For example, $\qut{X}{q}$ performs a Pauli $X$ operator on the qubit $q$ and $\qut{\cnot}{q_1,q_2}$ performs a $\cnot$ operator on qubits $q_1,q_2$.
The measurement statement $\qmeasure{q}{c}$ measures the qubit variable $q$ in the computational basis and assigns the measurement outcome to the classical variable $c$.

Control flow is implemented by sequential composition $S_1;S_2$, which executes subprograms $S_1$ and $S_2$ sequentially; and conditional statement $\qqif{b}{S_1}{S_2}$, which executes the subprogram $S_1$ if the Boolean expression $b$ is evaluated to be \texttt{true}, and otherwise executes the subprogram $S_2$.
For convenience, we use the notation $\qif{b}{S_1}$ that omits the \texttt{false} branch and does nothing when the Boolean expression $b$ is evaluated to be \texttt{false}.

\subsubsection*{Operational semantics}
A \emph{classical-quantum configuration} is a triple $\cfg{S}{\sigma}{\rho}$,
where $S$ is a quantum program or the termination symbol $\terminate$, $\sigma$ is a classical state, and $\rho$ is a density operator denoting the state of qubit variables in $S$.
A probabilistic transition is denoted by $\cfg{S}{\sigma}{\rho} \overto{p} \cfg{S'}{\sigma'}{\rho'}$, where the label $p$ is the transition probability and omitted whenever $p=1$. Then following the semantics of quantum \textbf{while}-language with classical variables given in~\cite{ying2010flowchart},
the operational semantics of our language is defined as a \emph{probabilistic transition relation} $\overto{p}$ between classical-quantum configurations by the transition rules presented in~\cref{fig:operational}.

\begin{figure}[ht]
    \begin{gather*}
        \text{(As)}\ \cfg{\assign{e}{x}}{\sigma}{\rho} \to \cfg{\terminate}{\sigma[\sigma(e)/x]}{\rho} \qquad \text{(EC)}\ \cfg{\assign{F(\bar{x})}{\bar{y}}}{\sigma}{\rho} \to \cfg{\terminate}{\sigma[F(\sigma(\bar{x}))/\bar{y}]}{\rho} \\
        \text{(UT)}\ \cfg{\qut{U}{\bar{q}}}{\sigma}{\rho} \to \cfg{\terminate}{\sigma}{U_{\bar{q}}\rho U_{\bar{q}}^{\dagger}} \ \begin{matrix}
            \text{(M0)}\ \cfg{\qmeasure{q}{c}}{\sigma}{\rho} \overto{\tr(\ketbra[q]{0}{0}\rho)} \cfg{\terminate}{\sigma[0/c]}{\frac{\ketbra[q]{0}{0}\rho\ketbra[q]{0}{0}}{\tr(\ketbra[q]{0}{0}\rho)}} \\
            \text{(M1)}\ \cfg{\qmeasure{q}{c}}{\sigma}{\rho} \overto{\tr(\ketbra[q]{1}{1}\rho)} \cfg{\terminate}{\sigma[1/c]}{\frac{\ketbra[q]{1}{1}\rho\ketbra[q]{1}{1}}{\tr(\ketbra[q]{1}{1}\rho)}}
        \end{matrix} \\
        \text{(SC)}\ \inference{\cfg{S_1}{\sigma}{\rho} \overto{p} \cfg{S_1'}{\sigma'}{\rho'}}{\cfg{S_1;S_2}{\sigma}{\rho} \overto{p} \cfg{S_1';S_2}{\sigma'}{\rho'}} \qquad \begin{matrix}
            \text{(CT)}\ \inference{\sigma \models b}{\cfg{\qqif{b}{S_1}{S_2}}{\sigma}{\rho} \to \cfg{S_1}{\sigma}{\rho}} \\
            \text{(CF)}\ \inference{\sigma \models \neg b}{\cfg{\qqif{b}{S_1}{S_2}}{\sigma}{\rho} \to \cfg{S_2}{\sigma}{\rho}}
        \end{matrix}
    \end{gather*}
    \caption{Transition rules for operational semantics. For external call $\assign{F(\bar{x})}{\bar{y}}$ with $\bar{x} = x_1,x_2,\ldots,x_m$ and $\bar{y} = y_1,y_2,\ldots, y_n$, we use $\sigma[F(\sigma(\bar{x}))/\bar{y}]$ as a shorthand for $\sigma[f_1/y_1,f_2/y_2,\ldots,f_n/y_n]$ provided by $(f_1, f_2, \ldots, f_n) = F(\sigma(x_1), \sigma(x_2), \ldots, \sigma(x_m))$. For sequential composition $S_1;S_2$, we make convention that $\terminate;S_2 = S_2$.}\label{fig:operational}
\end{figure}

The rules (As), (SC), (EC), (CT), and (CF) are the same as in classical or probabilistic programming.
In (UT), quantum state $\rho$ is transformed into $U\rho U^{\dagger}$ by the unitary transformation $U$.
In (M0) and (M1), the computational basis measurement at qubit $q$ brings two probabilistic branches with probability $\tr(\ketbra[q]{0}{0}\rho)$ and $\tr(\ketbra[q]{1}{1}\rho)$, respectively, that transform quantum state $\rho$ into $\frac{\ketbra[q]{0}{0}\rho\ketbra[q]{0}{0}}{\tr(\ketbra[q]{0}{0}\rho)}$ and $\frac{\ketbra[q]{1}{1}\rho\ketbra[q]{1}{1}}{\tr(\ketbra[q]{1}{1}\rho)}$, respectively, and assign the corresponding measurement outcomes $0$ and $1$, respectively, to classical variable $c$.

For convenience, we also write $\cfg{S}{\sigma}{\rho} \overto[*]{p} \cfg{S'}{\sigma'}{\rho'}$ if there is a sequence of transitions
\[\cfg{S}{\sigma}{\rho} \overto{p_1} \cfg{S_1}{\sigma_1}{\rho_1} \overto{p_2} \cdots  \overto{p_n} \cfg{S'}{\sigma'}{\rho'}\]
such that $n\geq 0$ and $p = \prod_{j=1}^np_j$.

\subsection{Symbolic Quantum States}
\label{sec:symbolic_state}
Just like using symbols to represent a range of values and building expressions over constants and symbols, we want to build a term that depends on some symbols, and when concrete values substitute these symbols, this term becomes a quantum state---density operator.
We begin with a few examples of density operators with parameters.

\begin{example}
    [Handling quantum states of single-qubit]\label{eg:qubit_state}
    Consider a qubit system with state space $\cH = \Span\{\ket{0}, \ket{1}\}$.
    A pure qubit state can be written as 
        \[\ket{\psi(\theta, \phi)} = \cos\frac{\theta}{2}\ket{0}+e^{i\phi}\sin\frac{\theta}{2}\ket{1}, \theta\in [0,\pi], \phi\in[0,2\pi],\]
        with a global phase ignored~\cite[Bloch Sphere]{nielsen2010quantum}. 
        Then, its corresponding density operator is $\psi(\theta, \phi) = \ketbra{\psi(\theta, \phi)}{\psi(\theta, \phi)}$. By replacing variable $\theta$ with a symbol $s_0$ over $[0,\pi]$ and variable $\phi$ with a symbol $s_1$ over $[0,2\pi]$, we would obtain a symbolic quantum state $\psi(s_0,s_1)$.
\end{example}

\begin{example}[Handling quantum states of $n$-qubit]\label{eg:all_states}
Consider an $n$-qubit system with state space $\cH = \Span\{\ket{0},\ket{1},\ldots,\ket{2^n-1}\}$.
\begin{enumerate}
    \item For any $(\alpha_0,\alpha_1,\ldots,\alpha_{2^n-1})\in \CC^{2^n}$, let $\ket{g(\alpha_0,\alpha_1,\ldots,\alpha_{2^n-1})} = \nicefrac{1}{\sqrt{\sum_{k=0}^{2^n-1}\abs{\alpha_k}^2}}\sum_{i=0}^{2^n-1}\alpha_i\ket{i}$ parameterize all $n$-qubit pure states.
    Then, with each $\alpha_j$ replaced by a symbol $s_j$ over $\CC$, $g(s_0,s_1,\ldots,s_{2^n-1})$ would be the symbolic quantum state we want. 
    \item Moreover, we can choose $\ket{g_{j}(\alpha_{j,0},\alpha_{j,1},\ldots,\alpha_{j,2^n-1})} = \nicefrac{1}{\sqrt{\sum_{k=0}^{2^n-1}\abs{\alpha_{j,k}}^2}}\sum_{i=0}^{2^n-1}\alpha_{j,i}\ket{i}$ and $(\beta_0,$ $\beta_1,\ldots,\beta_{2^n-1}) \in \CC^{2^n}$ to define \[
        h(\beta_0,\beta_1,\ldots,\beta_{2^n-1},\ldots,\alpha_{j,k},\ldots) = 
    \sum_{j=0}^{2^n-1}\frac{\abs{t_j}^2}{\sum_{j=0}^{2^n-1}\abs{t_j}^2}g_j(\alpha_{j,0},\alpha_{j,1},\ldots,\alpha_{j,2^n-1}),\]
    which parameterizes all $n$-qubit mixed states.
    Then, with $t_j, s_{i,j}$ being symbols over $\CC$, $h(\beta_0,$ $\beta_1,\ldots,\beta_{2^n-1},\ldots,\alpha_{j,k},\ldots)$ would be a symbolic quantum state for $n$-qubit mixed states.
    \end{enumerate}
\end{example}

Building on \cref{eg:qubit_state,eg:all_states}, we introduce the general form of symbolic quantum states.

\begin{definition}
    [Symbolic Quantum State]\label{def:sqs}
    A symbolic quantum state $\srho$ is an $n$-ary map 
    \[G:C_1\times C_2\times\cdots\times C_n \to \cD\]
    with each argument substituted by a symbol over its domain, i.e.,
    \[\srho \triangleq G(s_1, s_2, \ldots, s_n),\]
    where $n$ is a positive integer, for each $1\leq i\leq n$, $C_i$ is a subset of complex numbers, and $s_i$ is the symbol for the value ranging  over $C_i$, and $\cD$ is the set of quantum states under consideration, e.g., the space of $m$-qubit mixed states if the program contains $m$ qubits.
\end{definition}

\subsubsection*{Quantum operations over symbolic quantum states}
To use symbolic quantum states to capture the quantum states in quantum programs, we need to further define unitary transformations and quantum measurements over them.

\begin{definition}
    [Quantum operations over symbolic quantum states]\label{def:operations}
    Let $\srho$ be a symbolic quantum state for qubit variables $q_1, q_2, \ldots, q_n$. Then:
    \begin{itemize}
        \item For a  unitary statement $\qut{U}{\bar{q}}$ with qubits $\bar{q} \subseteq \{q_1, q_2, \ldots, q_n\}$, the unitary transformation function $ut$ is defined by
        \begin{equation}\label{eq:ut_function}
            ut(U, \bar{q}, \srho) = U_{\bar{q}} \srho U_{\bar{q}}^{\dagger},
        \end{equation}
        where $U_{\bar{q}} \triangleq U_{\bar{q}}\otimes \Id_{\{q_1, q_2, \ldots, q_n\}\setminus \bar{q}}$ is the operation  that performs a local unitary $U$ on qubits $\bar{q}$ and does nothing with the remaining qubits.
        \item For a  measurement statement $\qmeasure{q}{c}$ with $q \in \{q_1,q_2,\ldots,q_n\}$, the measurement function $m$ is defined by
        \begin{equation}\label{eq:m_function}
            m(q, \srho) = \biggl(s, \tr\bigl(\ketbra[q]{s}{s}\srho\bigr), \frac{\ketbra[q]{s}{s}\srho\ketbra[q]{s}{s}}{\tr(\ketbra[q]{s}{s}\srho)}\biggr)
        \end{equation}
        with $s$ a newly introduced symbol over $\{0,1\}$, where $\ketbra[q]{s}{s} \triangleq \ketbra[q]{s}{s}\otimes \Id_{\{q_1, q_2, \ldots, q_n\}\setminus \{q\}}$ is the measurement operator that act on qubit $q$ as $\ketbra{s}{s}$ and does nothing with the remaining qubits.
        The output is a triple of a newly introduced symbol $s$ ranging over measurement outcomes $\{0,1\}$, a symbolic expression $p(s) = \tr\bigl(\ketbra[q]{s}{s}\srho\bigr)$ for the probability of obtaining outcome $s$ when performing measurement at qubit $q$ of state $\srho$ and a symbolic quantum state $\srho'(s) = \frac{\ketbra[q]{s}{s}\srho\ketbra[q]{s}{s}}{\tr(\ketbra[q]{s}{s}\srho)}$ that represents the state after this measurement.
    \end{itemize}
\end{definition}

\subsubsection*{Efficient representations} According to our \cref{def:sqs}, there exist infinitely many symbolic quantum states $F:C_1\times C_2\times \cdots \times C_n \to \cD$.
However, in practical applications, we must carefully consider which type of $F$ can be efficiently implemented on a  classical computer.
If we trivially employ the symbolic quantum state in \cref{eg:all_states}, we would encounter several challenges:
\begin{itemize}
    \item \textbf{Exponential complexity}: the number of symbols required, i.e., the dimension of the state space, grows exponentially with respect to the number of qubits, leading to computational complexity beyond what classical computers can handle.
    \item \textbf{Lack of powerful solvers}: without specialized solvers, addressing the problems arising from QSE would become challenging.
\end{itemize}
On the other hand, the development of SMT solvers promotes the application of classical SE.
Similarly, we must make the symbolic quantum state take advantage of the existing solvers.

Fortunately, for QEC programs, we can construct suitable symbolic quantum states with efficient representations---symbolic stabilizer states, which we will discuss later in \S\ref{sec:symbolic_stabilizer}.

\subsection{Quantum Symbolic Execution}
\label{sec:qse:qse}

Following the idea of classical SE, we come up with an SE framework for quantum programs (QSE) that maintains a symbolic configuration $\scfg{S}{\ssigma}{\srho}{P}{\varphi}$, where:
\begin{itemize}
    \item $S$ is the quantum program to be executed.
    \item $\ssigma$ is the symbolic classical state, which is the same as the symbolic state in classical SE, and maps classical variables of the program to symbolic expressions over constants and symbolic values.
    \item $\srho$ is the \emph{symbolic quantum state}, equipped with a unitary transformation function $ut(U, \bar{q}, \srho)$ for statement $\qut{U}{\bar{q}}$ and a measurement function $m(q, \srho)$ for statement $\qmeasure{q}{c}$, for qubit variables of the program. 
    \item $P$ is a set that records the symbolic probabilities corresponding to the outcomes of quantum measurements in the program. At the beginning of the execution, $P = \emptyset$.
    \item $\varphi$ is the path condition, i.e., a conjunctive formula that expresses a set of assumptions on the symbols due to external calls and branches taken in an execution path of the quantum program. 
\end{itemize}

As a natural extension to the operational semantics of quantum programs in~\cref{fig:operational}, the QSE changes the symbolic configuration according to the QSE rules presented in~\cref{fig:qse_rules}.
We also write $\scfg{S}{\ssigma}{\srho}{P}{\varphi} \to^{*} \scfg{S'}{\ssigma'}{\srho'}{P'}{\varphi'}$ if there is a sequence of transitions 
\[\scfg{S}{\ssigma}{\srho}{P}{\varphi} \to \scfg{S_1}{\ssigma_1}{\srho_1}{P_1}{\varphi_1} 
    \to \cdots 
    \to \scfg{S_{{n}}}{\ssigma_{{n}}}{\srho_{{n}}}{P_{{n}}}{\varphi_{{n}}}
    {=} \scfg{S'}{\ssigma'}{\srho'}{P'}{\varphi'}\]
such that $n\geq 0$.

We assume that the function $F$ in an external call has a logical formula $C_{F}(\bar{x}, \bar{y})$ asserting the condition that input and output must meet, i.e.,  $C_{F}(\bar{x}, F(\bar{x})) \equiv \texttt{true}$.
The existence of this condition is obvious, e.g., we can choose $C_F(\bar{x}, \bar{y}) = (\bar{y} == F(\bar{x}))$, but such a choice may not be efficient to (symbolically) compute for any $F$.

For measurement statements, our defined function $m$ in~\cref{def:operations} uses a symbol to represent the measurement outcomes, avoiding branches like conditional statements.
Since a lot of measurements are commonly required in QEC, this trick has the advantage of not facing the exponential complexity over the number of measurements.

\begin{figure}
    \begin{align*}
        \text{(S-As)} &\ \scfg{\assign{e}{x}}{\ssigma}{\srho}{P}{\varphi} \to \scfg{\terminate}{\ssigma[\ssigma(e)/x]}{\srho}{P}{\varphi} \\
        \text{(S-EC)} &\ \scfg{\assign{F(\bar{x})}{\bar{y}}}{\ssigma}{\srho}{P}{\varphi} \to \scfg{\terminate}{\ssigma[\bar{s}_{\bar{y}}/\bar{y}]}{\srho}{P}{\varphi \land C_{F}(\ssigma(\bar{x}), \bar{s}_{\bar{y}})} \\
        \text{(S-UT)}&\ \scfg{\qut{U}{\bar{q}}}{\ssigma}{\srho}{P}{\varphi} \to  \scfg{\terminate}{\ssigma}{ut(U, \bar{q}, \srho)}{P}{\varphi} \\
        \text{(S-M)}&\ \scfg{\qmeasure{q}{c}}{\ssigma}{\srho}{P}{\varphi} \to \scfg{\terminate}{\ssigma[s/c]}{\srho'(s)}{P\cup\{(s,p(s))\}}{\varphi} \\
        & \qquad \text{ where } (s,p(s),\srho'(s)) = m(q,\srho) \\
        \text{(S-SC)}&\ \inference{\scfg{S_1}{\ssigma}{\srho}{P}{\varphi} \to \scfg{S_1'}{\ssigma'}{\srho'}{P'}{\varphi'}}{\scfg{S_1;S_2}{\ssigma}{\srho}{P}{\varphi} \to \scfg{S_1';S_2}{\ssigma'}{\srho'}{P'}{\varphi'}} \\
        \text{(S-CT)}&\ \scfg{\qqif{b}{S_1}{S_2}}{\ssigma}{\srho}{P}{\varphi} \to \scfg{S_1}{\ssigma}{\srho}{P}{\varphi \land \ssigma(b)} \\[-2pt]
        \text{(S-CF)}&\ \scfg{\qqif{b}{S_1}{S_2}}{\ssigma}{\srho}{P}{\varphi} \to \scfg{S_2}{\ssigma}{\srho}{P}{\varphi \land \neg\ssigma(b)}
    \end{align*}
    \caption{Symbolic execution rules for quantum programs. For external call $\assign{F(\bar{x})}{\bar{y}}$, $C_{F}(\ssigma(\bar{x}), \bar{s}_{\bar{y}})$ is the condition that the input and output should satisfy for $F$, i.e., $C_{F}(\bar{x}, F(\bar{x})) \equiv \texttt{true}$ for any concrete values $\bar{x}$, where $\bar{s}_{\bar{y}} = s_{y_1}, s_{y_2}, \ldots, s_{y_n}$ is a list of newly introduced symbols for output variables $\bar{y} = y_1,y_2,\ldots,y_n$.}\label{fig:qse_rules}
\end{figure}

\section{Soundness Theorem}
\label{sec:soundness}
In this section, we present the soundness theorem for QSE (namely, the QSE rules in~\cref{fig:qse_rules}) with respect  to the operational semantics.
The proof for this section is located in \ifARXIV{\cref{app:soundess_prf}}\else{Appendix~C of the extended version~\cite{fang2023symbolic}}\fi.

To precisely state the theorem, we need the notion of instantiation for symbolic quantum states. 

\begin{definition}[Instantiation]\label{def:satisfaction}Let $(\sigma, \rho)$ be 
    a pair of classical state and quantum state, and let $(\ssigma, \srho)$ be a pair of symbolic classical state and symbolic quantum state. Then we say that $(\sigma, \rho)$ is an instantiation of $(\ssigma, \srho)$ under a path condition $\varphi$, written $$(\sigma, \rho) \models_{\varphi} (\ssigma, \srho)$$ if there is a \emph{valuation} $V$ that assigns every symbol in $\ssigma, \srho$ and $\varphi$ with concrete values such that $V(\varphi) = \texttt{true}$, $\sigma = V(\ssigma)$ and $\rho = V(\srho)$.
    In particular, if $\varphi \equiv  \texttt{true}$, we simply write $(\sigma, \rho) \models (\ssigma, \srho)$.
\end{definition}

Our soundness theorem relies on the following two key lemmas about functions $ut$ and $m$ for symbolic quantum states introduced in~\cref{def:operations}.

\begin{restatable}[Correctness of function $ut$]{lemma}{correctnessut}\label{lem:ut}
    For a unitary transformation statement $\qut{U}{\bar{q}}$, a path condition $\varphi$, a pair $(\sigma, \rho)$ of classical state and quantum state, and a pair $(\ssigma, \srho)$ of symbolic classical state and symbolic quantum state, if $\cfg{\qut{U}{\bar{q}}}{\sigma}{\rho} \to \cfg{\terminate}{\sigma'}{\rho'}$ and $(\sigma, \rho) \models_{\varphi} (\ssigma, \srho)$, then
    \[(\sigma', \rho') \models_{\varphi} (\ssigma, ut(U, \bar{q}, \srho)).\]
\end{restatable}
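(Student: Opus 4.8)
The plan is to unfold both sides of the claimed instantiation relation using \cref{def:satisfaction} and then verify that the same valuation witnesses both. Concretely, by the rule (UT) in \cref{fig:operational}, the hypothesis $\cfg{\qut{U}{\bar{q}}}{\sigma}{\rho} \to \cfg{\terminate}{\sigma'}{\rho'}$ forces $\sigma' = \sigma$ and $\rho' = U_{\bar{q}}\rho U_{\bar{q}}^{\dagger}$. Meanwhile, the hypothesis $(\sigma, \rho) \models_{\varphi} (\ssigma, \srho)$ supplies a valuation $V$ assigning concrete values to every symbol occurring in $\ssigma$, $\srho$, and $\varphi$ such that $V(\varphi) = \texttt{true}$, $\sigma = V(\ssigma)$, and $\rho = V(\srho)$. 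The goal is to exhibit a valuation witnessing $(\sigma', \rho') \models_{\varphi} (\ssigma, ut(U, \bar{q}, \srho))$, and the natural candidate is the very same $V$, since the symbolic unitary transformation rule (S-UT) leaves $\ssigma$, $P$, and $\varphi$ untouched and only updates the symbolic quantum state to $ut(U, \bar{q}, \srho) = U_{\bar{q}} \srho U_{\bar{q}}^{\dagger}$ by \cref{eq:ut_function}.

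First I would check the classical and path-condition components, which are immediate: since neither $\ssigma$ nor $\varphi$ changes under (S-UT), and since $\sigma' = \sigma = V(\ssigma)$ with $V(\varphi) = \texttt{true}$ still holding, those two requirements of the instantiation relation are inherited verbatim from the hypothesis. The substantive point is the quantum component, where I would need to show $\rho' = V(ut(U, \bar{q}, \srho))$. I would argue that valuation commutes with the unitary-conjugation operation, i.e.,
\[
V(ut(U, \bar{q}, \srho)) = V(U_{\bar{q}} \srho U_{\bar{q}}^{\dagger}) = U_{\bar{q}}\, V(\srho)\, U_{\bar{q}}^{\dagger} = U_{\bar{q}} \rho U_{\bar{q}}^{\dagger} = \rho',
\]
where the middle equality holds because $U_{\bar{q}}$ contains no symbols (it is a fixed matrix determined by the gate $U$ and the qubit list $\bar{q}$), so applying $V$ only affects the symbolic factor $\srho$ and leaves the concrete matrices $U_{\bar{q}}$ and $U_{\bar{q}}^{\dagger}$ fixed.

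The key step on which everything rests is this commutation of $V$ with the conjugation map, so I would make precise the sense in which a valuation acts as an algebra homomorphism on the symbolic expressions representing density operators. By \cref{def:sqs}, $\srho = G(s_1, \ldots, s_n)$ is built from symbols via a map into density operators, and $V$ simply substitutes concrete values for the symbols; the operation $X \mapsto U_{\bar{q}} X U_{\bar{q}}^{\dagger}$ is linear in $X$ with symbol-free coefficients, so substitution and conjugation commute. I expect this homomorphism property — verifying that valuation respects the matrix operations used to build $ut(U, \bar{q}, \srho)$ — to be the main (and essentially only) obstacle, though it is conceptually routine once the representation of symbolic quantum states as symbol-parameterized matrices is taken seriously; the remainder of the argument is bookkeeping that the untouched components of the configuration carry over unchanged.
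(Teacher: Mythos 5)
Your proposal is correct and follows essentially the same route as the paper's proof: instantiate the hypothesis to obtain a valuation $V$, note that rule (UT) forces $\sigma' = \sigma$ and $\rho' = U_{\bar{q}}\rho U_{\bar{q}}^{\dagger}$, and reuse the same $V$ as witness via the commutation $V(U_{\bar{q}}\srho U_{\bar{q}}^{\dagger}) = U_{\bar{q}}V(\srho)U_{\bar{q}}^{\dagger}$. The only difference is that you spell out why valuation commutes with symbol-free conjugation, a point the paper's proof uses silently; this is a harmless (indeed welcome) elaboration, not a divergence.
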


\begin{restatable}[Correctness of function $m$]{lemma}{correctnessm}\label{lem:m}
    For a measurement statement $\qmeasure{q}{c}$, a path condition $\varphi$, a pair $(\sigma, \rho)$ of classical state and quantum state, and a pair $(\ssigma, \srho)$ of symbolic classical state and symbolic quantum state, if $\cfg{\qmeasure{q}{c}}{\sigma}{\rho} \overto{p} \cfg{\terminate}{\sigma'}{\rho'}$ and $(\sigma, \rho) \models_{\varphi} (\ssigma, \srho)$, then 
    \[(\sigma', \rho') \models_{\varphi} (\ssigma[s/c], \srho'(s)),\]
    where $(s, p(s), \srho'(s)) = m(q, \srho')$.
\end{restatable}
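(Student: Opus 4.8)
The plan is to unfold the hypothesis $(\sigma,\rho)\models_{\varphi}(\ssigma,\srho)$ into a concrete valuation and then extend that valuation by the measurement outcome so as to witness the conclusion, exactly in parallel with the (simpler) argument for \cref{lem:ut}. First I would extract from $(\sigma,\rho)\models_{\varphi}(\ssigma,\srho)$ a valuation $V$ with $V(\varphi)=\texttt{true}$, $\sigma=V(\ssigma)$ and $\rho=V(\srho)$. Next I would observe that the operational transition $\cfg{\qmeasure{q}{c}}{\sigma}{\rho}\overto{p}\cfg{\terminate}{\sigma'}{\rho'}$ must come from exactly one of the rules (M0) and (M1); writing $b\in\{0,1\}$ for the corresponding outcome, this pins down $\sigma'=\sigma[b/c]$, $\rho'=\ketbra[q]{b}{b}\rho\ketbra[q]{b}{b}/\tr(\ketbra[q]{b}{b}\rho)$ and $p=\tr(\ketbra[q]{b}{b}\rho)$, where the very existence of the transition forces $p>0$.

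The key move is to set $V'=V[s\mapsto b]$, with $s$ the fresh symbol produced by $m(q,\srho)$. Because $s$ is newly introduced it occurs in none of $\varphi$, $\ssigma$, $\srho$, so $V'$ agrees with $V$ on every previously occurring symbol, and I would invoke this freshness at each of the three checks. It yields $V'(\varphi)=V(\varphi)=\texttt{true}$ at once. For the classical state, the substitution $\ssigma[s/c]$ records the symbolic value $s$ at $c$ and leaves all other entries unchanged, so applying $V'$ entrywise gives $b$ at $c$ and $\sigma$ elsewhere, i.e.\ $V'(\ssigma[s/c])=\sigma[b/c]=\sigma'$. For the quantum part, $V'$ sends the symbolic projector $\ketbra[q]{s}{s}$ to the concrete projector $\ketbra[q]{b}{b}$ and sends $\srho$ to $V(\srho)=\rho$, so applying $V'$ to $\srho'(s)=\ketbra[q]{s}{s}\srho\ketbra[q]{s}{s}/\tr(\ketbra[q]{s}{s}\srho)$ produces $\ketbra[q]{b}{b}\rho\ketbra[q]{b}{b}/\tr(\ketbra[q]{b}{b}\rho)=\rho'$. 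Together these three equalities give the required $(\sigma',\rho')\models_{\varphi}(\ssigma[s/c],\srho'(s))$.

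The step I expect to demand the most care is the commutation of the valuation with the symbolic measurement operation: I must justify that $V'$ distributes through the projector, the two matrix products, and the trace quotient defining $\srho'(s)$, and in particular that the denominator $\tr(\ketbra[q]{s}{s}\srho)$ instantiates to $p>0$ so that the quotient is a genuine density operator rather than an undefined $0/0$ --- this is precisely where the positivity guaranteed by the existence of the operational transition is used. A structural point worth recording is that the single symbolic rule (S-M) uniformly subsumes both (M0) and (M1): the two operational branches are nothing but the two admissible assignments $b\in\{0,1\}$ to $V'(s)$, which is exactly the mechanism by which QSE avoids forking on measurements. As a consistency remark (not needed for the instantiation relation itself), under $V'$ the recorded symbolic probability $p(s)=\tr(\ketbra[q]{s}{s}\srho)$ evaluates to the actual transition label $p$.
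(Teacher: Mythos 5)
Your proposal is correct and takes essentially the same route as the paper's proof: both extract a valuation $V$ from $(\sigma,\rho)\models_{\varphi}(\ssigma,\srho)$, extend it to $V'=V[s\mapsto b]$ where $b$ is the concrete outcome of the operational transition, and use the freshness of $s$ to check the three conditions $V'(\varphi)=\texttt{true}$, $V'(\ssigma[s/c])=\sigma'$, and $V'(\srho'(s))=\rho'$. The only cosmetic difference is that you handle the outcomes $b\in\{0,1\}$ uniformly (and add a welcome remark on why the denominator instantiates to $p>0$), whereas the paper writes out the (M0) case and declares (M1) symmetric.
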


Now we are ready to present the soundness theorem.

\begin{restatable}[Soundness of QSE]{theorem}{soundness}\label{thm:soundess}
    For any quantum program $S$, any pair $(\sigma, \rho)$ of classical state and quantum state, any pair $(\ssigma, \srho)$ of symbolic classical state and symbolic quantum state, any set $P$ of symbolic probabilities and any path condition $\varphi$, if 
    \begin{gather*}
        (\sigma, \rho) \models_{\varphi} (\ssigma, \srho),\quad \cfg{S}{\sigma}{\rho} \overto{p} \cfg{S'}{\sigma'}{\rho'},\quad 
        \scfg{S}{\ssigma}{\srho}{P}{\varphi} \to \scfg{S'}{\ssigma'}{\srho'}{P'}{\varphi'}
    \end{gather*}
    then $(\sigma', \rho') \models_{\varphi'} (\ssigma', \srho')$. Moreover, there is a valuation $V$ such that $V(\ssigma) = \sigma, V(\srho) = \rho, V(\varphi) = \<true>, V(\ssigma') = \sigma', V(\srho') = \rho', V(\varphi') = \<true>$ and
    if $P' = P\cup\{(s, p(s)\}$, then $p = V(p(s))$.
\end{restatable}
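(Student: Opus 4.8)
The plan is to proceed by case analysis on the last transition rule applied in the operational semantics, matching it against the corresponding QSE rule in \cref{fig:qse_rules}. The structure of the semantics guarantees that for each program construct there is exactly one pair of matching rules (with the measurement statement splitting into the two probabilistic branches (M0)/(M1), both covered by (S-M)). The two genuinely quantum cases---unitary transformation and measurement---are handled by directly invoking \cref{lem:ut} and \cref{lem:m}, which already establish the instantiation relation $(\sigma', \rho') \models_{\varphi'} (\ssigma', \srho')$ for those statements; so the main content of the theorem reduces to these two lemmas plus bookkeeping for the classical cases.

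First I would dispatch the purely classical constructs. For (As) matched with (S-As), the quantum state is untouched ($\rho' = \rho$, $\srho' = \srho$) and the path condition is unchanged ($\varphi' = \varphi$); taking the same valuation $V$ witnessing $(\sigma,\rho)\models_\varphi(\ssigma,\srho)$, I would check that $V$ commutes with assignment, i.e.\ $V(\ssigma[\ssigma(e)/x]) = \sigma[\sigma(e)/x] = \sigma'$, which follows because $V$ respects the evaluation of expressions. For (EC) matched with (S-EC), the new symbols $\bar{s}_{\bar y}$ must be assigned the concrete outputs $F(\sigma(\bar x))$; I would extend $V$ on these fresh symbols by $V(s_{y_i}) = f_i$, and the conjunct $C_F(\ssigma(\bar x), \bar s_{\bar y})$ added to the path condition evaluates to \<true> precisely because of the defining property $C_F(\bar x, F(\bar x)) \equiv \<true>$. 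For (CT)/(CF) matched with (S-CT)/(S-CF), the branch is taken under $\sigma \models b$ (resp.\ $\sigma \models \neg b$); since $\sigma = V(\ssigma)$, this gives $V(\ssigma(b)) = \<true>$ (resp.\ $\<false>$), so the augmented path condition $\varphi \land \ssigma(b)$ (resp.\ $\varphi \land \neg\ssigma(b)$) still evaluates to \<true> under $V$, and the quantum and classical states are carried over unchanged. The sequential composition case (SC)/(S-SC) is an immediate structural induction: the premise transition on $S_1$ satisfies the hypotheses, so the inductive conclusion transfers to $S_1;S_2$.

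For the quantum cases I would simply instantiate the lemmas. In (UT)/(S-UT) the classical state and path condition are unchanged, and \cref{lem:ut} directly yields $(\sigma', \rho') \models_\varphi (\ssigma, ut(U,\bar q,\srho))$, which is the claim with $\varphi' = \varphi$. In the measurement case, (M0) or (M1) matched with (S-M), \cref{lem:m} gives $(\sigma', \rho') \models_\varphi (\ssigma[s/c], \srho'(s))$ with $(s, p(s), \srho'(s)) = m(q,\srho)$; note $\varphi' = \varphi$ here as well. The ``moreover'' clause about probabilities is the one piece requiring extra care, and I expect it to be the main obstacle: I must produce a single valuation $V$ that simultaneously witnesses the instantiation \emph{and} recovers the transition probability as $p = V(p(s))$. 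The key observation is that the fresh measurement symbol $s$ over $\{0,1\}$ records which branch was taken: in branch (M0) the semantics assigns $0$ to $c$ with probability $p = \tr(\ketbra[q]{0}{0}\rho)$, so I would set $V(s) = 0$, and then $V(p(s)) = V(\tr(\ketbra[q]{s}{s}\srho)) = \tr(\ketbra[q]{0}{0}\rho) = p$ using $\rho = V(\srho)$ and the fact that $V$ commutes with the trace and the projector expression; the (M1) branch is symmetric with $V(s) = 1$. The subtlety is ensuring that the valuation extending the one from the hypothesis to the new symbol $s$ is consistent with the already-fixed values on the old symbols---which holds because $s$ is freshly introduced and does not occur in $\ssigma, \srho, \varphi$---and that this same $V$ forces $\ssigma[s/c]$ and $\srho'(s)$ to evaluate to $\sigma'$ and $\rho'$, which is exactly what the proof of \cref{lem:m} delivers for this choice of $V(s)$.
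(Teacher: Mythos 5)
Your proposal follows essentially the same route as the paper's proof: induction/case analysis on the program construct, valuation bookkeeping (extending the witnessing valuation on fresh symbols $\bar{s}_{\bar{y}}$ and $s$) for the classical and external-call cases, direct invocation of \cref{lem:ut} and \cref{lem:m} for the quantum cases, and handling the ``moreover'' probability clause by setting $V(s)=0$ or $V(s)=1$ according to the branch (M0)/(M1) taken. The paper's proof is exactly this argument, including your observation that freshness of the new symbols is what makes the extended valuation consistent with the hypothesis.
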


\section{Symbolic Stabilizer States}
\label{sec:symbolic_stabilizer}

As we briefly discussed in~\S\ref{sec:symbolic_state}, a general symbolic quantum state cannot be represented in an efficient way. In this section, we propose a special class of symbolic quantum state with efficient representations,  called \emph{symbolic stabilizer state}. This class of symbolic quantum states is what we need in the analysis and verification of QEC programs. Then we extend this symbolic representation to unitaries and measurements so that we have an efficient QSE framework for our application in the analysis and verification of stabilizer codes.

\subsection{Symbolic Stabilizer States}
\label{sec:symbolic_stabilizer0}
First, we present an efficient symbolic representation of stabilizer states by  introducing symbols into the phases of Pauli strings as follows.

\begin{definition}
    [Symbolic stabilizer state]\label{def:sss}
    For any commuting independent set $\{P_1,P_2,\ldots,P_{n} \mid p_{j}\neq \pm I^{\otimes n}, P_{j}^2 \neq -I^{\otimes n}\}$ of $n$-qubit Pauli strings with size $n$ and $n$ Boolean functions $f_1,f_2,\ldots,f_n$ over $m$ Boolean variables,
    let $\ket{\psi(b_1,b_2,\ldots,b_m)}$ denote a stabilizer state of 
    \[\langle (-1)^{f_1(b_1,\ldots,b_m)}P_1,\ldots,(-1)^{f_n(b_1,\ldots,b_m)}P_n\rangle.\]
    A symbolic stabilizer state is defined as a symbolic quantum state
    \[\srho = \psi(s_1,s_2,\ldots,s_m) = \ketbra{\psi(s_1, s_2, \ldots, s_m)}{\psi(s_1, s_2, \ldots, s_m)}\]
    with $s_1,s_2,\ldots,s_m$ being symbols over Boolean values.
\end{definition}

Since the global phase of $\ket{\psi(b_1,b_2,\ldots,b_m)}$ is canceled out in $\psi(b_1,b_2,\ldots,b_m)$, the stabilizer $\langle (-1)^{f_1(b_1,\ldots,b_m)}P_1,\ldots,(-1)^{f_n(b_1,\ldots,b_m)}P_n\rangle$ corresponds to a unique density operator $\psi(b_1,b_2,\ldots,b_m)$.
Therefore, we slightly abuse the notation and also write the density operator
\[\psi(b_1,b_2,\ldots,b_m) = \langle (-1)^{f_1(b_1,\ldots,b_m)}P_1,\ldots,(-1)^{f_n(b_1,\ldots,b_m)}P_n\rangle,\]
and the symbolic stabilizer state
\[\srho = \langle (-1)^{f_1(s_1,\ldots,s_m)}P_1,\ldots,(-1)^{f_n(s_1,\ldots,s_m)}P_n\rangle.\]

\begin{example}
    In our running example (\cref{fig:running_example}), the initial symbolic stabilizer state is \[\langle (-1)^sZ_1, (-1)^0Z_1Z_2, (-1)^0Z_2Z_3\rangle\] with $s$ being a Boolean symbol. It represents a set of quantum states $\{\ket{000}, \ket{111}\}$ as $\ket{000}$ and $\ket{111}$ are the stabilizer states of $\langle Z_1, Z_1Z_2, Z_2Z_3\rangle$  and  $\langle -Z_1, Z_1Z_2, Z_2Z_3\rangle$, respectively.
\end{example}

\subsection{Quantum Operations over Symbolic Stabilizer States}\label{sec:sss_operations}
Now we can extend the symbolic representation for stabilizer states to unitaries and measurements. Indeed, the unitary transformation function $ut$ and measurement function $m$ defined in \S\ref{sec:symbolic_state} can be naturally lifted to functions on stabilizers by the stabilizer formalism~\cite[Chapter 10.5]{nielsen2010quantum}.
The definitions are as follows, and their correctness for \Cref{lem:m,lem:ut} is available in \ifARXIV{\cref{app:adeqacy_prf}}\else{Appendix~D of the extended version~\cite{fang2023symbolic}}\fi.

\begin{definition}
    [Quantum operations over symbolic stabilizer states]\label{def:qo_sss}
    Let \[\srho = \langle (-1)^{f_1(s_1,\ldots,s_m)}P_1,\ldots,(-1)^{f_n(s_1,\ldots,s_m)}P_n\rangle\] be a symbolic stabilizer state for qubit variables $q_1,q_2,\ldots, q_n$. Then
    \begin{itemize}
        \item For a Clifford unitary statement $\qut{V}{\bar{q}}$ with qubits $\bar{q}\subseteq \{q_1,\ldots,q_n\}$, the unitary transformation function $ut$ is defined as $ut(V, \bar{q}, \srho)= $
        \[ \langle (-1)^{f_1(s_1,\ldots,s_m)}V_{\bar{q}}P_1V_{\bar{q}}^{\dagger},\ldots,(-1)^{f_n(s_1,\ldots,s_m)}V_{\bar{q}}P_n V_{\bar{q}}^{\dagger}\rangle.\]
        \item For a measurement statement $\qmeasure{q}{c}$ with $q \in \{q_1,q_2,\ldots,q_n\}$, the definition of measurement function $m$ is divided into two cases: \begin{enumerate}
            \item $Z_q$ commutes with all $P_j,1\leq j\leq n$.
            In this case, there exist a Boolean value $b$ and a list of indexes $1\leq j_1\leq j_2\leq \ldots\leq j_k\leq n$ with $1\leq k\leq n$ such that $Z_q = (-1)^bP_{j_1}P_{j_2}\cdots P_{j_k}$. The measurement does not change the state and implies a determinate outcome~\cite[\S10.5.3]{nielsen2010quantum}.
            The function $m$ is defined as $m(q, \srho) = $
            \[(b\oplus f_{j_1}(s_1,\ldots,s_m)\oplus \cdots\oplus  f_{j_k}(s_1,\ldots,s_m),\{\}, \srho).\]
            \item $Z_q$ anti-commutes  one or more of $P_j$. In this case, without loss of generality, we can assume $P_1$ anti-commutes with $Z_q$ and $P_2,\ldots,P_n$ commute with $Z_q$\footnote{If there is other $P_j, j\geq 2$ anti-commutes with $Z_q$, we can replace the $(-1)^{f_j(s_1,\ldots,s_m)}P_j$ by $(-1)^{f_1(s_1,\ldots,s_m)+f_j(s_1,\ldots,s_m)}P_1P_j$ in the generating set of $\srho$, and it will result in the same $\srho$. Then $Z_q$ only commute with $P_1P_j$.}.
            Then, the function $m$ is defined as $m(q, \srho) =$
            \[ \hspace{-5mm}\Bigl(s, \frac{1}{2}, \langle (-1)^{s}Z_q, (-1)^{f_2(s_1,\ldots,s_m)}P_2,\ldots,(-1)^{f_n(s_1,\ldots,s_m)}P_n\rangle\Bigr),\]
            where $s$ is a newly introduced Boolean symbol for the measurement outcome.
        \end{enumerate}
    \end{itemize}
\end{definition}

\begin{example}
    In our running example~\cref{fig:running_example}, measuring $Z_1Z_2$ (Lines \texttt{4}-\texttt{6} in~\cref{fig:running_example_decoder}) is implemented by $\qut{\cnot}{q_1,q_2}; \qmeasure{q_2}{m_1}; \qut{\cnot}{q_1,q_2}$. It changes the symbolic stabilizer state $\srho_2 = \langle (-1)^{s\oplus e_1}Z_1, (-1)^{e_1\oplus e_2}Z_1Z_2, (-1)^{e_2\oplus e_3}Z_2Z_3\rangle$ as follows: \begin{enumerate}
        \item $\qut{\cnot}{q_1,q_2}$ transforms $\srho_2$ to ${\srho}_3$ \begin{align*}
           ={}&\scalebox{.9}{$\langle (-1)^{s\oplus e_1}\cnot_{1,2}Z_1\cnot_{1,2}, (-1)^{e_1\oplus e_2}\cnot_{1,2}(Z_1Z_2)\cnot_{1,2}, (-1)^{e_2\oplus e_3}\cnot_{1,2}(Z_2Z_3)\cnot_{1,2}\rangle$} \\
            ={}& \langle (-1)^{s\oplus e_1}Z_1, (-1)^{e_1\oplus e_2}Z_2, (-1)^{e_2\oplus e_3}Z_1Z_2Z_3\rangle.
        \end{align*}
        \item Then $\qmeasure{q_2}{m_1}$ performs measurement $Z_2$ on $\srho_3$. Since $Z_2$ commutes with all Pauli strings in ${\srho}_3$, we follow the case 1 in~\cref{def:qo_sss} to get $Z_2 = (-1)^0Z_2$, thus $m(q_2, \srho_3) = (e_1\oplus e_2, \{\}, \srho_3)$ leading to outcome $m_1 = e_1\oplus e_2$ and the quantum state remains $\srho_3$.
        \item Finally, $CNOT\ q_1,q_2$ transforms $\srho_3$ back to $\srho_2$.
    \end{enumerate}
\end{example}

\subsubsection*{Efficient implementation}
The two functions $ut$ and $m$ defined above require three subroutines:
\begin{itemize}
    \item A subroutine that computes $VPV^{\dagger}$ for a Pauli string $P$ and a Clifford gate $V$.
    \item A subroutine that determines whether two Pauli strings $P_1$ and $P_2$ are commutable or not.
    \item A subroutine that produce a Boolean value $b$ and a list of indexes $1\leq j_1,j_2,\ldots,j_k\leq n$ with $1\leq k\leq n$ such that $(-1)^bZ_q = P_{j_1}P_{j_2}\cdots P_{j_k}$ for a qubit $q$ and a stabilizer $\langle P_1,P_2,\ldots, p_n\rangle$ if $Z_q$ commutes with all $P_j$.
\end{itemize}
These subroutines are standard in stabilizer formalism, and there are already efficient implementations~\cite{aaronson2004improved,gidney2021stim} for them.
We provide a detailed discussion of the implementation in \ifARXIV{\cref{sec:implementation}}\else{Appendix~B of the extended version~\cite{fang2023symbolic}}\fi.

\subsection{Adequacy}
\label{sec:adequacy}
In this subsection, we further show that our symbolic representation is adequate for our target application in the analysis and verification of stabilizer codes.
The proof for this section is located in \ifARXIV{\cref{app:adeqacy_prf}}\else{Appendix~D of the extended version~\cite{fang2023symbolic}}\fi.

To answer whether a decoder of a stabilizer code is correct, we need to verify that for any quantum state $\rho$ in the code space as the input state, the program $S$ consisting of error injection (inject errors allowed by the stabilizer code) and this decoder will output the same state $\rho$.
Thus, we need to verify whether the input quantum state and the output quantum state are the same.

Formally, let $S$ be a quantum program that only uses Clifford gates and $\sigma$ be a classical state. For $1\leq k\leq n$, assume that an $n$-qubit stabilizer code encoding $k$ logical qubits with syndrome operators $P_1,P_2,\ldots,P_{n-k}$ and logical operators $L_1,L_2,\ldots,L_k$. Let $\ket{x_1,x_2,\ldots,x_k}_L$ with $x_j\in\{0,1\}$ denoting the logical computational basis state of the code as in~\Cref{eg:stabilizer_code}, and $\bar{H} = H_{L,1}\otimes H_{L,2}\otimes\cdots\otimes H_{L,k}$ with $H_{L,j}$ being the logical Hardamard gate, which is a Clifford gate, for the $j$-th logical qubit.
    It is clear that  states $\ket{x_1,x_2,\ldots,x_k}_L$ and $\bar{H}\ket{x_1,x_2,\ldots,x_k}_L$ are both stabilizer states. We write $\mathcal{T}$ for the set of (density operators of) these states with $x_j\in\{0,1\} (1\leq j\leq k)$. Then we have:

\begin{restatable}{lemma}{adequacylem}\label{lem:adequacy}
    If for any $\rho\in\mathcal{T}$, it holds that 
\begin{equation}\label{transition-stab}\cfg{S}{\sigma}{\rho} \overto{p}^{*} \cfg{\terminate}{\sigma'}{\rho'} \text{ with $p>0$ implies $\rho' = \rho$}\end{equation}    
    then (\ref{transition-stab}) holds for all quantum state $\rho$ in the code space.
\end{restatable}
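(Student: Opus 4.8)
The plan is to reduce this statement about the branching, probabilistic behaviour of $S$ to a purely linear-algebraic fact about each individual execution path, exploiting that the language has only \emph{classical} control flow. First I would fix the initial classical state $\sigma$ and observe that the only rules touching $\rho$ are (UT) and the two measurement rules (M0)/(M1). Hence every terminating run $\cfg{S}{\sigma}{\rho}\overto{p}^{*}\cfg{\terminate}{\sigma'}{\rho'}$ factors through a \emph{path} $\pi$, namely a choice of outcome $m\in\{0,1\}$ at each measurement encountered. Along $\pi$ the assignments, external calls, and branch tests are deterministic functions of $\sigma$ and of the chosen outcomes, so the sequence of executed statements, the final $\sigma'$, and the composite operator $E_\pi$ (the ordered product of the applied Clifford unitaries and the projectors $\ketbra[q]{m}{m}$ dictated by the chosen outcomes) are all \emph{independent of $\rho$}. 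The quantum state enters only through $\rho'=E_\pi\rho E_\pi^\dagger/p_\pi(\rho)$ with $p_\pi(\rho)=\tr(E_\pi\rho E_\pi^\dagger)$, using the standard identity that the product of the conditional measurement probabilities equals the trace of the unnormalised output. Since $S$ is loop-free, there are only finitely many paths.

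Next I would translate the hypothesis into an eigenvector condition, path by path. For each $\rho\in\mathcal{T}$ and each $\pi$: if $p_\pi(\rho)>0$ the hypothesis gives $\rho'=\rho$, i.e.\ $E_\pi\rho E_\pi^\dagger=p_\pi(\rho)\rho$; and if $p_\pi(\rho)=0$ then already $\norm{E_\pi\ket{\psi}}^2=p_\pi(\rho)=0$ for the pure $\rho=\ketbra{\psi}{\psi}$, so $E_\pi\ket{\psi}=0$. In either case $E_\pi\ket{\psi}\in\CC\ket{\psi}$. Writing $B$ for the restriction of $E_\pi$ to the code space $C$, this says that \emph{every} state vector appearing in $\mathcal{T}$ is an eigenvector of $B$, with eigenvalue possibly $0$.

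Then I would show that being an eigenvector on all of $\mathcal{T}$ forces $B=\lambda\,\Id_C$. Because $\{\ket{x_1,\ldots,x_k}_L\}$ is a basis of $C$ and each is an eigenvector, $B$ is diagonal in the logical computational basis, say $B\ket{y}_L=\lambda_y\ket{y}_L$ (allowing $\lambda_y=0$). Applying $B$ to the uniform superposition $\bar{H}\ket{0,\ldots,0}_L=2^{-k/2}\sum_y\ket{y}_L\in\mathcal{T}$ yields $2^{-k/2}\sum_y\lambda_y\ket{y}_L$, and the eigenvector condition forces this to be proportional to $\sum_y\ket{y}_L$, so all $\lambda_y$ coincide with a common $\lambda$ (the degenerate subcase where this vector is annihilated simply makes every $\lambda_y=0$). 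Hence $E_\pi\ket{\psi}=\lambda\ket{\psi}$ for every $\ket{\psi}\in C$, i.e.\ $E_\pi\,\Pi=\lambda\,\Pi$ for the projector $\Pi$ onto $C$.

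Finally I would reassemble. For an arbitrary state $\rho$ supported on $C$ we have $\rho=\Pi\rho\Pi$, so $E_\pi\rho E_\pi^\dagger=\abs{\lambda}^2\rho$ and $p_\pi(\rho)=\abs{\lambda}^2$; thus whenever $p_\pi(\rho)>0$ the run along $\pi$ returns $\rho'=\rho$. Since every terminating positive-probability execution of $\cfg{S}{\sigma}{\rho}$ is realised by some path $\pi$ with $p_\pi(\rho)>0$, condition \cref{transition-stab} holds for all $\rho$ in the code space, as required. I expect the main obstacle to be the first step: rigorously justifying that the path operator $E_\pi$ and the control-flow decisions along $\pi$ are genuinely $\rho$-independent (so that the finitely many $E_\pi$ can be analysed uniformly), together with correctly folding the zero-probability branches into the eigenvector bookkeeping, so that the argument of the third step covers every degenerate case and not merely the ``all states active'' one.
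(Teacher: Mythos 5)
Your proposal is correct and follows essentially the same route as the paper's proof: both reduce each execution to a $\rho$-independent path operator $E$ built from the applied unitaries and measurement projectors, turn the hypothesis into an eigenvector condition on $\mathcal{T}$ (absorbing zero-probability branches as eigenvalue $0$), use a Hadamard-basis superposition to force a single common eigenvalue across the logical basis, and conclude that $E$ acts as a scalar on the whole code space. The only cosmetic differences are that you compare coefficients against the single uniform superposition $\bar{H}\ket{0,\ldots,0}_L$ where the paper does so for every $\bar{H}\ket{x_1,\ldots,x_k}_L$, and you treat arbitrary code states via the projector identity $\rho=\Pi\rho\Pi$ where the paper uses a spectral decomposition.
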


\begin{restatable}[Adequacy of symbolic stabilizer states]{theorem}{adequacythm}\label{thm:adequacy}
    Let:\begin{enumerate} \item $\srho_1$ be the symbolic stabilizer state $\langle P_1,\ldots,P_{n-k},(-1)^{s_1}L_1,\ldots,(-1)^{s_k}L_k\rangle$; and \item $\srho_2$ be the symbolic stabilizer state $\langle \bar{H}P_1\bar{H},\ldots,\bar{H}P_{n-k}\bar{H},(-1)^{s_1}\bar{H}L_1\bar{H},\ldots,(-1)^{s_k}\bar{H}L_k\bar{H}\rangle$ with $s_j$ symbols over Boolean values. \item $\ssigma$ be a classical symbolic state that does not contains $s_j,1\leq j\leq k$. \end{enumerate}
     If for $\srho=\srho_1,\srho_2$, $$\scfg{S}{\ssigma}{\srho}{\emptyset}{\texttt{true}} \to^{*} \scfg{\terminate}{\ssigma'}{\srho'}{P'}{\varphi'}\text{ implies } \varphi'\models\srho' = \srho, $$
     then for any quantum state $\rho$ in the code space and any valuation $V$,
     $$\cfg{S}{V(\ssigma)}{\rho} \overto{p}^{*} \cfg{\terminate}{\sigma'}{\rho'} \text{ with $p>0$ implies $\rho' = \rho$}.$$
\end{restatable}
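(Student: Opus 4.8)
The plan is to derive the concrete recovery guarantee from the symbolic hypothesis by combining the one-step soundness theorem (\Cref{thm:soundess}) with the reduction of \Cref{lem:adequacy}. Fix a valuation $V$ and set $\sigma_0 = V(\ssigma)$. By \Cref{lem:adequacy} it suffices to prove the recovery property (\ref{transition-stab}) for every $\rho$ in the finite set $\mathcal{T}$; the whole code space then follows. The key starting observation is that $\mathcal{T}$ is precisely the set of Boolean instantiations of $\srho_1$ and $\srho_2$: substituting each symbol $s_j$ by a bit $x_j$ turns $\srho_1$ into the stabilizer of $\ket{x_1,\ldots,x_k}_L$ and turns $\srho_2$ into the stabilizer of $\bar{H}\ket{x_1,\ldots,x_k}_L$. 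Since $\ssigma$ does not mention any $s_j$, for each $\rho\in\mathcal{T}$ I can extend $V$ to a valuation $V'$ that fixes the $s_j$ appropriately, giving $V'(\ssigma)=\sigma_0$, $V'(\srho_i)=\rho$, and hence $(\sigma_0,\rho)\models(\ssigma,\srho_i)$ for the relevant $i\in\{1,2\}$.

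The core of the argument is a \emph{simulation lemma} lifting \Cref{thm:soundess} from a single step to an entire terminating run: if $(\sigma_0,\rho)\models(\ssigma,\srho_i)$ and $\cfg{S}{\sigma_0}{\rho}\overto{p}^{*}\cfg{\terminate}{\sigma'}{\rho'}$ with $p>0$, then there is a matching symbolic run $\scfg{S}{\ssigma}{\srho_i}{\emptyset}{\texttt{true}}\to^{*}\scfg{\terminate}{\ssigma'}{\srho'}{P'}{\varphi'}$ together with a valuation $W$ satisfying $W(\varphi')=\texttt{true}$, $W(\ssigma')=\sigma'$, and $W(\srho')=\rho'$. I would prove this by induction on the length of the concrete run. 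The program component transforms identically under the two semantics (compare \cref{fig:operational,fig:qse_rules}), so every concrete step can be paired with a symbolic step reaching the \emph{same} residual program, after which \Cref{thm:soundess} propagates the instantiation and yields the updated valuation. The branching rules need attention: at a conditional I select S-CT or S-CF according to whether the current classical state satisfies $b$, and satisfiability of the augmented path condition under $W$ follows from $W(\ssigma(b))=\sigma(b)$; at a measurement I apply S-M and set the fresh symbol to the observed outcome, which is consistent exactly because $p>0$ makes that outcome's symbolic probability nonzero ($p=W(p(s))$ by \Cref{thm:soundess}).

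Granting the simulation lemma, the conclusion is immediate. Apply the theorem's hypothesis to the matching symbolic run to get $\varphi'\models\srho'=\srho_i$; since $W(\varphi')=\texttt{true}$, this gives $W(\srho')=W(\srho_i)$. The left-hand side equals $\rho'$ by the simulation lemma, while the right-hand side $W(\srho_i)$ equals $V'(\srho_i)=\rho$ because $W$ extends $V'$ and $\srho_i$ involves only the symbols $s_1,\ldots,s_k$ already fixed by $V'$. Hence $\rho'=\rho$, establishing (\ref{transition-stab}) for all $\rho\in\mathcal{T}$, and \Cref{lem:adequacy} upgrades it to the entire code space.

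I expect the simulation lemma to be the main obstacle. Its two delicate points are (i) showing that a symbolic transition matching each concrete transition always exists and lands in the same residual program, so the induction goes through, and (ii) threading one global valuation through the whole run so that it simultaneously witnesses the final path condition, classical state, and quantum state---in particular recording each freshly introduced measurement symbol with the correct outcome. By contrast, identifying $\mathcal{T}$ with the instantiations of $\srho_1$ and $\srho_2$ and the closing substitution argument are routine.
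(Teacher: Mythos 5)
Your proposal is correct and follows essentially the same route as the paper's proof: reduce to the finite set $\mathcal{T}$ via \Cref{lem:adequacy}, instantiate $(\ssigma,\srho_i)$ by extending $V$ on the symbols $s_1,\ldots,s_k$, lift the one-step soundness result to whole terminating runs by induction (the paper does this implicitly, invoking ``the proof of \Cref{thm:soundess} \ldots{} and induction through $n$'' to obtain a single valuation $V_2$ witnessing the matching symbolic run), and then apply the hypothesis $\varphi'\models\srho'=\srho_i$ to conclude $\rho'=\rho$. Your explicit ``simulation lemma'' is precisely the step the paper leaves compressed, and your treatment of both $\srho_1$ and $\srho_2$ is what the paper's ``without loss of generality'' covers.
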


This theorem tells us that we only need to check two symbolic stabilizer states to verify that a program's initial and final quantum states are the same.

\section{Conditional Application of Pauli Gates without Forking}\label{sec:conditional}
In classical SE, a branch of conditional triggers a path fork and update to the path constraints as rules (S-CT) and (S-CF) in \cref{fig:qse_rules}.
However, there may be many conditional statements in QEC programs, e.g.,
the conditionals in our running example (Lines \texttt{10}-\texttt{12} in \cref{fig:running_example_decoder}) are designed for each physical qubit; thus, when the number of physical qubits is large, the number of conditional statements required is also large.
Then, we will face the problem of path explosion.
Fortunately, with a newly designed rule (see \cref{eq:qse_rule_spg} later), our symbolic stabilizer states can solve this problem.

\subsubsection*{Symbolic Pauli gates}
We observe that to correct the $X$ or $Z$ error, QEC programs usually use a conditional statement to determine whether to apply the $X$ or $Z$ gate in the recovery stage, e.g.,  the line \texttt{10} in \cref{fig:running_example_decoder}:
\[\qif{m_1==1 \ \&\  m_2==0}{\qut{X}{q_1}}\]
This application of $X$ and $Z$ gates inspires us to define the following symbolic Pauli gates.

\begin{definition}
    [Symbolic Pauli gates]
    For a symbolic Boolean expression $e$, a Pauli gate $\tau \in \{X, Y, Z\}$, and a qubit $q$, we define the symbolic Pauli gate $\tau^{e}$ at qubit $q$ as the conditional statement $\qif{e}{\qut{\tau}{q}}$; that is,
    $\tau^e$ applies $\tau$ to qubit $q$ if $e$ is evaluated to be \texttt{true} and $I$ if $e$ is evaluated to be \texttt{false}.
    For convenience, we also add a new statement
    \[\qut{\tau[e]}{q} \equiv \qif{e}{\qut{\tau}{q}}\]
    for symbolic Pauli gate $\tau^e$.
\end{definition}

\subsubsection*{QSE rule for symbolic Pauli gates}
Our symbolic stabilizer states are very friendly to symbolic Pauli gates.
For any Pauli string \[P = (i)^kP_1\otimes P_2\otimes\cdots\otimes P_n,\]
conjugation by a symbolic Pauli gate $\tau^e$ at qubit $q$ transforms it into 
\[(i)^kP_1\otimes\cdots\otimes(\tau^eP_q(\tau^\dagger)^e)\otimes \cdots\otimes P_n = (-1)^{e\cdot[\tau\neq P_q]}P,\]
where $[\tau\neq P_q] = 1$ if $\tau \neq P_q$, otherwise $[\tau\neq P_q] = 0$. We see that $\tau^e$ only changes the phase of the Pauli string. Thus, for symbolic stabilizer states, the unitary transformation by a symbolic Pauli gate $\tau^e$ can be rewritten as
\begin{align*}
    & ut(\tau^e, q, \langle (-1)^{f_1(s_1,\ldots,s_m)}P_1,\ldots,(-1)^{f_n(s_1,\ldots,s_m)}P_n\rangle) \\
    ={}& \langle (-1)^{f_1(s_1,\ldots,s_m)\oplus e[\tau\neq P_{1,q}]}P_1,\ldots,(-1)^{f_n(s_1,\ldots,s_m)\oplus e[\tau\neq P_{n,q}]}P_n\rangle.
\end{align*}
where the $q$-th Pauli gate of $P_k$ is $P_{k,q}$.
Then, the QSE rule for $\qut{\tau[e]}{q}$ can be redefined as
\begin{align}
\label{eq:qse_rule_spg}
    \text{(S-SP)} &\ \scfg{\qut{\tau[e]}{q}}{\ssigma}{\srho}{P}{\varphi} \to \scfg{\terminate}{\ssigma}{ut(\tau^{\ssigma(e)}, q, \srho)}{P}{\varphi}
\end{align}

\subsubsection*{Conditional without forking}
The rule (S-SP) provides a simpler and more effective way to handle the statement $\qut{\tau[e]}{q}$ than rules (S-CT) and (S-CF) for conditionals in~\cref{fig:qse_rules}.
For example, consider the program
\[\qut{X[m_1==0]}{q_1};\qut{X[m_2==0]}{q_2};\qut{X[m_3==0]}{q_3},\]
which is also 
\[\qif{m_1==0}{\qut{X}{q_1}};\qif{m_2==0}{\qut{X}{q_2}};\qif{m_3==0}{\qut{X}{q_3}},\]
for a symbolic configuration, the rule (S-SP) will eventually produce one symbolic configuration; however, rules (S-CT) and (S-CF) will produce $8$ symbolic configurations.

\subsubsection*{Inserting symbolic Pauli errors through symbolic
Pauli gates}
The error injection in our running example (see \cref{fig:running_example_decoder}) can be done by the program
\[\qut{X[e_1==1]}{q_1}; \qut{X[e_2==1]}{q_2}; \qut{X[e_3==1]}{q_3}\]
with constraints $e_1+e_2+e_3 \leq 1,0\leq e_1,e_2,e_3\leq 1$, which characterizes all possible $X$ errors with at most $1$ location of physical qubits.
More generally, the following program
\begin{equation}\label{eg:symbolic_errors}
    \qut{X[e_1==1]}{q_1}; \qut{X[e_2==1]}{q_2}; \ldots ; \qut{X[e_n==1]}{q_n}
\end{equation}
with constraints $e_1+e_2+\cdots+e_n \leq d$, $0\leq e_j\leq 1, 1\leq j\leq n$, captures all possible $X$ errors at most $d$ locations of $n$ physical qubits.
By doing this, we efficiently overcome the challenge of dealing with a large number of samples of possible Pauli errors, which grows exponentially as $\binom{n}{d}$.

\section{Experimental Evaluation}
\label{sec:evaluation}
We implemented our  general QSE framework together with the special support of symbolic stabilizer states in a prototype tool called \QuantumSE{}
as a Julia~\cite{bezanson2017julia} package.
For an initialized symbolic configuration $\scfg{S}{\ssigma}{\srho}{P}{\varphi}$, \QuantumSE{} constructs a set of the terminal configurations  $\scfg{\terminate}{\ssigma'}{\srho'}{P'}{\varphi'}$ s.t. $\scfg{S}{\ssigma}{\srho}{P}{\varphi} \to^{*} \scfg{\terminate}{\ssigma'}{\srho'}{P'}{\varphi'}$ by the QSE rules in \cref{fig:qse_rules}. 
Then, for an assertion that we are interested in at a terminal configuration  $\scfg{\terminate}{\ssigma'}{\srho'}{P'}{\varphi'}$, e.g., $\varphi\land\varphi' \models \srho' = \srho$ for QEC decoders, \QuantumSE{} will call the Bitwuzla SMT solver~\cite{DBLP:conf/cav/NiemetzP23} to prove it\footnote{We choose Bitwuzla as it is a winner of the competition in the quantifier-free bit-vector logic category in SMT-COMP 2022 (see \url{https://smt-comp.github.io/2022/results/qf-bitvec-single-query}).}.
If the SMT solver is unable to prove it, the SMT solver will provide a counterexample, from which \QuantumSE{} will report that the program $S$ is buggy and generate the test case.

To evaluate \QuantumSE{}, we consider the following research questions (RQs):
\begin{itemize}
    \item \textbf{RQ1.} \textit{Is \QuantumSE{} scalable at finding bugs in QEC programs over different kinds of QEC codes?}
    \item \textbf{RQ2.} \textit{Can \QuantumSE{} outperform other tools?}
    \item \textbf{RQ3.} \textit{What factors affect the performance of \QuantumSE{}?}
\end{itemize}
All our experiments are carried out on a desktop with Intel(R) Core(TM) i7-9700 CPU @3.00GHz and 16G of RAM, running Ubuntu 22.04.2 LTS. 

\subsection{RQ1: Finding Bugs in QEC Programs}\label{sec:rq1}
To address RQ1, we selected three representative QEC codes, i.e., quantum repetition codes, Kitaev's toric codes~\cite{kitaev1997quantum,kitaev2003fault}, and quantum Tanner codes~\cite{leverrier2022quantum}.
The repetition codes and toric codes belong to surface codes~\cite{dennis2002topological}, a variant of which was implemented in Google's recent QEC experiment~\cite{Acharya2022SuppressingQE} (with 72 physical qubits).
The quantum Tanner codes, which follow the recent major breakthrough~\cite{panteleev2022asymptotically} of QEC codes, are pretty complicated.
\cref{fig:qec_repetition} shows one of the QEC programs evaluated in this experiment. It has a strange bug that it will fail to correct errors when $X$ errors occur at qubits $q_1,q_2,\ldots,q_{\lfloor \frac{n-1}{2}\rfloor}$.
A detailed description of the QEC codes and programs used in this experiment is available in \ifARXIV{\cref{sec:cases}}\else{Appendix~A of the extended version~\cite{fang2023symbolic}}\fi.

\begin{figure}[ht]

    \begin{subfigure}[b]{.485\linewidth}
        \centering
    \scalebox{.64}{\tikz{\node[fill=black!15,inner sep=0pt]{$\setlength{\arraycolsep}{1pt}
    \def\arraystretch{1.25}\begin{array}{ll}
        \qut{CNOT}{q_1,q_2}; \qmeasure{q_2}{s_1}; \qut{CNOT}{q_1,q_2}; & \text{// measure } Z_1Z_2\\
        \qut{CNOT}{q_2,q_3}; \qmeasure{q_3}{s_2}; \qut{CNOT}{q_2,q_3}; & \text{// measure } Z_2Z_3 \\
        \cdots \\
        \qut{CNOT}{q_{n-1},q_{n}}; \qmeasure{q_n}{s_{n-1}}; \qut{CNOT}{q_{n-1},q_{n}}; & \text{// measure } Z_{n-1}Z_n\\
        \qut{CNOT}{q_n,q_1}; \qmeasure{q_1}{s_n}; \qut{CNOT}{q_n,q_1}; & \text{// measure } Z_nZ_1\\
        \assign{\text{MWPM}(s_1,s_2,\ldots,s_n)}{r_1,r_2,\ldots,r_n}; & \text{// call MWPM}\\
        \qut{X[r_1==1]}{q_1}; & \text{// apply $X$ if $r_1=1$}\\
        \qut{X[r_2==1]}{q_2}; & \text{// apply $X$ if $r_2=1$} \\
        \cdots \\
        \qut{X[r_n==1]}{q_n}; & \text{// apply $X$ if $r_n=1$}\\
        \qut{X[r_1*r_2*\cdots *r_{\lfloor \frac{n-1}{2}\rfloor}==1]}{q_1} & \text{// a strange bug!}
    \end{array}$};}}

    \vspace*{-1mm}

    \caption{A QEC program with a strange bug for quantum repetition codes. The quantum repetition code with $n$ physical qubits has stabilizer checks $Z_1Z_2, Z_2Z_3,\ldots,Z_{n-1}Z_n$. After measuring these checks, we additionally measure the observable $Z_nZ_1$ for the convenience of calling MWPM.}\label{fig:qec_repetition}
    \end{subfigure}
    \hfill
    \begin{subfigure}[b]{.5\linewidth}
        \centering
    \tikzinput{finding_bugs}{figures/finding_bugs.tex}

    \vspace*{-3mm}

    \caption{Performance results of finding bugs in QEC programs by \QuantumSE{}. The runtime here is mainly consumed by the SMT solver, and its irregular trend is also due to performance anomalies of the solver in some special cases. See \S\ref{sec:rq3} for detailed analysis.}\label{fig:rq1}
    \end{subfigure}

    \vspace*{-2mm}

    \caption{An example of QEC programs evaluated in RQ1 and the performance results of RQ1.}
\end{figure}

\paragraph{Results.} Based on the construction of these QEC codes, we evaluated \QuantumSE{} on repetition codes with $50 j$ physical qubits for $1\leq j\leq 28$, toric codes with $2d^2$ physical qubits for $4\leq d\leq 27$, and quantum Tanner codes with $343k$  physical qubits for $1\leq k\leq 4$.
\cref{fig:rq1} shows the running times for \QuantumSE{} to find bugs in the QEC programs for these different sizes of QEC codes.
We can see that the running time does not tend to increase dramatically with the number of qubits.
\QuantumSE{} can analyze QEC programs with over 1000 qubits in a short period of time, exceeding the current physical experiments that use about 280 qubits.
Thus, we believe that QSE will be useful in future QEC experiments for debugging QEC programs.

\subsection{RQ2: Comparing with Other Tools}\label{sec:rq2}

\subsubsection*{Comparing with other SE tools}
As we will discuss in related work (see \S\ref{sec:related_work_2}), none of the existing work on SE can handle QEC programs.
Nevertheless, since \texttt{SymQV}~\cite{bauer2023symqv} provides an easy-to-use implementation, we compare \QuantumSE{} with it on a simple class of circuits that only contains $H$ and $X$ gates (see \cref{fig:simple_circuit}).
We use \texttt{SymQV} and \QuantumSE{} respectively to check whether the input states and output states of \cref{fig:simple_circuit} are equivalent.
Since \texttt{SymQV} uses Z3 as the SMT solver, we also use Z3 in this comparison.
\begin{figure}[ht]
    \centering
    \vspace*{-3mm}

    \begin{subfigure}[b]{.4\linewidth}
        \centering
        \scalebox{0.8}{\begin{quantikz}[row sep=2mm]
            \qw & \gate{H} & \gate{X} & \gate{X} & \gate{H} & \qw \\
            \qw & \gate{H} & \gate{X} & \gate{X} & \gate{H} & \qw \\[-1mm]
             \vdots & \vdots & \vdots & \vdots & \vdots & \vdots \\
            \qw & \gate{H} & \gate{X} & \gate{X} & \gate{H} & \qw
        \end{quantikz}}
        \caption{A simple class of quantum circuits that contains four layers of $H, X, X, H$ gates, which are canceled out into identity.}\label{fig:simple_circuit}
    \end{subfigure}
    \hfil
    \begin{subfigure}[b]{.59\linewidth}
    \centering
    \tikzinput{comp_simple_circuit}{figures/comp_simple_circuit.tex}
    \vspace{-2mm}
    \caption{Performance results of \texttt{SymQV} and \QuantumSE{}.}\label{fig:comp_simple_circuit}
    \end{subfigure}

    \vspace*{-2mm}
    \caption{Comparison between \texttt{SymQV} and \QuantumSE{}. \QuantumSE{}'s runtime is fast, while \texttt{SymQV}'s is slower even though we set up \texttt{SymQV} with product state and overapproximation.}
    \label{fig:comp}
    \vspace*{-4.5mm}
\end{figure}

\paragraph{Results.} The runtime of \texttt{SymQV} and \QuantumSE{} are presented in \cref{fig:comp_simple_circuit}.
The main reason for the results in \cref{fig:comp_simple_circuit} is that \QuantumSE{} is based on stabilizers, while \texttt{SymQV} is based on quantum states, which has a much larger dimension that grows exponentially with the number of qubits.
But the strange thing is that we have set up the product state approximation for \texttt{SymQV}, which should not take too long as in \cref{fig:comp_simple_circuit}.
We believe there is room for optimization in implementing \texttt{SymQV}.

\subsubsection*{Comparing with state-of-the-art simulators}
Several other tools can handle QEC programs (circuits), but since the dimension of the state space of qubits grows exponentially with the number of qubits, the only existing work that can efficiently handle QEC circuits with over 1000 qubits are stabilizer-based simulators (see discussion in \S\ref{sec:related_work_1}).
However, finding bugs with the simulator requires considering all possible errors.
For example, in the QEC program of \cref{fig:qec_repetition} with $n=1000$, there are $\binom{1000}{499}\approx 10^{299}$ possible locations of $499$ $X$ errors, but the bug only occurs when $q_1,q_2,\ldots,q_{499}$ have $X$ errors, which is in one of these possibilities.
\emph{Even though the state-of-the-art stabilizer simulator, Stim~\cite{gidney2021stim}, is really fast at sampling measurement results, it is unrealistic to sample one case out of $10^{299}$ possibilities even with a supercomputer.}

To give Stim a fair shake, we have added the ability to sample stabilizer circuits' measurement results to \QuantumSE{}%
\footnote{%
After symbolic execution, we only need to substitute concrete values for all symbols (with corresponding probabilities) in the symbolic expressions of measurement results to achieve sampling without having to go through the circuit again.
}
and compare it to Stim in terms of sampling functionality.
A sample of a stabilizer circuit refer to the (random) measurement outcomes obtained by running the circuit once.
For samples of a stabilizer circuit, Stim and \QuantumSE{} first initialize a sampler and then use it to generate samples rapidly.
To conduct a comparison, we choose the benchmark of layered random interaction circuits used in Stim~\cite{gidney2021stim}, where Stim outperformed popular simulators such as Qiskit's stabilizer method~\cite{aleksandrowicz2019qiskit}, Cirq's Clifford simulator~\cite{omole2020cirq}, \citet{aaronson2004improved}'s \texttt{chp.c} and GraphSim~\cite{anders2006fast}.

\begin{figure}[ht]
    \centering
    \vspace*{-2mm}

    \tikzinput{benchmark_init}{figures/benchmark_init.tex}
    \hfil
    \tikzinput{benchmark_sampling}{figures/benchmark_sampling.tex}

    \vspace*{-3mm}

    \caption{Performance results of sampling layered random interaction circuits. Each circuit is made up of $n$ qubits with $n$ layers. Each layer randomly applies an $H$, $S$, and $I$ gate to each qubit, then samples $10$ pairing of the qubits to apply $CNOT$ gate, then samples $5\%$ of the qubits to measure in the computational basis. At the end of the circuit, each qubit is measured in the computational basis.}\label{fig:rq2}

    \vspace*{-4mm}
\end{figure}

\paragraph{Results.}
We present the comparisons of the time for \QuantumSE{} and Stim to initialize a sampler (i.e., the time to analyze the input circuit and create a sampler for generating the measurement results) and the time for \QuantumSE{} and Stim's samplers to generate 10,000 samples in \cref{fig:rq2}.
We can see that for $n > 600$, the sampler generated by \QuantumSE{} spends a shorter time than Stim's in generating samples, which means that \emph{\QuantumSE{} can exhibit a faster sampling speed than the state-of-the-art stabilizer simulator}.
We also notice that \QuantumSE{} spends more time than Stim to initialize the sampler.
The main reason here is that Stim is well optimized in initializing the sampler, while our \QuantumSE{} is not specifically designed for it.

\subsection{RQ3: Analyzing Performance Factors}\label{sec:rq3}
To analyze the factors that affect the performance of \QuantumSE{}, we count the running time of \QuantumSE{} debugging QEC programs in three separate stages:
\begin{itemize}
    \item \textbf{Init}: the initialization stage that prepares initial symbolic configurations for QSE. It takes time mainly to initialize the symbolic quantum states as in \cref{thm:adequacy} and to insert symbolic errors
    as in \cref{eg:symbolic_errors}.
    \item \textbf{QSE}: the quantum symbolic execution stage that performs symbolic execution of the quantum program. The time spent in this stage depends on how \QuantumSE{} maintains the symbolic configuration, especially its symbolic quantum state (symbolic stabilizer state).
    \item \textbf{SMT}: the SMT solver stage that calls an SMT solver to solve the assertions constructed by \QuantumSE{}. The time spent in this stage depends on the performance of the solver.  
\end{itemize}

\begin{figure}[ht]
    \centering
    \vspace*{-3mm}

    \tikzinput{stack_time}{figures/stack_time.tex}
    \vspace*{-3mm}
    \caption{The stack graph of time spent in three stages.}\label{fig:rq3}

    \vspace*{-4mm}
\end{figure}

\paragraph{Results.}
We chose the QEC programs for Toric codes in RQ1 (\S\ref{sec:rq1}) to analyze because of the nontrivial trend of Toric codes in the time curve (\cref{fig:rq1}) and enough number of data points available.
We show the time spent by \QuantumSE{} in the three stages in a stack graph, \cref{fig:rq3}, which allows us to see that the stage \textbf{SMT} is dominating the running time.
We can see that the time used by the SMT solver does not increase regularly with the number of qubits. We find that this is due to anomalies in the performance of the SMT solver in some special cases\footnote{We have tried many SMT solvers, including Z3, Z3++, Yices2, and cvc5. They also have similar problems, but the instances where the problem occurs are not the same instances that Bitwuzla encounters.}.
For example, the adopted SMT solver Bitwuzla spent $210.3$(s) in the case of $338$ qubits; however, if we change the configuration of Bitwuzla, it can solve it with only $11$(s).
This raises an interesting issue regarding improving the combination of QSE and SMT solvers for further research.
Additionally, we {observe} that the time share of stages \textbf{Init} and \textbf{QSE} is small and the trend of their time curves is smooth.
Thus, these two parts are very scalable and do not restrict the performance of \QuantumSE{}.

\section{Related Work}
\label{sec:related_work}

\subsection{Simulation-based analysis for quantum computing}\label{sec:related_work_1}
With the rapid development of quantum hardware, researchers have recently put a lot of effort into the simulation of quantum computation on  classical computers, and with it comes a wealth of quantum softwares~\cite{omole2020cirq,bergholm2018pennylane,aleksandrowicz2019qiskit,svore2018q,suzuki2021qulacs,luo2020yao}, which help us to test and debug quantum circuits and programs.

We first discuss the \emph{stabilizer-based simulation} that is most relevant to our work. It is well-known that stabilizer circuits can be efficiently simulated~\cite{gottesman1998heisenberg,aaronson2004improved,gidney2021stim,anders2006fast,krastanov}, where \citet{gottesman1998heisenberg}'s tableau algorithm and \citet{aaronson2004improved}'s improved tableau algorithm play a significant role because of the tableau representation of stabilizers and destabilizers.
Based on the tableau representation and algorithms, \citet{rall2019simulation} proposed the Pauli propagation that speeds up the rate of sampling Pauli noises, and it was adopted by Google's Stim~\cite{gidney2021stim}, which was used in Google's recent QEC experiments~\cite{Acharya2022SuppressingQE}.
However, \emph{even though Stim has excellent performance, it still cannot handle all possible errors for QEC programs as we discussed in \S\ref{sec:rq2}}.

It is worth noting that, starting from the tableau representation again, \citet{berent2022towards} proposed an SAT encoding for Clifford circuits and demonstrated the applicability of equivalence checking; \citet{schneider2023sat} used a similar SAT encoding for Clifford circuit synthesis.
However, 
\textit{all of these works only dealt with quantum circuits without mid-circuit measurements and control flows, and thus cannot be directly used in debugging QEC programs considered in this paper}.
{Additionally}, \citet{Rand2021} uses a type system to describe the stabilizer formalism elegantly. This type system provides efficient verification of quantum circuits. However, it needs to deal with all possible errors for QEC programs on a case-by-case analysis like a simulator.

There are also other techniques for simulation of quantum computation, e.g., tensor networks-based simulation~\cite{markov2008simulating,orus2019tensor,pan2022simulation}, (binary) decision diagrams-based simulation~\cite{niemann2015qmdds,vinkhuijzen2021limdd,hong2022tensor,sistla2022cflobdds}.
These simulation techniques are widely used in developing or testing quantum circuits and quantum programs~\cite{burgholzer2023tensor}.
\emph{But most works are limited to manipulating concrete data and do not introduce symbolic expressions like ours.}

\subsection{Symbolic techniques for quantum computing}\label{sec:related_work_2}
Due to the difficulty of simulation and analysis of quantum computation on classical computers,  symbolic techniques have already been employed to improve their efficiency and scalability.   
The current works can be roughly classified into the following two categories: 

\subsubsection*{Symbolic simulation of quantum computation}
Several work~\cite{sistla2023symbolic,sistla2022cflobdds,tsai2021bit} 
used symbolic expressions or symbolic Boolean
functions during the process of simulation of quantum computation to speed up the simulation time; for example, 
\citet{tsai2021bit} built a series of Boolean formulas between {binary decision diagrams} for state evolution to replace matrix-vector multiplication.
{Furthermore}, \citet{chen2023automata} used the tree automata~\cite{comon2008tree} to represent  (set of) quantum states and introduced symbolic update formulae of tree automata for quantum gates.
\citet{10.1145/3445814.3446750} converts the variational quantum circuit into logical formulas, in which the parameters of the variational quantum circuit are temporarily symbolized. Then with concrete values assigned, logical formulas provide an efficient sampling of quantum circuits.
\emph{Although they make use of symbolic expressions, these works focused on simulation, and 
the idea of symbolic execution was not introduced there}.

\subsubsection*{Symbolic execution of quantum circuits}
\citet{carette2023symbolic} used the algebraic normal form of Boolean functions to perform the symbolic execution of Hardamard-Toffoli quantum circuits,  which cannot express QEC.
Another work close to ours is \textsf{symQV}~\cite{bauer2023symqv}, in which 
$n$-qubit quantum states are represented with $2^n$ symbolic complex numbers $\ket{\psi} \coloneqq (\alpha_1,\alpha_2,\ldots,\alpha_{2^n})$ or product state $\ket{\psi} \coloneqq \otimes_{j=1}^n\ket{\psi_j}$, where $\ket{\psi_j}$ is encoded by $4$ symbols.
Their experiments demonstrated their applicability to 24 qubits. \emph{Although the target of \textsf{symQV} is the symbolic execution of quantum programs, it is actually designed for quantum circuits without control flows, and thus \textsf{symQV} cannot be directly used in the verification of QEC programs considered in this paper}.
{In addition}, {Giallar}~\cite{10.1145/3519939.3523431} introduced symbolic representation and execution for quantum circuits with $20$ rewrite rules to check the equivalence of quantum circuits, which also lack support for control flows; Quartz~\cite{10.1145/3519939.3523433} dealt with symbolic quantum circuits too, computing symbolic matrix representations of circuits to discover equivalent circuit transformations by SMT solvers.
However, similar to Giallar, Quartz does not support control flows.

\subsection{Verification and Analysis of Quantum Programs}\label{sec:related_work_3}
There is a rich literature on the verification and analysis of quantum programs.
We briefly discuss three categories of research relevant to our work.

\subsubsection*{Formal verification with program logic}
Based on quantum predicates of observables~\cite{d2006quantum}, \citet{ying2012floyd} proposed the first sound and relatively complete quantum Hoare logic (QHL) for quantum \textbf{while}-programs.
To simplify the verification, \citet{10.1145/3314221.3314584} used projectors as predicates and proposed a variant of QHL.
Since projectors may involve exponential complexity for QEC programs, several works suggested to use stabilizers as predicates and proposed corresponding variants of QHL~\cite{wu2021qecv,Sundaram2022}.
Their approaches share a similar idea with \citet{Rand2021}'s type system and still need to deal with all possible errors for QEC programs on a case-by-case verification.
Beside quantum predicates of observables, Qbricks~\cite{10.1007/978-3-030-72019-3_6} used parameterized path sum representations for quantum circuits and reasoned about the representation using quantifier-free predicate logic. It has good proof automation and deduction rules for reasoning on circuits, but currently lacks support for classical control flows.

\subsubsection*{Formal verification without program logic} The most representative work that does not use program logic is perhaps  QWIRE~\cite{10.1145/3009837.3009894,Rand_2018} and SQIR~\cite{10.1145/3434318,https://doi.org/10.4230/lipics.itp.2021.21}, both of which directly formalize the denotational semantics of quantum programs in terms of density matrices.
In this way, it seems that they may also encounter exponential complexity, when applied for verification and analysis of  QEC programs.

\subsubsection*{Assertion checking and debugging}
In addition to formal verification, an interesting  line of research is assertion checking for analyzing and debugging quantum programs.
Some papers use dynamic assertions~\cite{10.1145/3428218,10.1145/3373376.3378488} to detect bugs at run-time, which seems not suitable for QEC programs’ correctness before deployment.
Others use statistical assertions~\cite {10.1145/3307650.3322213}, which may require repeated simulations and are thus inefficient for the case of a large number of qubits.
To verify assertions more efficiently, \citet{10.1145/3453483.3454061} introduced a novel  idea of abstract interpretation that abstracts concrete quantum states into the intersection of small projections, which are closely related to the definition of stabilizer.
For instance, in their example, the abstract domain for GHZ is a stabilizer.
However, their approach also requires a case-by-case check for all possible errors in QEC programs.

\section{Conclusion and Future Directions}
\label{sec:conclusion}
In this paper, we presented a symbolic execution framework for quantum programs.
Within this framework, we introduced symbolic stabilizer states, which facilitate the efficient analysis of QEC programs.
We also developed a prototype tool based on this framework and demonstrated its effectiveness and efficiency. Issues for future research include: \begin{enumerate}[leftmargin=5mm]
\item \textit{Incorporating our QSE with the recent technique of probabilistic symbolic execution~\cite{susag2022symbolic,gehr2016psi}}: This will enable us to handle the analysis of random errors in QEC.
Specifically, it can be used to analyze the performance of QEC programs against random errors when we have verified their correctness against adversarial errors.
\item \textit{Improving the combination of QSE and SMT solvers}:
A key point here is to optimize the SMT solver for the specific assertions posed by QSE.
\item \textit{Exploring more applications of QSE and symbolic stabilizer states}: 
A possible application is the sampling task we have done in \S\ref{sec:rq2}, promising to perform better with the existing engineering efforts. Another is to use symbolic stabilizer states for assertions, which may make existing work possible to handle all adversarial errors and all logical states in QEC without enumeration.
\end{enumerate}

\begin{acks}
  We thank anonymous reviewers for helpful comments and suggestions that improved this paper.
  We are grateful to Craig Gidney for pointing out our previous inappropriate use of Stim.
  This work was partly supported by the \grantsponsor{}{National Key R\&D Program of China}{} under Grant No.~\grantnum{}{2023YFA1009403}.
\end{acks}

\section*{Data Availability Statement}
Our code is available at \url{https://github.com/njuwfang/QuantumSE.jl}.\\
An evaluated artifact~\cite{artifact} is available at \url{https://doi.org/10.5281/zenodo.10781381}.

\bibliographystyle{ACM-Reference-Format}
\bibliography{ref}

\appendix

\section{Case Studies}
\label{sec:cases}
This section introduces the case studies of QEC codes used in our evaluation (\S\ref{sec:evaluation}).
The selection of our case studies is to demonstrate  the effectiveness and efficiency of our proposed QSE with a prototype tool (\S\ref{sec:implementation}).
For a QEC code and a QEC program for this code, we utilize QSE to analyze whether this program can correct quantum errors.

Before introducing our case studies, we recall some notations used for a QEC code $C$.
\begin{itemize}
    \item The minimum distance $d_X$ (resp. $d_Z$) of $C$ is the minimum number of different locations of $X$ (resp. $Z$) operators that transform a state $\ket{\psi}\in C$ into another state $\ket{\phi} \neq \ket{\psi}$ in $C$.
    Therefore, $C$ can tolerate $X$ (resp. $Z$) errors at least $\lfloor \frac{d_X-1}{2}\rfloor$ (resp. $\lfloor \frac{d_Z-1}{2}\rfloor$) locations of physical qubits.
    \item $[[n,k,d]]$ describes the parameters of a QEC code: number of physical qubits $n$, number of logical qubits $k$ and minimum distance $d = \min\{d_X, d_Z\}$.
    We also write $[[n ,k, d_X=\_]]$ and $[[n ,k, d_Z=\_]]$ if we only focus on the minimum distance $d_X$ and $d_Z$, respectively.
\end{itemize}

\subsection{Repetition Codes}\label{sec:repetition_code}
Repetition codes may be the simplest class of QEC codes.
Through an encoding map
\[\ket{0}_L \mapsto \ket{0}^{\otimes n} = \ket{\underbrace{00\cdots 0}_{n \text{ times } 0}}, \qquad \ket{1}_L \mapsto \ket{1}^{\otimes n} = \ket{\underbrace{11\cdots 1}_{n \text{ times } 1}},\]
repetition code $C_n = \{\alpha \ket{0}^{\otimes n}+ \beta\ket{1}^{\otimes n}\}$ encodes $1$ logical qubit with $n$ physical qubits.
The $d_X$ of $C_n$ is $n$, while the $d_Z$ of $C_n$ is $1$.
Hence, the repetition code $C_n$ can correct $\lfloor \frac{n-1}{2}\rfloor$ locations of $X$ errors, but can't correct for any $Z$ errors.

\begin{figure}[ht]
    \centering
    \[
        \mathtikz[b]{
        \foreach \i in {1,2,3}{
            \node[circle, fill=qubitcolor, inner sep=2pt,label={$q_{\i}$}] (p\i) at (\i, 0) {};
            \node[circle, fill=qubitcolor, inner sep=2pt,opacity=0.4] (e\i) at (\i, -0.8) {};
        };
        \foreach \i/\l in {4/n-2,5/n-1,6/n}{
            \node[circle, fill=qubitcolor, inner sep=2pt, label={$q_{\l}$}] (p\i) at (\i+2, 0) {};
            \node[circle, fill=qubitcolor, inner sep=2pt,opacity=0.4] (e\i) at (\i+2, -0.8) {}; 
        };
        \foreach \i/\j in {1/2,2/3,4/5,5/6}{
            \path[draw,zcheckcolor!70, ultra thick] (p\i) -- (p\j);
            \path[draw,zcheckcolor!70, ultra thick, opacity=0.4] (e\i) -- (e\j);
        }
        \node[circle, fill=qubitcolor, inner sep=2pt, label={[opacity=0.5]:$q_{1}$}, opacity=0.5] (p11) at (7+2, 0) {};
        \node[circle, fill=qubitcolor, inner sep=2pt, opacity=0.2] (e11) at (7+2, -0.8) {};
        \foreach \n/\o in {p/1,e/0.4}{
            \path[draw,zcheckcolor!70, ultra thick, opacity=\o] (\n6) -- (\n11);
            \path[draw,zcheckcolor!70, ultra thick,opacity=\o] (\n3) -- +(0.5,0) edge[dashed] +(2.5,0);
            \path[draw,,zcheckcolor!70, ultra thick,opacity=\o] (\n4) -- +(-0.5,0);
        }
        \foreach \i in {3,5,6}{
            \node[circle, fill=red, inner sep=2pt, label={$X$}] at (e\i) {};
        }
        \foreach \i/\j in {2/3, 4/5, 6/11}{
            \path[draw, red, ultra thick] (e\i) -- (e\j);
        }
        \path[draw,red, ultra thick] (e3) -- +(0.9,0);
        }
        \quad \def\arraystretch{0.9}
        \begin{array}[b]{cl}
            \mathtikz{\node[circle, fill=qubitcolor, inner sep=2pt] {};} & \text{Physical qubit}\\
            \mathtikz{\path[draw,zcheckcolor!70, ultra thick] (0,0) -- (0.3,0);} & Z\text{-check} \\
            \mathtikz{\node[circle, fill=red, inner sep=2pt] {};} & X\text{ error} \\
            \mathtikz{\path[draw,red, ultra thick] (0,0) -- (0.3,0);} & \text{Syndrome}
        \end{array}
    \]
    \caption{\textbf{The repetition code.} Physical qubits, represented by meat-brown points, lie on a 1-D line with a period boundary (a circle). Syndrome measurements, represented by blue-gray edges, are observables $ZZ$ between each pair of adjacent points.}\label{fig:repetition_code}
\end{figure}

The repetition code $C_n$ is also a one-dimensional version of the surface code~\cite{dennis2002topological}.
For the example shown in \cref{fig:repetition_code} for $C_n$, $n$ physical qubits $q_1,q_2,\ldots, q_n$ lie on a line with period boundary and each edge between adjacent points $q_{j}, q_{j+1}$ defines a syndrome measurement, observable $Z_{j}Z_{j+1}$, called $Z$-check.
Consider the stabilizer of these $Z$-checks,
$\cS_n =\langle Z_1Z_2,Z_2Z_3,\ldots, Z_{n-1}Z_{n}\rangle$,
we can find that the space $V_{\cS_n}$ of states stabilized by $\cS_n$ is exactly the repetition code $C_n$.
This representation by stabilizer also provides an explicit decoding strategy: as demonstrated in \cref{fig:repetition_code}, $X$ errors (the red points in \cref{fig:repetition_code}) will affect the measurement outcome (syndrome) of $Z$-check if there is only one X error occurred at its vertices (boundary), see the red lines in \cref{fig:repetition_code}.
With these syndromes in hand, we can use the MWPM algorithm to find a minimum number of locations of $X$ operators that can produce the same syndromes.
Then, applying $X$ gates at these locations of qubits, we are done with the decoding process.

\begin{figure*}[ht]
    \centering
    \subfloat[A QEC program without bugs.\label{fig:repetition_decoder_nobug}]{
        \centering
        \scalebox{.7}{\tikz{\node[fill=black!15,inner sep=2pt]{$\begin{aligned}
            & \qut{CNOT}{q_1,q_2}; \qmeasure{q_2}{s_1}; \qut{CNOT}{q_1,q_2}; \\
            & \qut{CNOT}{q_2,q_3}; \qmeasure{q_3}{s_2}; \qut{CNOT}{q_2,q_3}; \\
            & \cdots \\
            & \qut{CNOT}{q_{n-1},q_{n}}; \qmeasure{q_n}{s_{n-1}}; \qut{CNOT}{q_{n-1},q_{n}}; \\
            & \qut{CNOT}{q_n,q_1}; \qmeasure{q_1}{s_n}; \qut{CNOT}{q_n,q_1}; \\
            & \assign{\text{MWPM}(s_1,s_2,\ldots,s_n)}{r_1,r_2,\ldots,r_n}; \\
            & \qut{X[r_1==1]}{q_1}; \\
            & \qut{X[r_2==1]}{q_2}; \\
            & \cdots \\
            & \qut{X[r_n==1]}{q_n} \\
            & \vphantom{\qut{X[r_1*r_2*\cdots *r_{\lfloor \frac{n-1}{2}\rfloor}==1]}{q_1}}
        \end{aligned}$};}}
    }
    \hspace{1mm}
    \subfloat[A QEC program with a strange bug.\label{fig:repetition_decoder_bug}]{
        \centering
        \scalebox{.7}{\tikz{\node[fill=black!15,inner sep=2pt]{$\begin{aligned}
            & \qut{CNOT}{q_1,q_2}; \qmeasure{q_2}{s_1}; \qut{CNOT}{q_1,q_2}; \\
            & \qut{CNOT}{q_2,q_3}; \qmeasure{q_3}{s_2}; \qut{CNOT}{q_2,q_3}; \\
            & \cdots \\
            & \qut{CNOT}{q_{n-1},q_{n}}; \qmeasure{q_n}{s_{n-1}}; \qut{CNOT}{q_{n-1},q_{n}}; \\
            & \qut{CNOT}{q_n,q_1}; \qmeasure{q_1}{s_n}; \qut{CNOT}{q_n,q_1}; \\
            & \assign{\text{MWPM}(s_1,s_2,\ldots,s_n)}{r_1,r_2,\ldots,r_n}; \\
            & \qut{X[r_1==1]}{q_1}; \\
            & \qut{X[r_2==1]}{q_2}; \\
            & \cdots \\
            & \qut{X[r_n==1]}{q_n}; \\
            & \qut{X[r_1*r_2*\cdots *r_{\lfloor \frac{n-1}{2}\rfloor}==1]}{q_1} 
        \end{aligned}$};}}
    }
    \subfloat[Comments.]{
        \centering
        \scalebox{.7}{\tikz{\node[inner sep=2pt]{$\begin{aligned}
            & \vphantom{ \qut{CNOT}{q_1,q_2}; \qmeasure{q_2}{s_1}; \qut{CNOT}{q_1,q_2};} \text{// measure } Z_1Z_2\\
            & \vphantom{\qut{CNOT}{q_2,q_3}; \qmeasure{q_3}{s_2}; \qut{CNOT}{q_2,q_3};} \text{// measure } Z_2Z_3\\
            & \cdots \\
            & \vphantom{\qut{CNOT}{q_{n-1},q_{n}}; \qmeasure{q_n}{s_{n-1}}; \qut{CNOT}{q_{n-1},q_{n}};} \text{// measure } Z_{n-1}Z_n\\
            & \vphantom{\qut{CNOT}{q_n,q_1}; \qmeasure{q_1}{s_n}; \qut{CNOT}{q_n,q_1};} \text{// measure } Z_nZ_1\\
            & \vphantom{\assign{\text{MWPM}(s_1,s_2,\ldots,s_n)}{r_1,r_2,\ldots,r_n};} \text{// call MWPM} \\
            & \vphantom{\qut{X[r_1==1]}{q_1};} \text{// apply $X$ if $r_1=1$} \\
            & \vphantom{\qut{X[r_2==1]}{q_2};} \text{// apply $X$ if $r_2=1$}\\
            & \cdots \\
            & \vphantom{\qut{X[r_n==1]}{q_n};} \text{// apply $X$ if $r_n=1$} \\
            & \vphantom{\qut{X[r_1*r_2*\cdots *r_{\lfloor \frac{n-1}{2}\rfloor}==1]}{q_1}} \text{// a strange bug!}
        \end{aligned}$}}}
    }
    \caption{Two QEC programs for repetition code $C_n$. After measuring $Z$-checks $Z_1Z_2, \ldots, Z_{n-1}Z_n$, we additionally measure the observable $Z_nZ_1$ for the convenience of calling MWPM.}\label{fig:repetition_decoder}
\end{figure*}

In \cref{fig:repetition_decoder}, we provide two QEC programs for repetition code $C_n$, both written in the programming language defined in \S\ref{sec:qse:lang}.
The first one in \cref{fig:repetition_decoder_nobug} can correct any $X$ errors within $\lfloor \frac{n-1}{2}\rfloor$ locations of $n$ qubits, while the second one in \cref{fig:repetition_decoder_bug} has a strange bug that it will fail to correct errors when $X$ error occurred at qubits $q_1,q_2,\ldots,q_{\lfloor\frac{n-1}{2}\rfloor}$.

\subsection{Toric Codes}\label{sec:toric_code}
The Kitaev's toric codes~\cite{kitaev1997quantum, kitaev2003fault} are two-dimensional surface codes, a variant of which was implemented in the recent Google's QEC experiment~\cite{Acharya2022SuppressingQE}.
As shown in \cref{fig:toric_code}, a toric code $T_d$ with parameters $[[2d^2, 2, d]]$ is represented by a $d\times d$ two-dimensional lattice with period boundaries, hence a torus and so $T_d$ is called a toric code.
This lattice has $2d^2$ edges, representing $2d^2$ physical qubits of $T_d$.
Each face of the lattice defines a $Z$-check, observable $ZZZZ$, on the four edges surrounding the face.
Each vertex of the lattice defines an $X$-check, observable $XXXX$, on the four edges that touch the vertex.
There are $d^2$ $Z$-checks $\bar{A}_{1},\bar{A}_2,\ldots,\bar{A}_{d^2}$ (resp. $X$-checks $\bar{B}_1,\bar{B}_2,\ldots,\bar{B}_{d^2}$) and any $d^2-1$ of them are independent.
Besides, these $Z$-checks and $X$ checks are commuting with each other.

\begin{figure}[ht]

    \[
        \begin{array}[c]{c}
            \scalebox{.85}{\begin{tikzpicture}[scale=0.8]
    \draw[step=1cm,very thick,qubitcolor] (0,0) grid (5,5);
    \foreach \i in {0,1,2,3,4}{
        \path[dashed,draw=qubitcolor!20,very thick] (\i, 5) -- (\i+1,5);
        \path[dashed,draw=qubitcolor!20,very thick] (5, \i) -- (5,\i+1);
    }
    \foreach \i/\j/\k in {2/1/below,3/2/left,4/1/above,3/0/right}{
        \path[draw,xcheckcolor!80!black,ultra thick] (\i+1,\j+1) -- node[opacity=1,text=xcheckcolor!80!black,\k=-2pt,pos=0.4] {$X$} (4,2);
    }
    \node[circle, fill=xcheckcolor!90, inner sep=2pt] at (4,2) {};
    \node[rectangle, fill=zcheckcolor!70, inner sep=11pt,draw=zcheckcolor!80!black,ultra thick] at (2.5, 3.5) {};
    \foreach \i/\j/\k/\l/\m in {1/2/1/3/left, 1/2/2/2/below, 2/3/1/3/above, 2/3/2/2/right}{
        \path (\i+1,\j+1) -- node[opacity=1,text=zcheckcolor!80!black, \m=-2pt] {$Z$} (\k+1,\l+1);
    }
    \node[left] at (0,0) {$A$};
    \node[left] at (0,5) {$B$};
    \node[right,opacity=0.5] at (5,0) {$A'=A$};
    \node[right,opacity=0.5] at (5,5) {$B'=B$};
    \begin{scope}[overlay]
        \node[below] at (0,0) {$C$};
        \node[below] at (5,0) {$D$};
        \node[above,opacity=0.5,xshift=4mm] at (0,5) {$C'=C$};
        \node[above,opacity=0.5,xshift=-3mm] at (5,5) {$D=D'$};
    \end{scope}
    \foreach \i in {0,...,4}{
        \path[draw=green!50!black, line width=2.5pt, opacity=0.7] (1,\i) -- (1,\i+1);
        \path[draw=green!50!black, line width=2.5pt, opacity=0.7] (\i,1) -- (\i+1,1);
    }
\end{tikzpicture}}
        \end{array}
        \quad
        \begin{array}[c]{c}
            \scalebox{.85}{\begin{tikzpicture}[scale=0.8]
    \draw[step=1cm,very thick,qubitcolor!80] (0,0) grid (5,5);
    \foreach \i in {0,1,2,3,4}{
        \path[dashed,draw=qubitcolor!20,very thick] (\i, 5) -- (\i+1,5);
        \path[dashed,draw=qubitcolor!20,very thick] (5, \i) -- (5,\i+1);
    }
    \path[draw=blue,line width=2.5pt] (2,1) -- (4,1);
    \foreach \i/\j in {1/1,2/0,2/2}{
        \path[draw,xcheckcolor!80!black,ultra thick] (\i,\j) -- (2,1);
    }
    \foreach \i/\j in {4/0,5/1,4/2}{
        \path[draw,xcheckcolor!80!black,ultra thick] (\i,\j) -- (4,1);
    }
    \node[rectangle, fill=zcheckcolor!70, inner sep=11pt,draw=zcheckcolor!80!black,ultra thick] at (0.5, 3.5) {};
    \node[rectangle, fill=zcheckcolor!70, inner sep=11pt,draw=zcheckcolor!80!black,ultra thick] at (1.5, 3.5) {};
    \path[draw=red,line width=2.5pt] (1,3) -- (1,4);
    \node[circle, fill=xcheckcolor!70, inner sep=2pt] at (2, 1) {};
    \node[circle, fill=xcheckcolor!70, inner sep=2pt] at (4, 1) {};
\end{tikzpicture}}
        \end{array}
        \quad \setlength\arraycolsep{2pt}
            \def\arraystretch{1.2} \begin{array}[c]{cl}
            \mathtikz{\path[draw=qubitcolor,very thick] (0,0) -- (0.4,0);} & \text{Physical qubit} \\
            \mathtikz{\path[draw=red,line width=2.5pt] (0,0) -- (0.4,0);} & X\text{ error} \\
            \mathtikz{\path[draw=blue,line width=2.5pt] (0,0) -- (0.4,0);} & Z\text{ error} \\
            \mathtikz{
            \foreach \i/\j in {2.8/0,-2.8/0,0/2.8,0/-2.8}{
                \path[draw=xcheckcolor!80!black,ultra thick] (0,0) -- +(\i mm,\j mm);
            }
            \node[circle, fill=xcheckcolor!70, inner sep=3pt] {};} & X\text{-check} \\
            \mathtikz{
            \node[rectangle, fill=zcheckcolor!70, draw=zcheckcolor!80!black, ultra thick, inner sep=5pt] {};} & Z\text{-check} \\
            \mathtikz{\path[draw=green!50!black,line width=2.5pt] (0,0) -- (0.4,0);} & \text{Logical operator } \bar{Z}
        \end{array}
    \]
    \caption{\textbf{The toric code.} Physical qubits, represented by meat-brown edges, lie on a 2-D lattice with period boundaries ($AB=A'B', CD=C'D'$, hence a torus). Syndrome measurements consist of two types of checks: the $Z$-checks, represented by blue-gray faces, are observables $ZZZZ$ defined on the four edges that surround the faces; the $X$-checks, represented by red-gray vertices, are observables $XXXX$ defined on the four edges that touch the vertices.}\label{fig:toric_code}
\end{figure}

In addition to two types of these checks, there are two logical operators $\bar{Z}_1$ and $\bar{Z}_2$ defined by tensor products of Pauli $Z$ operators along the fundamental nontrivial cycles of the torus (the horizontal and vertical green lines in \Cref{fig:toric_code}).
The toric code $T_d$ encodes the logical computational basis state $\ket{x_1,x_2}_L$, $x_1,x_2=0,1$, of $2$ logical qubits as a $2d^2$-qubit stabilizer state of the stabilizer 
\[\langle (-1)^{x_1}\bar{Z}_1, (-1)^{x_2}\bar{Z}_2, A_1,A_2,\ldots, A_{d^2-1}, B_1, B_2,\ldots, B_{d^2-1}\rangle\]
The code space of $T_d = V_{\langle A_1,A_2,\ldots, A_{d^2-1}, B_1, B_2,\ldots, B_{d^2-1}\rangle}$.

The $X$ errors and $Z$ errors that occurred at physical qubits can be detected by $Z$-checks and $X$-checks, respectively.
For example, in \cref{fig:toric_code}, the red edge indicates an $X$ error occurred, then after measuring all $Z$-checks, we will find that the measurement outcomes of two $Z$-checks that share a common edge with this $X$ error are both $1$, while all other checks' outcomes are $0$.
With these measurement outcomes, we can detect this $X$ error.
In fact, decoding toric codes can be done by decoding X errors and Z errors, respectively. The pattern is similar to decoding repetition codes by MWPM algorithm that we used in \S\ref{sec:repetition_code}.

For toric code $T_d$, we adopt a QEC program with a bug similar to \cref{fig:repetition_decoder_bug} in evaluation.

\subsection{Quantum Tanner Codes}
There is an important class of QEC codes known as quantum low-density parity check (qLDPC) codes~\cite{breuckmann2021quantum}, where the low-density means that each check only acts on a constant number of physical qubits and each physical qubit only participates in a constant number of checks.
For example, the toric codes mentioned in \S\ref{sec:toric_code} are also qLDPC codes.
The low-density nature of qLDPC codes allows people to build efficient decoding scheme, while finding a family of good qLDPC codes with parameters $[[n, k = \Theta(n), d = \Theta(n)]]$ has plagued people for more than 20 years.
Excitingly, a major breakthrough made by \citet{panteleev2022asymptotically} gives the construction of asymptotically good qLDPC codes and a series of works of good qLDPC codes have come out in the past two years~\cite{leverrier2022quantum,dinur2022good,leverrier2023efficient,gu2022efficient}.

As a case study, we choose the quantum Tanner codes~\cite{leverrier2022quantum} that defines Tanner codes~\cite{tanner1981recursive} on the left-right Cayley complex, the cornerstone of good qLDPC codes~\cite{panteleev2022asymptotically} and c${}^3$ locally testable codes~\cite{dinur2022locally}, to obtain quantum codes.
However, a large number of physical qubits is needed to make the quantum Tanner codes be good qLDPC codes\footnote{\citet{dinur2022good} noticed an explicit construction of Ramanujan graphs\cite{lubotzky1988ramanujan} for the left-right Cayley complex, from which good qLDPC codes can be derived. Referring to the construction in~\cite{lubotzky1988ramanujan}, we need at least 47040 physical qubits for the quantum Tanner codes to be good qLDPC codes.}.
Since the construction of quantum Tanner codes is general, we follow the code construction, but use a small left-right Cayley complex and small Hamming codes, to construct quantum Tanner codes.


\subsubsection{Explicit Instance of Quantum Tanner Codes}
\label{sec:tanner_code}
In this section, we provide explicit instances of quantum Tanner codes used in our evaluation (\S\ref{sec:evaluation}), along with a brief description of their construction.
Interested readers can refer to the works of \citet{leverrier2022quantum,leverrier2023efficient} for more details.

\begin{definition}
    [Left-right Cayley complex (quadripartite version)~\cite{leverrier2022quantum}]\label{def:cayley}~
    
    \begin{minipage}[c]{0.75\linewidth}
        Given a group $G$ and two sets of generators $A = A^{-1}$ and $B = B^{-1}$,
        a left-right Cayley complex $X$ (see the right figure) is made up of vertices $V$, $A$-edges $E_A$, $B$-edges $E_B$ and squares $Q$, where
        \begin{itemize}
            \item $V = V_{00}\cup V_{10}\cup V_{10} \cup V_{11}$ with $V_{ij} = G\times \{ij\}$ (black points);
            \item $E_A = \{\{(g,i0), (ag, i1) \mid g\in G, a\in A, i\in\{0,1\}\}$ (red edges);
            \item $E_B = \{\{(g,0j), (gb, 1j) \mid g\in G, b\in B, j\in\{0,1\}\}$ (blue edges);
            \item $Q = \{\{(g,00),(ag,01),(gb,10),(agb,11)\}\mid g\in G, a\in A, b\in B\}$ \\ (squares surrounded by red and blue edges).
        \end{itemize}
    \end{minipage}
    \begin{minipage}[c]{0.23\linewidth}
        \[\mathtikz{
            \node[draw,ellipse,minimum width=0.4cm,minimum height=1.5cm] at (0,0.5){};
            \node[draw,ellipse,minimum width=0.4cm,minimum height=1.5cm] at (1.6,0.5) {};
            \node[draw,ellipse,minimum width=0.4cm,minimum height=1.5cm] at (0.6,1.8) {};
            \node[draw,ellipse,minimum width=0.4cm,minimum height=1.5cm] at (2.2,1.8) {};
            \node[circle,inner sep=1pt,draw,fill,label={below:\small $g$}] (a) at (0,0.6) {};
            \node[circle,inner sep=1pt,draw,fill,label={above:\small $ag$}] (b) at (0.6,1.8-0.2) {};
            \node[circle,inner sep=1pt,draw,fill,label={below:\small $gb$}] (c) at (1.6,0.7) {};
            \node[circle,inner sep=1pt,draw,fill,label={above:\small $agb$}] (d) at (2.2,1.8-0.4) {};
            \draw[thick,red] (a) -- (b) (c) -- (d);
            \draw[thick,blue] (a) -- (c) (b) -- (d);
            \node[overlay] at (-0.5,0) {$V_{00}$};
            \node[overlay] at (1.6+0.5,0) {$V_{10}$};
            \node[overlay] at (0.6-0.5,1.8) {$V_{01}$};
            \node[overlay] at (2.2+0.5,1.8) {$V_{11}$};
        }\]
        \vspace{5pt}
    \end{minipage}
\end{definition}

For simplicity, we assume that $A$ and $B$ are of the same cardinality $\Delta$ as in \cite{leverrier2022quantum}.
For any set $S$, we define $\FF_2^{S}$ as a linear space over the finite field $\FF_2=\ZZ/2\ZZ$ with basis $\{\vec{e}_s\mid s\in S\}$.
It is clear that $\FF_2^S \cong \FF_2^{\abs{S}}$.
Let $V_{0} = V_{00}\cup V_{11}, V_{1} = V_{01}\cup V_{10}$, we define two linear maps as follows:
\begin{itemize}
    \item $H_0:\FF_2^{Q}\to \FF_2^{A\times B}\otimes \FF_2^{V_0}$ with $H_0(\vec{e}_{\{(g,00),(ag,01),(gb,10),(agb,11)\}}) = \vec{e}_{(a, b)}\otimes \vec{e}_{(g,00)} + \vec{e}_{(a^{-1}, b^{-1})}\otimes \vec{e}_{(agb,11)}$. \\ $H_0$ represents the adjacent relationship between squares in $Q$ and vertices in $V_0$.
    \item $H_1:\FF_2^{Q}\to \FF_2^{A\times B}\otimes \FF_2^{V_1}$ with $H_1(\vec{e}_{\{(g,00),(ag,01),(gb,10),(agb,11)\}}) = \vec{e}_{(a^{-1}, b)}\otimes \vec{e}_{(ag,01)} + \vec{e}_{(a, b^{-1})}\otimes \vec{e}_{(gb,10)}$. \\ $H_1$ represents the adjacent relationship between squares in $Q$ and vertices in $V_1$.
\end{itemize}
To make $H_0$ and $H_1$ have orthogonal properties on $A\times B$, we need two (classical) linear codes $C_A$ and $C_B$ of length $\Delta$ with their dual codes $C_A^{\bot}$ and $C_B^{\bot}$.
Let 
\begin{itemize}
    \item $H_A:\FF_2^\Delta\to\FF_2^{m_A}$ be the check matrix of $C_A$ with $m_A = \dim(C_A)$.
    \item $H_A^{\bot}:\FF_2^\Delta\to\FF_2^{\Delta-m_A}$ be the check matrix of $C_A^{\bot}$.
    \item $H_B:\FF_2^\Delta\to\FF_2^{m_B}$ be the check matrix of $C_B$ with $m_B = \dim(C_B)$.
    \item $H_B^{\bot}:\FF_2^\Delta\to\FF_2^{\Delta-m_B}$ be the check matrix of $C_B^{\bot}$.
\end{itemize}

Now, consider the following sequences of maps:
\[
    \begin{array}{cccccccl}
        \FF_2^{\Delta^2 \abs{G}} &\overto{H_Q}& \FF_2^{Q} &\overto{H_0}& \FF_2^{A\times B}\otimes \FF_2^{V_0} &\overto{H_{A\times B}\otimes H_{V_0}}& \FF_2^{\Delta^2}\otimes \FF_2^{2\Delta\abs{G}} &\overto{(H_A\otimes H_B)\otimes \Id} \FF_2^{2m_Am_B\Delta\abs{G}} \\
        \FF_2^{\Delta^2 \abs{G}} &\overto{H_Q}& \FF_2^{Q} &\overto{H_1}& \FF_2^{A\times B}\otimes \FF_2^{V_1} &\overto{H_{A\times B}\otimes H_{V_1}}& \FF_2^{\Delta^2}\otimes \FF_2^{2\Delta\abs{G}} & \overto{(H_A^{\bot}\otimes H_B^{\bot})\otimes \Id} \FF_2^{2(\Delta-m_A)(\Delta-m_B)\Delta\abs{G}}
    \end{array}
\]
where
\begin{itemize}
    \item $H_Q$ is an isomorphism map for $\FF_2^{\Delta^2 \abs{G}} \cong \FF_2^{Q}$;
    \item $H_{A\times B}$ is an isomorphism map for $\FF_2^{A\times B} \cong \FF_2^{\Delta^2}$;
    \item $H_{V_0}$ is an isomorphism map for $\FF_2^{V_0} \cong \FF_2^{2\Delta \abs{G}}$;
    \item $H_{V_1}$ is an isomorphism map for $\FF_2^{V_1} \cong \FF_2^{2\Delta \abs{G}}$;
\end{itemize}
we define
\begin{align*}
    H_X &\triangleq ((H_A\otimes H_B)\otimes \Id)\circ (H_{A\times B}\otimes H_{V_0}) \circ H_0 \circ H_Q &&: \FF_2^{\Delta^2\abs{G}} \to \FF_2^{2m_Am_B\Delta\abs{G}}, \\
    H_Z &\triangleq ((H_A^{\bot}\otimes H_B^{\bot})\otimes \Id)\circ (H_{A\times B}\otimes H_{V_1}) \circ H_1 \circ H_Q &&: \FF_2^{\Delta^2\abs{G}} \to \FF_2^{2(\Delta-m_A)(\Delta-m_B)\Delta\abs{G}}.
\end{align*}
Without causing ambiguity, we also use \[H_X \in \FF_2^{2m_Am_B\Delta\abs{G} \times \Delta^2\abs{G}}, H_Z \in \FF^{2(\Delta-m_A)(\Delta-m_B)\Delta\abs{G} \times \Delta^2\abs{G}}\]
for their matrix representations.
According to~\cite{leverrier2022quantum}, we can prove that $H_XH_Z^{\intercal} = 0$, therefore $(H_X, H_Z)$ defines a CSS code~\cite{calderbank1996good,steane1996error}, which is a specific instance of stabilizer code. In particular:
\begin{itemize}
    \item The column indexes of $H_X,H_Z$ represent a total of $\Delta^2\abs{G}$ physical qubits.
    \item Each row $\vec{x} = (x_1,x_2,\ldots,x_{\Delta^2\abs{G}}) \in \FF_2^{\Delta^2\abs{G}} $ of $H_X$ defines an $X$-check 
    \[P_X(\vec{x}) = X^{x_1}\otimes X^{x_2}\otimes \cdots \otimes X^{x_{\Delta^2\abs{G}}},\]
    where $X^0$ is Pauli $I$, $X^1$ is Pauli $X$.
    \item Each row $\vec{z} = (z_1,z_2,\ldots,z_{\Delta^2\abs{G}}) \in \FF_2^{\Delta^2\abs{G}} $ of $H_Z$ defines a $Z$-check
    \[P_Z(\vec{z}) = Z^{z_1}\otimes Z^{z_2}\otimes \cdots \otimes Z^{z_{\Delta^2\abs{G}}},\]
    where $Z^0$ is Pauli $I$, $Z^1$ is Pauli $Z$.
    \item The quantum Tanner code $C_{G,A,B,C_A,C_B}$ is the space of states stabilized by these checks.
\end{itemize}
Using these checks, we provide a QEC program for the quantum Tanner code $C_{G,A,B,C_A,C_B}$ in pseudocode as follows.
\begin{algorithm}[h]
	\caption{Decoding quantum Tanner code $C_{G,A,B,C_A,C_B}$}
	\label{alg:path_filter}
	\begin{algorithmic}[1]
        \State{$\vec{S}_X \gets$ An empty $2m_Am_B\Delta\abs{G} \times 1$ vector}
        \State{$\vec{S}_Z \gets$ An empty $2(\Delta-m_A)(\Delta-m_B)\Delta\abs{G} \times 1$ vector}
        \For{$j \gets 1,\cdots, 2m_Am_B\Delta\abs{G}$} \Comment{Syndrome measurements of $X$-checks}
            \State{$\vec{x} \gets \mathrm{Row}(H_X, j)$}
            \State{$\vec{S}_X[j] \gets $ measure $X$-check $P_X(\vec{x})$}
        \EndFor
        \For{$j \gets 1,\cdots, 2(\Delta-m_A)(\Delta-m_B)\Delta\abs{G}$} \Comment{Syndrome measurements of $Z$-checks}
            \State{$\vec{z} \gets \mathrm{Row}(H_Z, j)$}
            \State{$\vec{S}_Z[j] \gets $ measure $Z$-check $P_Z(\vec{z})$}
        \EndFor
        \State{$\vec{R}_X, \vec{R}_Z \gets \textsc{DecodeSyndromes}(\vec{S}_X, \vec{S}_Z)$} \Comment{Call Algorithm 1 in~\cite{leverrier2023efficient}}
        \For{$j \gets 1,\cdots, \Delta^2\abs{G}$}
            \If{$\vec{R}_X[j] = 1$}
                \State{Perform $X$ gate at the $j$-th qubit}
            \EndIf
            \If{$\vec{R}_Z[j] = 1$}
                \State{Perform $Z$ gate at the $j$-th qubit}
            \EndIf
        \EndFor
	\end{algorithmic}
\end{algorithm}

For external call of \textsc{DecodeSyndromes}, the condition $C_{\textsc{DecodeSyndromes}}(\vec{S}_X, \vec{S}_Z, \vec{R}_X, \vec{R}_Z)$ is defined as
\[ (\vec{S}_X == H_X \vec{R}_Z)\land(\lvert\vec{R}_Z\rvert_{\text{Hamming}} \leq n_Z) \land (\vec{S}_Z == H_Z\vec{R}_X)\land(\lvert\vec{R}_X\rvert_{\text{Hamming}} \leq n_X), \]
where $n_X$ is the number of injected $X$ errors and $n_Z$ is the number of injected $Z$ errors.

\begin{example}
    [Explicit Instances used in evaluation (\S\ref{sec:evaluation})]
    Let $m,k\in\ZZ$, the $G, A, B, C_A$ and $ C_B$ required by the construction are chosen as follows: 
    \begin{itemize}
        \item $G = (\ZZ_{k7^m}, +)$.
        \item $A = B = k7^{m-1} \cdot \{0,1,2,3,4,5,6\}$ with $\Delta = 7$.
        \item $C_A$ is the $[7,4,3]$ Hamming code.
        \item $C_B$ is the $[7,3,3]$ dual code of the$[7,4,3]$ Hamming code.
    \end{itemize}
    For our evaluation, we have chosen the values of $(m,k) = (1,1), (1,2), (1,3)$, and $(1,4)$. These values result in a total number of physical qubits corresponding to each code of $\Delta^2\abs{G} = k7^{m+2} = 343, 686, 1029$, and $1372$, respectively.
\end{example}

\section{Prototype Implementation}
\label{sec:implementation}
We implemented our  general QSE framework together with the special support of symbolic stabilizer states in a prototype tool called \QuantumSE{} as a Julia~\cite{bezanson2017julia} package.

\subsubsection*{Quantum DSL}
To conveniently write QEC programs, e.g., the two programs in \cref{fig:repetition_decoder}, in Julia, we implement a quantum DSL with more features, including for-loops, list comprehension, and subroutine call, beyond the quantum programs defined in \S\ref{sec:qse:lang}.
Thanks to Julia's meta-programming support, \QuantumSE{} can handle these features by Julia's native statements.
So, only the QSE rules in \cref{fig:qse_rules} remain to consider for this quantum DSL.

\subsubsection*{Symbolic stabilizer states}
Currently, \QuantumSE{} only supports using symbolic stabilizer states as symbolic quantum states, but because of the multiple dispatch of Julia, we can easily extend it to other symbolic quantum states in the future.

We implemented the unitaries and measurements over symbolic stabilizer states in \QuantumSE{} based on the high-performance simulator, \texttt{chp.c}, of stabilizer circuits by \citet{aaronson2004improved}.
The basic idea of simulating stabilizer circuits is using a tableau to track the dynamic of stabilizer states' generators.
The tableau consists of binary variables $x_{jk}, z_{jk}$ with $x_{jk}z_{jk}=00,10,01,11$ means that the $j$-th generator contains $I, X, Z, Y$, respectively, at the $k$-th qubit; and variables $r_j=0,1,2,3$ means that the phase of the $j$-th generator is $(i)^{r_j}$.
For $n$-qubits stabilizer circuits, this idea offers us a complexity of $\mathcal{O}(n)$ to perform single-qubit Clifford gates and CNOT gate, and a complexity $\mathcal{O}(n^3)$ to perform computational basis measurements~\cite{gottesman1998heisenberg}.
\citet{aaronson2004improved} further reduced the complexity of measurements to $\mathcal{O}(n^2)$ by an improved tableau algorithm with destabilizers and provided its corresponding implementation \texttt{chp.c}.
By symbolizing the phase part $r_i$ of the improved tableau algorithm (\texttt{chp.c}), our \QuantumSE{} then supports symbolic stabilizer states.
We should point out that here the (symbolic) phases do not affect the control flow of the improved tableau algorithm, thus \QuantumSE{} can handle the symbolic stabilizer states with the same complexity of \texttt{chp.c}.
This is also an important reason why we define such symbolic stabilizer states in \cref{def:sss}.

Since \texttt{chp.c} is a C program, one may wonder why don't we use the classic SE tools, e.g., the KLEE~\cite{cadar2008klee}, to run \texttt{chp.c}?
This can certainly be done, but the main issue is that the efficiency of directly using the classic SE tool is not high enough, even if applying the technique of \emph{concolic execution}~\cite{sen2005cute} that executes the program with concrete variables.
We have tried to use KLEE to run \texttt{chp.c} with only concrete variables, but the efficiency is not good.
This is because KLEE interprets programs rather than running them directly on native machines.
Interestingly, another approach of classical SE based on the compilation, e.g., the SymCC~\cite{poeplau2020symbolic}, could provide a more efficient SE than the interpreter approach of KLEE.
But we didn't adopt it because it required a lot of engineering effort to integrate it into our framework.

In addition, there are some more efficient implementations for Clifford circuits than \texttt{chp.c}, e.g., QuantumClifford.jl~\cite{krastanov} and Google's Stim~\cite{gidney2021stim} based on tableau algorithm with some theoretical and engineering optimizations;
and GraphSim~\cite{anders2006fast} based on graph-state representation which may provide complexity of $\mathcal{O}(n\log{n})$ for measurements in some special cases.
However, as we see in evaluation (\S\ref{sec:rq3}), the efficiency of the improved tableau algorithm and \texttt{chp.c} is good enough, and the bottleneck of \QuantumSE{}'s efficiency depends on the SMT solver.

\subsubsection*{External call}
The external call of a classical function $F$ in QSE rules (\cref{fig:qse_rules}) requires the user to explicitly give the condition satisfied by the input and output of the function $F$.
With \QuantumSE{}, the user needs to write down the conditions wrapped by a function in Julia.
Then, during quantum SE, \QuantumSE{} will call this function to obtain the conditions.

\subsubsection*{SMT solvers}
\QuantumSE{} uses bit-vector theory from SMT-LIB~\cite{BarFT-SMTLIB} to model symbolic variables related to symbolic stabilizer states.
During quantum SE, \QuantumSE{} uses a wrapper, Z3.jl, of the Z3 SMT solver~\cite{moura2008z3} to manipulate symbolic expressions.
Finally, when solving the assertions, \QuantumSE{} calls the Bitwuzla SMT solver~\cite{DBLP:conf/cav/NiemetzP23}\footnote{We choose Bitwuzla as it is a winner of the competition in the quantifier-free bit-vector logic category in SMT-COMP 2022 (see \url{https://smt-comp.github.io/2022/results/qf-bitvec-single-query}).}.

\section{Omitted Proofs in Section~\ref{sec:soundness}}\label{app:soundess_prf}~

\correctnessut*
\begin{proof} Notice the following facts:
    \begin{itemize}
        \item With operational semantics in \cref{fig:operational} and $\cfg{\qut{U}{\bar{q}}}{\sigma}{\rho} \to \cfg{\terminate}{\sigma'}{\rho'}$, we have
        $\sigma' = \sigma, \rho' = U_{\bar{q}}\rho U_{\bar{q}}^{\dagger}$;
        \item With the definition of $ut$ in \cref{def:operations}, we have $ut(U, \bar{q}, \srho) = U_{\bar{q}}\srho U_{\bar{q}}^{\dagger}$;
        \item With $(\sigma, \rho) \models (\ssigma, \srho)$ and \cref{def:satisfaction}, there is a valuation $V$ such that $\sigma = V(\ssigma), \rho = V(\srho)$ and $V(\varphi) = \texttt{true}$.
    \end{itemize}
    Then we have \begin{align*}
        V(\ssigma) &= \sigma = \sigma',\\
        V(ut(U, \bar{q}, \srho)) &= V(U_{\bar{q}}\srho U_{\bar{q}}^{\dagger}) = U_{\bar{q}}V(\srho) U_{\bar{q}}^{\dagger} = U_{\bar{q}}\rho U_{\bar{q}}^{\dagger} = \rho', \\
        V(\varphi) &= \texttt{true}.
    \end{align*}
    Therefore, by \cref{def:satisfaction}, we have \[(\sigma', \rho') \models_{\varphi} (\ssigma, ut(U, \bar{q}, \srho)).\]

\end{proof}

\correctnessm*
\begin{proof}
    Notice the following facts:
    \begin{itemize}
        \item With the definition of $m$ in \cref{def:operations}, we have \(\srho'(s) = \frac{\ketbra[q]{s}{s}\srho\ketbra[q]{s}{s}}{\tr(\ketbra[q]{s}{s}\srho)};\)
        \item With $(\sigma, \rho) \models (\ssigma, \srho)$ and \cref{def:satisfaction}, there is a valuation $V$ such that $\sigma = V(\ssigma), \rho = V(\srho)$ and $V(\varphi) = \<true>$.
    \end{itemize}
    The goal is divided into the two cases of the operational semantics of measurement statement in \cref{fig:operational}:
    \begin{enumerate}
        \item $\cfg{\qmeasure{q}{c}}{\sigma}{\rho} \overto{\tr(\ketbra[q]{0}{0}\rho)} \cfg{\terminate}{\sigma[0/c]}{\frac{\ketbra[q]{0}{0}\rho\ketbra[q]{0}{0}}{\tr(\ketbra[q]{0}{0}\rho)}}$.
        In this case, we have \[\sigma' = \sigma[0/c], \quad \rho' = \frac{\ketbra[q]{0}{0}\rho\ketbra[q]{0}{0}}{\tr(\ketbra[q]{0}{0}\rho)}.\]
        Let $V' = V[s\mapsto 0]$, we have $V'(\ssigma) = V(\ssigma) = \sigma, V'(\srho) = V(\srho) = \rho, V'(\varphi) = V(\varphi) = \<true>$ since $s$ is a newly introduced symbol that does not appear in $\ssigma,\srho$ and $\varphi$, then
        \begin{align*}
            V'(\ssigma[s/c]) &= V'(\ssigma)[0/c] = \sigma[0/c] = \sigma', \\
            V'(\srho'(s)) &= V'\left(\frac{\ketbra[q]{s}{s}\srho\ketbra[q]{s}{s}}{\tr(\ketbra[q]{s}{s}\srho)}\right) = \frac{\ketbra[q]{0}{0}V'(\srho)\ketbra[q]{0}{0}}{\tr(\ketbra[q]{0}{0}V'(\srho))} = \frac{\ketbra[q]{0}{0}\rho\ketbra[q]{0}{0}}{\tr(\ketbra[q]{0}{0}\rho)} = \rho',\\
            V'(\varphi) &= \<true>.
        \end{align*}
        Therefore, by \cref{def:satisfaction}, we have \[(\sigma', \rho') \models_{\varphi} (\ssigma[s/c], \srho'(s)).\]
        \item $\cfg{\qmeasure{q}{c}}{\sigma}{\rho} \overto{\tr(\ketbra[q]{1}{1}\rho)} \cfg{\terminate}{\sigma[1/c]}{\frac{\ketbra[q]{1}{1}\rho\ketbra[q]{1}{1}}{\tr(\ketbra[q]{1}{1}\rho)}}$.Similar to (1), we only need to take $V' = V[s\mapsto 1]$.
    \end{enumerate}
\end{proof}

\soundness*
\begin{proof}
    With $(\sigma, \rho)\models_{\varphi} (\ssigma, \srho)$ and \cref{def:satisfaction}, there is a valuation $V$ such that $\sigma = V(\ssigma), \rho = V(\srho), V(\varphi) = \texttt{true}$.
    We then prove the theorem by induction through the program structure.
    \begin{itemize}
        \item $S \equiv \assign{e}{x}$. We have \[\cfg{\assign{e}{x}}{\sigma}{\rho} \to \cfg{\terminate}{\sigma[\sigma(e)/x]}{\rho} \] and \[ \scfg{\assign{e}{x}}{\ssigma}{\srho}{P}{\varphi} \to \scfg{\terminate}{\ssigma[\ssigma(e)/x]}{\srho}{P}{\varphi},\]
        which imply \[ \sigma' = \sigma[\sigma(e)/x],\quad \rho' = \rho \] and \[ \ssigma' = \ssigma[\ssigma(e)/x],\quad \srho' = \srho,\quad \varphi' = \varphi.\]
        Then,
        \begin{align*}
            V(\ssigma') &= V(\ssigma[\ssigma(e)/x]) = V(\ssigma)[V(\ssigma(e))/x] = V(\ssigma)[V(\ssigma)(e)/x] = \sigma[\sigma(e)/x] = \sigma', \\
            V(\srho') &= V(\srho) = \rho = \rho', \\
            V(\varphi') &= V(\varphi) = \texttt{true}.
        \end{align*}
        Therefore, with \cref{def:satisfaction}, we have $(\sigma',\rho') \models_{\varphi'} (\ssigma', \srho')$.
        \item $S\equiv \assign{F(\bar{x})}{\bar{y}}$. We have
        \[\cfg{\assign{F(\bar{x})}{\bar{y}}}{\sigma}{\rho} \to \cfg{\terminate}{\sigma[F(\sigma(\bar{x}))/\bar{y}]}{\rho} \] and \[ \scfg{\assign{F(\bar{x})}{\bar{y}}}{\ssigma}{\srho}{P}{\varphi} \to \scfg{\terminate}{\ssigma[\bar{s}_{\bar{y}}/\bar{y}]}{\srho}{P}{\varphi \land C_{F}(\ssigma(\bar{x}), \bar{s}_{\bar{y}})},\]
        which imply \[ \sigma' = \sigma[F(\sigma(\bar{x}))/\bar{y}],\quad \rho' = \rho \] and \[ \ssigma' = \ssigma[\bar{s}_{\bar{y}}/\bar{y}],\quad \srho' = \srho,\quad \varphi' = \varphi \land C_{F}(\ssigma(\bar{x}), \bar{s}_{\bar{y}}).\]
        Let $V' = V[\bar{s}_{\bar{y}}\mapsto F(\sigma(\bar{x}))]$, we have $V'(\ssigma) = V(\ssigma) = \sigma, V'(\srho) = V(\srho) = \rho, V'(\varphi) = V(\varphi) = \texttt{true}$ since $\bar{s}_{\bar{y}}$ are newly introduced symbols that do not appear in $\ssigma, \srho$ and $\varphi$, then
        \begin{align*}
            V'(\ssigma') &= V'(\ssigma[\bar{s}_{\bar{y}}/\bar{y}]) = V'(\ssigma)[F(\sigma(\bar{x}))/\bar{y}] = \sigma[F(\sigma(\bar{x}))/\bar{y}] = \sigma' \\
            V'(\srho') &= \rho = \rho' \\
            V'(\varphi') &= V'(\varphi \land C_{F}(\ssigma(\bar{x}), \bar{s}_{\bar{y}})) = V'(\varphi) \land V'(C_{F}(\ssigma(\bar{x}), \bar{s}_{\bar{y}})) \\
            &= \texttt{true} \land C_F(V'(\ssigma)(\bar{x}),F(\sigma(\bar{x}))) \\
            &= \texttt{true} \land C_F(\sigma(\bar{x}),F(\sigma(\bar{x}))) \\
            & = \texttt{true} \land \texttt{true} = \texttt{true} \tag{see caption of \cref{fig:qse_rules}}
        \end{align*}
        Therefore, with \cref{def:satisfaction}, we have $(\sigma',\rho') \models_{\varphi'} (\ssigma', \srho')$.
        \item $S \equiv \qut{U}{\bar{q}}$. We have
        \[\scfg{\qut{U}{\bar{q}}}{\ssigma}{\srho}{P}{\varphi} \to \scfg{\terminate}{\ssigma}{ut(U,\bar{q},\srho)}{P}{\varphi}\]
        which imply
        \[ \ssigma' = \ssigma,\quad \srho' = ut(U,\bar{q},\srho),\quad \varphi' = \varphi.\]
        With $(\sigma, \rho)\models_{\varphi} (\ssigma, \srho)$ and \cref{lem:ut}, we can see $(\sigma', \rho')\models_{\varphi} (\ssigma, ut(U,\bar{q},\srho))$, then
        \[(\sigma', \rho')\models_{\varphi'} (\ssigma', \srho').\]
        \item $S \equiv \qmeasure{q}{c}$. We have
        \[\scfg{\qmeasure{q}{c}}{\ssigma}{\srho}{P}{\varphi} \to \scfg{\terminate}{\ssigma[s/c]}{\srho'(s)}{P\cup\{(s,p(s))\}}{\varphi}, \]
        where $(s,p(s),\srho'(s)) = m(q,\srho)$. Then 
        \[\ssigma' = \ssigma[s/c], \quad \srho' = \srho'(s), \quad \varphi' = \varphi\]
        With $(\sigma, \rho)\models_{\varphi} (\ssigma, \srho)$ and \cref{lem:m}, we can see $(\sigma', \rho')\models_{\varphi} (\ssigma[s/c], \srho'(s))$, then
        \[(\sigma', \rho')\models_{\varphi'} (\ssigma', \srho').\]
        \item $S\equiv S_1;S_2$. We have 
        \[\inference{\cfg{S_1}{\sigma}{\rho} \overto{p} \cfg{S_1'}{\sigma'}{\rho'}}{\cfg{S_1;S_2}{\sigma}{\rho} \overto{p} \cfg{S_1';S_2}{\sigma'}{\rho'}}\]
        and \[\inference{\scfg{S_1}{\ssigma}{\srho}{P}{\varphi} \to \scfg{S_1'}{\ssigma'}{\srho'}{P'}{\varphi'}}{\scfg{S_1;S_2}{\ssigma}{\srho}{P}{\varphi} \to \scfg{S_1';S_2}{\ssigma'}{\srho'}{P'}{\varphi'}}\]
        Suppose $\cfg{S_1}{\sigma}{\rho} \overto{p} \cfg{S_1'}{\sigma'}{\rho'}$ and $\scfg{S_1}{\ssigma}{\srho}{P}{\varphi} \to \scfg{S_1'}{\ssigma'}{\srho'}{P'}{\varphi'}$.
        By the inductive hypothesis, we can see \((\sigma', \rho')\models_{\varphi'} (\ssigma', \srho')\), which is the exact goal we need to prove for $S_1;S_2$.
        \item $S\equiv \qqif{b}{S_1}{S_2}$. Consider the following two cases:
        \begin{enumerate}
            \item $\sigma\models b$. In this case, we have \[ \cfg{\qqif{b}{S_1}{S_2}}{\sigma}{\rho} \to \cfg{S_1}{\sigma}{\rho}\]
            and \[ \scfg{\qqif{b}{S_1}{S_2}}{\ssigma}{\srho}{P}{\varphi} \to \scfg{S_1}{\ssigma}{\srho}{P}{\varphi \land \ssigma(b)},\]
            which imply
            \[\sigma' = \sigma, \quad \rho' = \rho \]
            and \[ \ssigma' = \ssigma, \quad \srho' = \srho,\quad \varphi' = \varphi \land \ssigma(b).\]
            Then,
            \begin{align*}
                V(\ssigma') &= V(\ssigma) = \sigma = \sigma', \\
                V(\srho') &= V(\srho) = \rho = \rho', \\
                V(\varphi') &= V(\varphi \land \ssigma(b)) = V(\varphi) \land V(\ssigma(b)) = \<true> \land V(\ssigma)(b) = \sigma(b) = \<true>.
            \end{align*}
            Therefore, with \cref{def:satisfaction}, we have $(\sigma', \rho') \models_{\varphi'} (\ssigma', \srho')$.
            \item $\sigma \models \neg b$. In this case, we have \[ \cfg{\qqif{b}{S_1}{S_2}}{\sigma}{\rho} \to \cfg{S_2}{\sigma}{\rho}\]
            and \[ \scfg{\qqif{b}{S_1}{S_2}}{\ssigma}{\srho}{P}{\varphi} \to \scfg{S_2}{\ssigma}{\srho}{P}{\varphi \land \neg\ssigma(b)},\]
            which imply
            \[\sigma' = \sigma, \quad \rho' = \rho \]
            and \[ \ssigma' = \ssigma, \quad \srho' = \srho,\quad \varphi' = \varphi \land \neg\ssigma(b).\]
            Then,
            \begin{align*}
                V(\ssigma') &= V(\ssigma) = \sigma = \sigma', \\
                V(\srho') &= V(\srho) = \rho = \rho', \\
                V(\varphi') &= V(\varphi \land \neg\ssigma(b)) = V(\varphi) \land V(\ssigma(\neg b)) = \<true> \land V(\ssigma)(\neg b) =\sigma(\neg b) = \<true>.
            \end{align*}
            Therefore, with \cref{def:satisfaction}, we have $(\sigma', \rho') \models_{\varphi'} (\ssigma', \srho')$.
        \end{enumerate}
    \end{itemize}
    For the additional condition $P' = P\cup\{(s, p(s))\}$, we only need consider $S \equiv \qmeasure{q}{c}$. We have
    \[\scfg{\qmeasure{q}{c}}{\ssigma}{\srho}{P}{\varphi} \to \scfg{\terminate}{\ssigma[s/c]}{\srho'(s)}{P\cup\{(s,p(s))\}}{\varphi}, \]
    where $(s,p(s),\srho'(s)) = m(q,\srho) = \left(s, \tr\bigl(\ketbra[q]{s}{s}\srho\bigr), \frac{\ketbra[q]{s}{s}\srho\ketbra[q]{s}{s}}{\tr(\ketbra[q]{s}{s}\srho)}\right)$, then \[\ssigma' = \ssigma[s/c], \quad \srho' = \frac{\ketbra[q]{s}{s}\srho\ketbra[q]{s}{s}}{\tr(\ketbra[q]{s}{s}\srho)}, \quad \varphi' = \varphi, \quad p(s) = \tr\bigl(\ketbra[q]{s}{s}\srho\bigr).\]
    We divide the discussion into two cases like \cref{lem:m}:
    \begin{enumerate}
        \item $\cfg{\qmeasure{q}{c}}{\sigma}{\rho} \overto{\tr(\ketbra[q]{0}{0}\rho)} \cfg{\terminate}{\sigma[0/c]}{\frac{\ketbra[q]{0}{0}\rho\ketbra[q]{0}{0}}{\tr(\ketbra[q]{0}{0}\rho)}}$. In this case, we have 
        \[\sigma' = \sigma[0/c],\quad \rho' = \frac{\ketbra[q]{0}{0}\rho\ketbra[q]{0}{0}}{\tr(\ketbra[q]{0}{0}\rho)}, \quad p = \tr(\ketbra[q]{0}{0}\rho).\]
        Let $V' = V[s\mapsto 0]$, we have $V'(\ssigma) = V(\ssigma) = \sigma, V'(\srho) = V(\srho) = \rho, V'(\varphi) = V(\varphi) = \<true>$ since $s$ is a newly introduced symbol that does not appear in $\ssigma, \srho$ and $\varphi$, then
        \begin{align*}
            V'(\sigma') &= V'(\ssigma[s/c]) = V'(\ssigma)[0/c] = \sigma[0/c] = \sigma', \\
            V'(\srho') &=  V'\left(\frac{\ketbra[q]{s}{s}\srho\ketbra[q]{s}{s}}{\tr(\ketbra[q]{s}{s}\srho)}\right) = \frac{\ketbra[q]{0}{0}V'(\srho)\ketbra[q]{0}{0}}{\tr(\ketbra[q]{0}{0}V'(\srho))} = \frac{\ketbra[q]{0}{0}\rho\ketbra[q]{0}{0}}{\tr(\ketbra[q]{0}{0}\rho)} = \rho',\\
            V'(\varphi') &= V'(\varphi) = \<true>,\\
            V'(p(s)) &= V'\left(\tr\bigl(\ketbra[q]{s}{s}\srho\bigr)\right) = \tr\bigl(\ketbra[q]{0}{0}V'(\srho)\bigr) = \tr\bigl(\ketbra[q]{0}{0}\rho\bigr) = p.
        \end{align*}
        \item $\cfg{\qmeasure{q}{c}}{\sigma}{\rho} \overto{\tr(\ketbra[q]{1}{1}\rho)} \cfg{\terminate}{\sigma[1/c]}{\frac{\ketbra[q]{1}{1}\rho\ketbra[q]{1}{1}}{\tr(\ketbra[q]{1}{1}\rho)}}$. The same as (1), we only need to take $V' = V[s\mapsto 1]$.
    \end{enumerate}
\end{proof}

\section{Omitted Proofs in Section~\ref{sec:symbolic_stabilizer}}\label{app:adeqacy_prf}

\subsection{Correctness for \cref{def:qo_sss}}
We explain the correctness of \cref{def:qo_sss} according to the stabilizer formalism~\cite[Chapter 10.5]{nielsen2010quantum}.
Given a symbolic stabilizer state $$\srho = \langle (-1)^{f_1(s_1,\ldots,s_m)}P_1,\ldots,(-1)^{f_n(s_1,\ldots,s_m)}P_n\rangle$$ Then: 
\begin{itemize}
    \item The unitary transformation by a Clifford gate $V$ on  qubits $\bar{q}$ changes the stabilizer to
    \[\langle (-1)^{f_1(s_1,\ldots,s_m)}V_{\bar{q}}P_1V_{\bar{q}}^{\dagger},\ldots,(-1)^{f_n(s_1,\ldots,s_m)}V_{\bar{q}}P_nV_{\bar{q}}^{\dagger}\rangle\]
    as mentioned in preliminary (\S\ref{sec:qec&stabilizer}). Thus, the unitary transformation function defined for $\srho$ is
    \[ut(V, \bar{q}, \srho) = \langle (-1)^{f_1(s_1,\ldots,s_m)}V_{\bar{q}}P_1V_{\bar{q}}^{\dagger},\ldots,(-1)^{f_n(s_1,\ldots,s_m)}V_{\bar{q}}P_nV_{\bar{q}}^{\dagger}\rangle\]
    \item The computational basis measurement on qubit $q$, which corresponds  to the observable $Z_q$, is defined in the following two cases:
    \begin{enumerate}[leftmargin=6mm]
        \item $Z_q$ commutes with all $P_j$. In this case, there exist a Boolean value $b$ and a list of indexes $1\leq j_1,j_2,\ldots,j_k\leq n$ with $1\leq k \leq n$ such that $(-1)^bZ_q = P_{j_1}P_{j_2}\cdots P_{j_k}$~\cite[Chapter 10.5.3]{nielsen2010quantum}. Then, we have that for any $b_1,b_2,\ldots, b_m$,
        \[ (-1)^{b\oplus f_{j_1}(b_1,\ldots,b_m)\oplus\cdots\oplus f_{j_k}(b_1,\ldots,b_m)}Z_q = \bigl((-1)^{f_{j_1}(b_1,\ldots,b_m)}P_{j_1}\bigr)\cdots\bigl((-1)^{f_{j_k}(b_1,\ldots,b_m)}P_{j_k}\bigr), \]
        which means $(-1)^{b\oplus f_{j_1}(b_1,\ldots,b_m)\oplus\cdots\oplus f_{j_k}(b_1,\ldots,b_m)}Z_q$ belongs to the stabilizer \[\cS_{b_1,b_2,\ldots,b_m} = \langle (-1)^{f_1(b_1,\ldots,b_m)}P_1,\ldots,(-1)^{f_n(b_1,\ldots,b_m)}P_n\rangle.\]
        Let $\ket{\psi(b_1,b_2,\ldots,b_m)}$ be a stabilizer state of $\cS_{b_1,b_2,\ldots,b_m}$, we have
        \[\bigl((-1)^{b\oplus f_{j_1}(b_1,\ldots,b_m)\oplus\cdots\oplus f_{j_k}(b_1,\ldots,b_m)}Z_q\bigr) \ket{\psi(b_1,b_2,\ldots,b_m)} = \ket{\psi(b_1,b_2,\ldots,b_m)}.\]
        Then,
        \[ Z_q \ket{\psi(b_1,b_2,\ldots,b_m)} = \bigl((-1)^{b\oplus f_{j_1}(b_1,\ldots,b_m)\oplus\cdots\oplus f_{j_k}(b_1,\ldots,b_m)}\bigr) \ket{\psi(b_1,b_2,\ldots,b_m)},\]
        which means $\ket{\psi(b_1,b_2,\ldots,b_m)}$ is an eigenvector of $Z_q$.
        The computational basis measurement on qubit $q$ doesn't change the state and implies a determinate measurement outcome $b\oplus f_{j_1}(b_1,\ldots,b_m)\oplus\cdots\oplus f_{j_k}(b_1,\ldots,b_m)$.
        Therefore, in this case, the measurement function $m$ for $\srho$ with computational basis measurement on qubit $q$ is defined as
        \[m(q, \srho) = \bigl(s, \delta_{s, b\oplus f_{j_1}(s_1,\ldots,s_m)\oplus\cdots\oplus f_{j_k}(s_1,\ldots,s_m)}, \srho\bigr)\]
        with $s$ being a newly introduced Boolean symbol for measurement outcome and $\delta_{s, e}$ being $1$ or $0$ if $s = e$ or $s\neq e$, respectively.
        Since the measurement outcome is determinate, we can also omit the introduced symbols by defining
        \[m(q, \srho) = (b\oplus f_{j_1}(s_1,\ldots,s_m)\oplus\cdots\oplus f_{j_k}(s_1,\ldots,s_m), \{\}, \srho).\]
        \item $Z_q$ anti-commutes with one or more of $P_j$.
        In this case, without loss of generality, we can assume $P_1$ anti-commutes with $Z_q$ and $P_2,\ldots,P_n$ commute with $Z_q$\footnote{If there is other $P_j, j\geq 2$ anti-commutes with $Z_q$, we can replace the $(-1)^{f_j(s_1,\ldots,s_m)}P_j$ by $(-1)^{f_1(s_1,\ldots,s_m)+f_j(s_1,\ldots,s_m)}P_1P_j$ in the generating set of $\srho$, and it will result in the same $\srho$. Then $Z_q$ only commute with $P_1P_j$.}.
        For any concrete values $b_1,\ldots,b_m$ of symbols $s_1,\ldots,s_m$, let $\ket{\psi(b_1,\ldots,b_m)}$ be a stabilizer state of 
        \[\langle (-1)^{f_1(b_1,\ldots,b_m)}P_1,\ldots,(-1)^{f_n(b_1,\ldots,b_m)}P_n\rangle.\]
        According to~\cite[Chapter 10.5.3]{nielsen2010quantum},
        the computational basis measurement on qubit $q$ will produce measurement outcomes $b = 0, 1$ with the same probability; and the measured state will become the stabilizer state of stabilizer
        \[\langle (-1)^{b}Z_q, (-1)^{f_2(b_1,\ldots,b_m)}P_2,\ldots,(-1)^{f_n(b_1,\ldots,b_m)}P_n\rangle.\]
        Therefore, in this case, the measurement function $m$ for $\srho$ with computational basis measurement on qubit $q$ is defined as
        \[m(q, \srho) = \left(s, \frac{1}{2}, \langle (-1)^{s}Z_q, (-1)^{f_2(s_1,\ldots,s_m)}P_2,\ldots,(-1)^{f_n(s_1,\ldots,s_m)}P_n\rangle\right),\]
        where $s$ is a newly introduced Boolean symbol for measurement outcome.
    \end{enumerate}
\end{itemize}

\subsection{Proof of \cref{lem:adequacy}}~

\adequacylem*
\begin{proof}
    Consider a fixed sequence of transitions
    \begin{equation}\label{eq:fixed}
        \cfg{S}{\sigma}{\rho} \overto{p_1} \cfg{S_1}{\sigma_1}{\rho_1} \overto{p_2} \cdots \overto{p_{n-1}} \cfg{S_{n-1}}{\sigma_{n-1}}{\rho_{n-1}} \overto{p_n} \cfg{S_{n}}{\sigma_{n}}{\rho_{n}} = \cfg{\terminate}{\sigma'}{\rho'}
    \end{equation}
    such that $n\geq 0$ and $p = \prod_{j=1}^np_j$.
    According to operational semantics in \cref{fig:operational}, we can see that for any $1\leq j\leq n$, there is a matrix $E_j$, the form of which is one of $\Id, U_{\bar{q}}, \ketbra[q]{0}{0}$ and $\ketbra[q]{1}{1}$, such that
    \[p_j\rho_j = E_j\rho_{j-1}E_{j}^{\dagger}.\]
    Let $E = E_nE_{n-1}\cdots E_{2}E_1$, we have
    \[ p\rho' = p\rho_n = E\rho E^{\dagger}\]
    For any $\rho = \ketbra[L]{x_1,x_2,\ldots,x_k}{x_1,x_2,\ldots,x_k}$, we divide the discussion into two cases:
    \begin{itemize}
        \item If $p> 0$, by \cref{transition-stab}, we have $\rho' = \rho = \ketbra[L]{x_1,x_2,\ldots,x_k}{x_1,x_2,\ldots,x_k}$. Then, 
        \[ p\ketbra[L]{x_1,x_2,\ldots,x_k}{x_1,x_2,\ldots,x_k} =  E\ketbra[L]{x_1,x_2,\ldots,x_k}{x_1,x_2,\ldots,x_k} E^{\dagger}\]
        which implies
        \[E\ket{x_1,x_2,\ldots,x_k}_L = c_{E,x_1,x_2,\ldots,x_k}\ket{x_1,x_2,\ldots,x_k}_L\]
        with $c_{E,x_1,x_2,\ldots,x_k}$ being a complex number.
        \item If $p = 0$, we have $E\ketbra[L]{x_1,x_2,\ldots,x_k}{x_1,x_2,\ldots,x_k} E^{\dagger} = 0$, then
        \[E\ket{x_1,x_2,\ldots,x_k}_L = c_{E,x_1,x_2,\ldots,x_k} \ket{x_1,x_2,\ldots,x_k}_L\]
        with $c_{E,x_1,x_2,\ldots,x_k} = 0$.
    \end{itemize}
    Therefore, we have that for any $\ket{x_1,x_2,\ldots,x_k}_L$, there is a complex number $c_{E,x_1,x_2,\ldots,x_k}$ such that
    \[E\ket{x_1,x_2,\ldots,x_k}_L = c_{E,x_1,x_2,\ldots,x_k} \ket{x_1,x_2,\ldots,x_k}_L.\]
    Similarly, for any $\bar{H}\ket{x_1,x_2,\ldots,x_k}_L$, there is a complex number $d_{E,x_1,x_2,\ldots,x_k}$ such that
    \[E\bar{H}\ket{x_1,x_2,\ldots,x_k}_L = d_{E,x_1,x_2,\ldots,x_k} \bar{H}\ket{x_1,x_2,\ldots,x_k}_L.\]
    Then, for any $x_1,x_2,\ldots,x_k\in\{0,1\}^k$, consider
    {\allowdisplaybreaks \begin{align*}
        &\sum_{y_1,y_2,\ldots,y_k\in\{0,1\}^k}\frac{1}{\sqrt{2^k}}(-1)^{x_1y_1+x_2y_2+\cdots+x_ky_k}d_{E,x_1,x_2,\ldots,x_k}\ket{y_1,y_2,\ldots,y_k}_{L} \\\
        ={}& d_{E,x_1,x_2,\ldots,x_k}\sum_{y_1,y_2,\ldots,y_k\in\{0,1\}^k}\frac{1}{\sqrt{2^k}}(-1)^{x_1y_1+x_2y_2+\cdots+x_ky_k}\ket{y_1,y_2,\ldots,y_k}_{L} \\
        ={}& d_{E,x_1,x_2,\ldots,x_k} \bar{H}\ket{x_1,x_2,\ldots,x_k}_L \\
        ={}& E\bar{H}\ket{x_1,x_2,\ldots,x_k}_L \\
        ={}& E\sum_{y_1,y_2,\ldots,y_k\in\{0,1\}^k}\frac{1}{\sqrt{2^k}}(-1)^{x_1y_1+x_2y_2+\cdots+x_ky_k}\ket{y_1,y_2,\ldots,y_k}_{L} \\
        ={}& \sum_{y_1,y_2,\ldots,y_k\in\{0,1\}^k}\frac{1}{\sqrt{2^k}}(-1)^{x_1y_1+x_2y_2+\cdots+x_ky_k}E\ket{y_1,y_2,\ldots,y_k}_{L} \\
        ={}& \sum_{y_1,y_2,\ldots,y_k\in\{0,1\}^k}\frac{1}{\sqrt{2^k}}(-1)^{x_1y_1+x_2y_2+\cdots+x_ky_k}c_{E,y_1,y_2,\ldots,y_k}\ket{y_1,y_2,\ldots,y_k}_{L}.
    \end{align*}}
    Since $\{\ket{y_1,y_2,\ldots,y_k}_{L}|y_1,y_2,\ldots,y_k\in\{0,1\}^k\}$ forms a complete orthonormal basis of the code space, we have that any $y_1,y_2,\ldots,y_k \in \{0,1\}^k$,
    \[c_{E,y_1,y_2,\ldots,y_k} = d_{E,x_1,x_2,\ldots,x_k}.\]
    Thus, there is a complex number $c$ such that for any $x_1,x_2,\ldots,x_k \in \{0,1\}^k$,
    \[c_{E,x_1,x_2,\ldots,x_k} = d_{E,x_1,x_2,\ldots,x_k} = c.\]
    Consider any pure state $\ket{\psi}$ in the code space, it admits a superposition of logical computational basis: 
    \[\ket{\psi} = \sum_{x_1,x_2,\ldots,x_k\in\{0,1\}^k}\alpha_{x_1,x_2,\ldots,x_k}\ket{x_1,x_2,\ldots,x_k}_L.\]
    Then,
    \begin{align*}
        E\ket{\psi} &= E \sum_{x_1,x_2,\ldots,x_k\in\{0,1\}^k}\alpha_{x_1,x_2,\ldots,x_k}\ket{x_1,x_2,\ldots,x_k}_L \\
        &=\sum_{x_1,x_2,\ldots,x_k\in\{0,1\}^k}\alpha_{x_1,x_2,\ldots,x_k}E\ket{x_1,x_2,\ldots,x_k}_L \\
        &= \sum_{x_1,x_2,\ldots,x_k\in\{0,1\}^k}\alpha_{x_1,x_2,\ldots,x_k}c\ket{x_1,x_2,\ldots,x_k}_L \\
        &=c \sum_{x_1,x_2,\ldots,x_k\in\{0,1\}^k}\alpha_{x_1,x_2,\ldots,x_k}\ket{x_1,x_2,\ldots,x_k}_L \\
        &= c\ket{\psi} 
    \end{align*}
    Now, consider any quantum state $\rho$ in the code space, it admits a spectral decomposition
    \[\rho = \sum_{\lambda}\lambda\ketbra{\psi_{\lambda}}{\psi_{\lambda}}\]
    with $\ket{\psi_{\lambda}}$ being a pure state in code space, and the fixed sequence of transitions of \cref{eq:fixed}, we have
    \[p\rho' = E\rho E_{\dagger} = E \left(\sum_{\lambda}\lambda\ketbra{\psi_{\lambda}}{\psi_{\lambda}}\right) E^{\dagger} = \sum_{\lambda}\lambda E\ketbra{\psi_{\lambda}}{\psi_{\lambda}}E^{\dagger} = \sum_{\lambda}\lambda cc^{*}\ketbra{\psi_{\lambda}}{\psi_{\lambda}} = cc^{*} \rho\]
    If $p > 0$, then $p = p\tr(\rho') = \tr(p\rho') = \tr(cc^{*}\rho) = cc^{*}\tr(\rho) = cc^{*}$. Therefore,
    \[\rho' = \frac{cc^{*}}{p}\rho = \rho.\]
\end{proof}

\subsection{Proof of \cref{thm:adequacy}}~

\adequacythm*
\begin{proof}
    Based on the above context in this section, we choose:
    \begin{enumerate} \item $\srho_1$ be the symbolic stabilizer state $\langle P_1,\ldots,P_{n-k},(-1)^{s_1}L_1,\ldots,(-1)^{s_k}L_k\rangle$; and \item $\srho_2$ be the symbolic stabilizer state $\langle \bar{H}P_1\bar{H},\ldots,\bar{H}P_{n-k}\bar{H},(-1)^{s_1}\bar{H}L_1\bar{H},\ldots,(-1)^{s_k}\bar{H}L_k\bar{H}\rangle$ with $s_j$ symbols over Boolean values.
    \end{enumerate}
    For any $\rho \in \mathcal{T}$ in \cref{lem:adequacy},
    without loss of generality, we can assume that \[\rho = \ketbra[L]{x_1,s_2,\ldots,x_k}{x_1,s_2,\ldots,x_k}.\]
    Let $V_1 = V[s_1\mapsto x_1,s_2\mapsto x_2,\ldots, s_k\mapsto x_k]$, we have
    $V_1(\srho_1) = \rho$ and $V_1(\ssigma) = V(\ssigma)$ since $\ssigma$ does not contains $s_j, 1\leq j\leq k$. Therefore, by \cref{def:satisfaction}, we have
    \begin{equation}\label{eq:s1}
        (V(\ssigma), \rho) \models (\ssigma, \srho_1).
    \end{equation}

    For any sequence of (operational semantics) transitions
    \begin{equation*}\label{eq:fixed2}
        \cfg{S}{V(\ssigma)}{\rho} \overto{p_1} \cfg{S_1}{\sigma_1}{\rho_1} \overto{p_2} \cdots \overto{p_{n-1}} \cfg{S_{n-1}}{\sigma_{n-1}}{\rho_{n-1}} \overto{p_n} \cfg{\terminate}{\sigma'}{\rho'}
    \end{equation*}
    such that $n\geq 0$ and $p = \prod_{j=1}^np_j > 0$, we have a corresponding sequence of (symbolic execution) transitions:
    \[\scfg{S}{\ssigma}{\srho_1}{\emptyset}{\<true>} \to \scfg{S_1}{\ssigma_1}{\srho_{1}^{1}}{P_1}{\varphi_1} \to \cdots \to \scfg{S_{n-1}}{\ssigma_{n-1}}{\srho_{1}^{n-1}}{P_{n-1}}{\varphi_{n-1}} \to \scfg{\terminate}{\ssigma'}{\srho'}{P'}{\varphi'}\]
    With the proof of \cref{thm:soundess}, \cref{eq:s1} and induction through $n$, we can obtain a valuation $V_2$ such that 
    \[
        \begin{array}{lll}
            V_2(\ssigma) = V(\sigma), & V_2(\srho_1) = \rho, \\
            V_2(\ssigma_1) = \sigma_1, & V_2(\srho_1^1) = \rho_1, & V_2(\varphi_1) = \<true>, \\
            \quad \vdots &\cdots \\
            V_2(\ssigma') = \sigma', & V_2(\srho') = \rho', & V_2(\varphi') = \<true>
        \end{array}
    \]
    Moreover, the sequence of (symbolic execution)
    transitions imply
    \[\scfg{S}{\ssigma}{\srho_1}{\emptyset}{\<true>} \to^{*} \scfg{\terminate}{\ssigma'}{\srho'}{P'}{\varphi'},\]
    then,
    \begin{equation}
        \label{eq:s3}
        \varphi' \models\srho' = \srho_1
    \end{equation}
    Since $V_2(\varphi') = \<true>$, we have $V_2(\srho') = V_2(\srho_1)$, then $\rho' = V_2(\srho') = V_2(\srho_1) = \rho$.
    Therefore, we have proved that for any quantum state $\rho\in \mathcal{T}$, 
    \[\cfg{S}{V(\ssigma)}{\rho} \overto{p}^{*} \cfg{\terminate}{\sigma'}{\rho'} \text{ with $p>0$ implies $\rho' = \rho$}.\]
    Then, by \cref{lem:adequacy}, it holds for any quantum state $\rho$ in the code space.
\end{proof}

\end{document}
\endinput